\documentclass[pdflatex,sn-mathphys-num]{sn-jnl}

\usepackage{graphicx}%
\usepackage{multirow}%
\usepackage{amsmath,amssymb,amsfonts}%
\usepackage{amsthm}%
\usepackage{mathrsfs}%
\usepackage[title]{appendix}%
\usepackage{xcolor}%
\usepackage{textcomp}%
\usepackage{manyfoot}%
\usepackage{booktabs}%
\usepackage{algorithm}%
\usepackage{algorithmicx}%
\usepackage{algpseudocode}%
\usepackage{listings}%
\usepackage[capitalize]{cleveref}
\RequirePackage[margin=1.15in]{geometry}

\usepackage{comment}

\theoremstyle{thmstyleone}%
\newtheorem{theorem}{Theorem}
\newtheorem{lemma}{Lemma}
\newtheorem{observation}{Observation}
\makeatletter
\newtheorem*{rep@theorem}{\rep@title}
\newcommand{\newreptheorem}[2]{%
\newenvironment{rep#1}[1]{%
 \def\rep@title{#2 \ref{##1}}%
 \begin{rep@theorem}}%
 {\end{rep@theorem}}}
\makeatother

\newreptheorem{theorem}{Theorem}
\newreptheorem{lemma}{Lemma}

\theoremstyle{definition}
\newtheorem{inline-table}{Table}
\crefname{inline-table}{Table}{Tables}
\Crefname{inline-table}{Table}{Tables}

\theoremstyle{thmstyletwo}%

\theoremstyle{thmstylethree}%
\newtheorem{definition}{Definition}%

\newcommand{\Le}{\left}
\newcommand{\Ri}{\right}

\newcommand{\lambdabar}{\overline{\lambda}}
\newcommand{\E}{\mathbb{E}}

\newcommand{\Q}{\hat{\boldsymbol{q}}}
\newcommand{\Qi}{\hat{q}}
\newcommand{\p}{\mathbb{P}}
\newcommand{\bpi}{\boldsymbol{\pi}}
\newcommand{\idle}{\text{idle}}
\newcommand{\piidle}{\bpi_\idle}

\newcommand{\indic}{\mathbb{I}}

\begin{document}

\title[Article Title]{Tandem Queues with Hybrid Abandonment in Heavy Traffic}


\author*[1]{\fnm{Cameron} \sur{Curtis}\orcid{0009-0007-2929-5497}}\email{cameroncurtis2028@u.northwestern.edu}

\author[1]{\fnm{Izzy} \sur{Grosof}\orcid{0000-0001-6205-8652}}\email{izzy.grosof@northwestern.edu}

\author[1]{\fnm{Seyed} \sur{Iravani}\orcid{0000-0003-2172-1205
}}\email{s-iravani@northwestern.edu}

\affil*[1]{\orgdiv{Department of Industrial Engineering and Management Sciences}, \orgname{Northwestern University}, \orgaddress{\street{2145 Sheridan Rd}, \city{Evanston}, \postcode{60208}, \state{Il}, \country{US}}}


\abstract{    Hybrid-abandonment systems, where jobs abandon from certain stages of service but not others, occur frequently in medical, service, and computing systems. Abandonment affects the heavy-traffic behavior of both the stages with and without abandonment. In contrast to existing heavy-traffic works that focus on cases where there is abandonment at either all stages or no stages, we focus on the heavy-traffic behavior for these hybrid-abandonment systems.

    We provide the first heavy-traffic characterization of the second queue length in a two-stage tandem queue with abandonment only in the first stage. We characterize the scaled mean of the second stage queue length and prove that the scaled second queue length is asymptotically exponentially distributed and asymptotically independent of the first stage queue length.

    We derive our results by analyzing Markovian arrival process queues with countably infinite phases. We do this by extending techniques used to analyze queues with Markov-modulated arrivals to the broader class of infinite Markovian arrival process queues. We prove asymptotic results for a more general infinite phase process condition, an infinite birth-death phase process, and the hybrid-abandonment tandem case.

    We also empirically demonstrate that our heavy-traffic approximations are good approximations across various system settings.}

\keywords{Queuing, Moment Generating Function, Markovian Arrival Process, Abandonment, Heavy Traffic, Asymptotics}



\maketitle

\section{Introduction}\label{sec1}

\label{Sec1}

Existing literature on tandem queueing models, where jobs receive service at multiple stages in series, predominantly focus on systems where jobs can abandon from either no stages or all stages. These models are often a worthwhile proxy for certain real-world systems. This all-or-nothing approach to abandonment, though, leaves out systems where we observe abandonment in some stages but not in others.

In many real-world systems, the feasibility of abandoning may change significantly as a job moves through the system. We call systems like these with abandonment at some stages but not others \emph{hybrid-abandonment} systems. We specifically focus on tandem systems with abandonment in the first stage but not the second stage as seen in \cref{fig:HADiag}. In a service context, an example of this structure is a coffee shop. While a customer waits in line, they can abandon but once their order is handed to the barista, abandonment is no longer possible. As another example, a prospective surgical patient on a waitlist might abandon to seek a different treatment. However, after surgery the patient cannot abandon from the waitlist for a follow-up appointment. In a manufacturing or other industrial example, this same model represents a raw material that can be repurposed until it undergoes some initial processing (cutting, drilling, etc.) and is then only fit for a particular use.
In computing, a payment processing attempt may be abandoned before money moves between accounts, but once money leaves the payer's account, it must be placed into the recipient's account.
\begin{figure}
    \centering
    \includegraphics[width=0.7\linewidth]{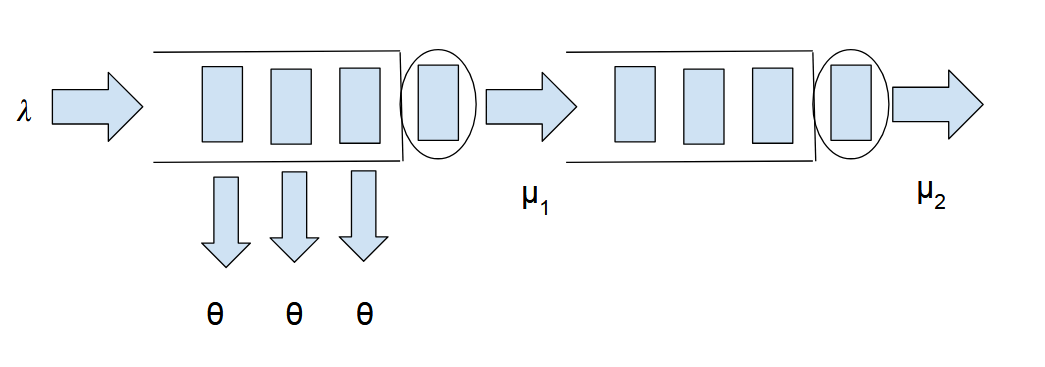}
    \caption{A hybrid-abandonment tandem queue with abandonment in the first stage}
    \label{fig:HADiag}
\end{figure}

We study hybrid-abandonment queues in heavy traffic. The heavy-traffic regime considers a sequence of queues whose load (the fraction of time the second server is busy) approaches $1$. This heavy-traffic limit is a worthwhile and valuable theoretical result on its own but also serves as a valuable approximation for the system's behavior in high-load settings (as we demonstrate in \cref{Sec6}). Existing techniques to analyze tandem queues in heavy traffic are applicable to systems with either abandonment at no stages (see \citet{jackson1957networks}) or abandonment at all stages (see \citet{reed2004diffusion}) and don't easily generalize to hybrid-abandonment systems. 
We characterize the asymptotic distribution of both queue lengths in the heavy-traffic limit. We show that the first queue length and the scaled second queue length are asymptotically independent and that the scaled second queue length is asymptotically exponentially distributed and we compute its mean (\cref{Thm3}).

We also generalize our results to provide sufficient conditions for analogous asymptotic independence and asymptotic exponentiality results for both Markovian arrival processes with countably infinite phases (\cref{Thm1}) and Markovian arrival processes with birth-death phase processes (\cref{Thm2}).

\subsection{Structure}

In \cref{Sec2}, we provide an overview of prior work relating to tandem queues, queues with Markovian-arrival processes, and heavy-traffic results for these systems and demonstrate the need for new methods in analyzing systems such as our hybrid-abandonment systems.

Next in \cref{Sec3} we define the necessary tools for our analysis: the infinite-phase Markovian arrival process (IMAP) queue and heavy-traffic families of IMAP queues. We also formally define the sufficient conditions we use in our analysis.

In \cref{Sec4}, having defined all the necessary concepts in \cref{Sec3}, we present our main theorems. We present our theorems from general to specific, beginning with our heavy-traffic analysis for general families of IMAP queues before narrowing to an analogous result for families of IMAP queues with birth-death phase processes. Finally, we present the narrowest result where we specifically consider hybrid-abandonment tandem systems.

In \cref{Sec5} we provide proofs of our main theorems and the necessary lemmas.

Lastly, in \cref{Sec6} we demonstrate via simulation that our heavy-traffic approximation is useful outside of the asymptotic regime.

\section{Prior Work}
\label{Sec2}
The prior work we discuss can be broadly divided into three categories: works considering tandem queueing models with abandonment (\cref{sec:TandQueueAb}), works considering systems with Markovian arrival processes (or MAPs) in heavy-traffic (\cref{sec:MAPHT}), and works considering matrix-analytic methods (\cref{sec:MatAnMethods}). 
\subsection{Tandem Queues with abandonment}
\label{sec:TandQueueAb}
While queueing networks with abandonment have been fairly well studied, far fewer works have considered hybrid-abandonment systems. This is especially true for systems with abandonment at earlier stages but not later stages.

Many of the works that have considered tandem queues with abandonment have done so in the dynamic control context. Zayas-Cab\'an and collaborators studied tandem queues in medical settings where patients proceed through two treatment stages. They considered dynamic control of servers in systems where customers can abandon in the second stage (as in \citet{zayas2019policies}) or in both stages (as in \citet{zayas2023scheduling,zayas2016dynamic}). \citet{andradottir2023optimal} studied a similar system with abandonment only from the second stage and a finite second stage buffer and characterized the optimal dynamic server assignment in this system. While these works consider dynamic control of tandem systems, we focus on asymptotic behavior of static systems.

\citet{reed2004diffusion} study a generalized Jackson network with abandonment at each station. They show that in heavy traffic such a system is well-approximated by a reflected Ornstein-Uhlenbeck process. They consider a different heavy traffic regime, however. While they consider a regime where arrival rates and service rates jointly approach infinity and the abandonment rate stays fixed, we consider a regime where the arrival rate rises to a critical level while the service rates and abandonment rates stay fixed.

\subsection{Markovian Arrival Processes in Heavy-Traffic}
\label{sec:MAPHT}
Queues with Markovian arrival processes (MAPs) and hybrid-abandonment tandem queues are conceptually similar and we draw on techniques designed for studying queues with Markovian arrival processes to study hybrid-abandonment tandem queues. In both cases, the arrivals to a single-server queue (the queue itself in the MAP case and the second stage queue in the tandem case) are determined by an independently evolving CTMC, the ``phase process" in the MAP case and the first queue in the tandem case. The primary differences are two-fold: First, MAP queues allow arrivals to the queue to occur either while the process in a state (``in-state arrivals") or alongside a transition of the phase process (``between-state arrivals"). In contrast, arrivals to the second stage of the tandem system only occur alongside transitions in the first stage. Second, the tandem queues are usually allowed to have infinite buffers while MAP queues (with extremely limited exceptions) have only been considered in the case of finitely many phase states.

\citet{choudhury1994heavy} study the tail behavior of the work in a queue with batch MAP arrivals, a generalization of MAPs that permits multiple arrivals at the same time, and general service times in heavy traffic. They show that in heavy traffic, the stationary distribution of the work in the queue has an exponential tail. We show that with an infinite-phase MAP and exponential service times, the scaled number of jobs in the queue is exponentially distributed rather than the work simply having an exponential tail.

Beyond \citet{choudhury1994heavy}, almost all of the work that considers MAP queues in heavy traffic specifically studies the special case of Markov-modulated arrival queues in heavy traffic. A Markov-modulated arrival queue is an MAP with only in-state arrivals, i.e., transitions of the phase process never cause arrivals. \citet{yechiali1971queuing} studied a system with two-level Markov-modulated arrival and service rates. In the two-level setting, they were able to derive the steady-state distribution and average queue length (up to a parameter being left as a root of a cubic equation). Their technique unfortunately doesn't generalize to systems with more than two modulation states.
\citet{burman1986asymptotic} provided the first heavy traffic result for queues with general finite-modulation-process Markov-modulated arrivals. They studied a system with Markov-modulated arrivals and general service times and provided a derivation of the mean queue length via a diffusion-scaling approach. They proved this result up to an interchange of limits assumption.

\citet{falin1999heavy} studied the same system and rigorously characterized the Laplace transform of the limiting heavy-traffic distribution. \citet{dimitrov2011single} extended the analysis of \citet{falin1999heavy} to a system with Markovian-service rates and arrival rates but still only considered the case of finite modulation processes. We extend  \citet{falin1999heavy} by generalizing their methods to MAPs with infinitely many phases.

\citet{mitrani2005approximate} derives asymptotically exact approximations for the average queue length in Markov-modulated arrival queues. Rather than using a transform method like \citet{falin1999heavy} and \citet{dimitrov2011single}, \citet{mitrani2005approximate} instead uses a spectral expansion-based technique to derive their approximations. As with \citet{falin1999heavy} and \citet{dimitrov2011single}, \citet{mitrani2005approximate} only considers Markov-modulated arrival queues with finitely many phases while we consider the broader class of MAPs with countably infinite phases.

\citet{thorsdottir2016markov} consider the heavy-traffic behavior of M/G/1 queues with Markov-modulated arrivals and show that the work in the system is asymptotically exponentially distributed. We focus on the number of jobs in queue rather than the workload and consider models with MAPs and countably infinite phases.

Following the works by \citet{falin1999heavy} and \citet{dimitrov2011single}, equivalent results were independently derived using the drift method as stepping stones to other results. \citet{mou2024heavy}, for example, analyzed input-switched Markovian models in heavy traffic. They independently reproved via the drift method that the expected scaled queue length, $\mathbb{E}[\Le(1-\rho\Ri)Q]$, converges in heavy traffic and expressed the limiting value as an infinite summation.

\citet{grosof2023reset} derived the waiting time in first-come-first-served multiserver-job models. As part of their analysis, they independently reproved expected queue length results for systems with Markov-modulated service rates via the drift method. We consider the more general case of MAP queues instead and the case of infinitely many phase states.

In work currently under submission, \citet{Hurtado-LangeDaniela2026MMJi} consider the heavy-traffic behavior of the join-the-shortest-queue load-balancing policy in queues with Markov-modulated arrivals, including arrival processes with countably infinite modulation states. While they assume sufficient conditions for their heavy-traffic results, we prove our heavy-traffic conditions for the class of hybrid-abandonment tandem systems.

\subsection{Matrix Analytic Methods}
\label{sec:MatAnMethods}
Matrix analytic methods can be used to numerically compute performance metrics for a variety of queueing systems via matrix computations \citep{neuts1978m, neuts1994matrix, latouche1999introduction, ramaswami1980n}. The methods were first used in \citet{neuts1978m} to compute queue length, busy period lengths, and other performance metrics. 
\citet{regterschot1986queue} used matrix analytic methods to find computationally tractable bounds and approximations for a number of performance metrics.

\citet{ciucu2018two} provided a more computationally efficient method to derive tight bounds on the transform of a system with Markov-modulated arrivals (among other systems).

\citet{wang2019tandem} studied tandem queues with abandonment at both stations as level-dependent quasi-birth-death processes. This allowed them to numerically compute performance metrics of such a system efficiently, but required the second queue have a finite buffer. Our work considers the case where the second queue has an infinite buffer and no abandonment.

Matrix analytic methods provide numerically efficient methods to bound performance metrics in a wide array of stochastic systems, but we focus on exact characterization of these metrics and, particularly, asymptotic analysis in the heavy-traffic limit rather than performance in any single system. Matrix analytic methods also tend to focus on systems with one-dimensionally infinite state spaces, such as Markov-modulated queues with finitely many modulation states, rather than systems with doubly-infinite state spaces, such as arrival processes with countably infinite state spaces as in our hybrid-abandonment system.

The few matrix-analytic works that do consider arrival processes with countably infinite state spaces do not characterize the system's heavy-traffic behavior.
By contrast, we characterize heavy-traffic behavior, which allows us to derive simple and accurate approximations for the queueing behavior at all loads, as we show in \cref{Sec6}.
\citet{ramaswami1996some} introduce an operator-theoretic method to characterize the rate matrix $\boldsymbol{R}$ and matrix $\boldsymbol{G}$ in level-dependent, infinite-phase quasi-birth-death processes, including MAP queues with countably infinite phases. Their techniques allow them to analyze particular infinite phase MAP queues but do not allow them to analyze heavy-traffic behavior or produce simple approximations. Other works that consider infinite phase MAP queues do so by building upon \citet{ramaswami1996some}. \citet{latouche2011queues} and \citet{kroese2004spectral}, for instance, consider when the behavior of a sequence of truncated finite-phase systems converge to the behavior of an untruncated infinite-phase system. \citet{lian2011departure} uses a similar framework to create an algorithm to compute inter-departure times and busy period of different stages in a tandem queue. In contrast to works like \citet{ramaswami1996some} that focus on numerically understanding the behavior of particular IMAP systems, we focus on the heavy-traffic behavior of tandem systems and IMAP queues.

\section{Model and Definition}
\label{Sec3}

Our ultimate goal is to derive structural insights into the behavior of the hybrid-abandonment system across various loads. To do so, we seek to characterize the heavy-traffic behavior of the system, which we will use in \cref{Sec6} to derive such insights. To prove our heavy-traffic results, we begin by defining a generalization of the hybrid-abandonment queue, the infinite-phase Markovian arrival process queue, or IMAP queue, in \cref{IMAP}. We show that the hybrid-abandonment queue is a special case of the infinite-phase Markovian arrival process queue in \cref{HA_IMAP}. Finally, we define the near-inverse matrix and $M$ function for an IMAP queue (\cref{Rdefn}) and the families of IMAP queues we consider in our analysis (\cref{IMAP_families}).

\subsection{The Infinite-phase Markovian Arrival Queue (IMAP) Queue}
\label{IMAP}

We begin by defining a finite-phase Markovian arrival process (or ``MAP") queue before distinguishing our infinite-phase Markovian arrival process queue. The arrivals to an MAP queue are controlled by a finite-state CTMC $Y$, the ``phase" of the arrival process, which evolves independently of the queue length process $Q$. While $Y$ is in state $i$, arrivals to $Q$ occur at rate $\lambda_i$. We will call these ``in-state" arrivals. When $Y$ transitions from state $i$ to state $j$, there is some probability $p_{ij}$ that an arrival to $Q$ is triggered. We will call these arrivals ``between-state" arrivals.
If $p_{ij} = 0$ for all $i,j$, the MAP queue is simply a Markov-modulated arrival queue. 

The infinite-phase Markovian arrival process (or ``IMAP") queue generalizes the MAP queue to allow $Y$ to have countably infinite states. Without loss of generality, we will assume that these states are $\mathbb{N}_0 = \{0, 1, 2, \dots\}$.

The phase process $Y(t)$ has a generator matrix $\boldsymbol{A} = [\alpha_{i,j}]_{i,j\in \mathbb{N}_0}$.
Note that throughout this paper, we use bold font to indicate matrices and vectors.
When $(Y(t), Q(t))$ has a unique steady-state distribution, we denote the steady-state probability of $\Le(Y, Q\Ri)$ as $\pi_{i,q} = \p\Le(Y=i,Q=q\Ri)$ and the steady-state distribution of $Y$ alone as $\boldsymbol{\pi}$ where $[\boldsymbol{\pi}]_i=\pi_i = \p\Le(Y=i\Ri)$. When $Y\Le(t\Ri)=i$, the instantaneous arrival rate is $\lambda_i+\sum_{j\ne i} p_{i,j}\alpha_{i,j}$ (the instantaneous rate of in-state arrivals plus the instantaneous rate of between-state arrivals). Thus, the long-run arrival rate to the queue is 
\begin{align*}
    \lambdabar = \E\Big[\lambda_Y+\sum_{j\ne Y} p_{Y,j} \alpha_{Y,j} \Big] = \sum_{i} \pi_i \Big(\lambda_Y+\sum_{j\ne i} p_{i,j} \alpha_{i,j}\Big).
\end{align*}
We often collect the in-state and between-state arrival rates into the $\boldsymbol{F}$ matrix where the elements of $\boldsymbol{F}$ are given by:
\begin{align*}
    [\boldsymbol{F}]_{i,j} = F_{i,j} = \begin{cases}
        \lambda_i, \quad &i=j\\
        p_{i,j} \alpha_{i,j}, \quad& i\ne j
    \end{cases}
\end{align*}
for $i,j\in \mathbb{N}_0$.

Most of our analysis of this system is done in terms of the system's $z$-transform. The $z$-transform of the queue length random variable, \[\overline{q}(z) = \E[z^Q],\]
captures the marginal distribution of our queue length random variable while the partial $z$-transforms, \begin{align*}
    \Qi(z;i) = \E[z^Q\indic{\{Y=i\}}],
\end{align*}
captures the dependence between the queue length random variable and the phase process.

Often in our analysis, it is useful to refer to $\piidle = [P(Y=i, Q=0)]_{i\in \mathbb{N}_0}$, the idleness vector, and $\varepsilon = P(Q=0)$, the system's total idleness. We also let $e^{T}$ be the ones vector, i.e., $e^T = [1, 1, \dots]^T$.

\subsection{Hybrid-Abandonment Tandem Queues}
\label{HA_IMAP}
We are particularly interested in the hybrid-abandonment systems we discussed in \cref{Sec1}. These are tandem queues where jobs enter the first stage of queueing and either abandon the first queue or receive service and enter the second queue as depicted in \cref{fig:HADiag}.
These are a subset of the IMAP queues we introduced above. To model a hybrid-abandonment queue as an IMAP queue we will treat the first queue length as the phase process $Y$ and the second queue length as the queue length process $Q$. When $n$ jobs are in the first queue, the first queue transitions upward with rate $\alpha_{n, n+1} = \lambda$ and downward with rate $\alpha_{n, n-1} =\Le(\mu_1+\Le(n-1\Ri)\theta\Ri)\indic\{n>0\}$. When the first queue transitions downward it triggers a between-state arrival to the second queue with probability $p_{n,n-1}=\frac{\mu_1}{\mu_1 + \Le(n-1\Ri)\theta}$. These downward between-state arrivals are the only arrivals to the second queue. When discussing the hybrid-abandonment systems we refer to the service rate of the second queue as $\mu_2$ rather than simply $\mu$ for clarity.

\subsection{Near-Inverse Matrix $\boldsymbol{R}$ and $M$ Function}
\label{Rdefn}
Often in our analysis, a quantity we are interested in appears right-multiplied by the generator matrix $\boldsymbol{A}$.
A way to ``undo" the effects of this multiplication and recover the quantity of interest would therefore be extremely useful.
However, because $\boldsymbol{A}$ is a generator matrix, it is not invertible.
Lacking a true inverse, we often utilize a \textit{near-inverse} of $\boldsymbol{A}$. 
Given a generator matrix $\boldsymbol{A}$ with steady state distribution $\bpi$ and state space $\mathbb{N}_0$, define the vector $\bpi_{\overline{0}} = [\pi_i]_{i>0}$.
We do so without loss of generality: we may choose any mapping from the states of $\boldsymbol{A}$ to $\mathbb{N}_0$, and state 0 may be chosen arbitrarily.
We then define a near-inverse matrix $\boldsymbol{R}$ to be a matrix that satisfies our near-inverse equation:    \[\boldsymbol{A}\boldsymbol{R}= \begin{bmatrix}
        0 & -\frac{1}{\pi_0} \bpi_{\overline{0}}\\
        0& I
    \end{bmatrix}=
    \begin{bmatrix}
        0 & -\frac{\pi_1}{\pi_0} & -\frac{\pi_2}{\pi_0} & \dots &-\frac{\pi_n}{\pi_0} &\dots\\
        0 & 1 & 0 &\dots &0&\dots\\
        0&0&1&\dots&0&\dots\\
        \vdots\\
        0&0&0&\dots &1&\dots\\
        \vdots
    \end{bmatrix}\]
    or, equivalently,
    \[\boldsymbol{A}\boldsymbol{R}x=x-\frac{x_0}{\pi_0}\bpi\]
    for any vector $x$.

In \cref{R_exst_lemma}, we construct $\boldsymbol{R}$ as the time-in-state matrix by letting $[\boldsymbol{R}]_{i,j} = r_{i,j}$ be the expected time in state $j$ between starting in state $i$ and reaching state $0$ for the first time and showing that it satisfies our near-inverse equation. 

We use the notation $\boldsymbol{R}$ for our near-inverse to match the notation used by \citet{falin1999heavy}.
However, it should be noted that our near-inverse matrix $\boldsymbol{R}$ is not the same as the rate matrix $\boldsymbol{R}$ that often appears in matrix-analytic works. While they share a symbol and have some superficial similarities (both are concerned with time spent in or visits to particular states before a certain event occurs), they are not meaningfully related. In the rate matrix $\boldsymbol{R}$, the element $[\boldsymbol{R}]_{i,j}$ is the expected number of visits to state $(m+1, j)$ between starting in state $(m, i)$ and first returning to queue length $m$ (which is independent of the particular queue length $m$) \citep{he2014fundamentals}.

Once we define a near-inverse matrix $\boldsymbol{R}$ for an IMAP queue, we can also define the $M$ function for the same system: 
\begin{align}
M(z)=1-\mu^{-1}\bpi\Le(-z \boldsymbol{F}+\mu \boldsymbol{I}\Ri)\boldsymbol{R}\Le(-z \boldsymbol{F}+\mu \boldsymbol{I}\Ri)e^T.
\end{align}


In families of queues with the necessary convergence conditions (including our original hybrid-abandonment queues as we demonstrate in \cref{Thm3Sup}), we prove that $M$'s asymptotic behavior is the same as the asymptotic scaled mean queue length. Thus, characterizing $M$ and $\boldsymbol{R}$ are key points in our analysis.

\subsection{Families of IMAP Queues}
\label{IMAP_families}
We frequently consider families of IMAP queues indexed by their idleness in our analysis. A family of IMAP queues indexed by idleness is a map from $\varepsilon$ in some $\mathcal{E}=(0, \varepsilon_{\text{max}})$ to IMAP queues such that the $\varepsilon^\text{th}$ queue has idleness $\varepsilon$. The heavy-traffic limit in the system is a limit of the IMAP queues as $\varepsilon\to 0$. 
We are specifically interested in families of IMAP queues that, like our hybrid-abandonment queues, have a particular notion of ``convergence" in the $\varepsilon\to 0$ limit, namely that the phase process and the queue process are well-behaved in this limit. Below we present the definition of our \emph{heavy-traffic conditions} for an IMAP family that we use throughout our analysis. We defer proving that our original hybrid-abandonment queues have these features to \cref{Thm3Sup}.
\begin{definition}[Heavy-Traffic Conditions]
\label{def:HT_cond}
    A family of IMAP queues parameterized by their idleness $\varepsilon$ for $\varepsilon\in \mathcal{E}$ that satisfies our \emph{heavy-traffic conditions} is a family of IMAP queues where there exist matrices $\boldsymbol{F}^{(0)}$ and $\boldsymbol{R}^{(0)}$, $\mu^{(0)}\in (0,\infty)$, and a distribution $\bpi^{(0)}$ such that:
    \begin{enumerate}
        \item 
        \begin{enumerate}
            \item For all $i\in \mathbb{N}_0$, $\lim_{\varepsilon\to 0}\pi_i^{(\varepsilon)}= \pi_i^{(0)}$.
            
            \item For all $i,j\in \mathbb{N}_0$, $\lim_{\varepsilon\to 0} r_{i,j}^{(\varepsilon)}= r_{i,j}^{(0)}$.

            \item For all $i,j\in \mathbb{N}_0$, $\lim_{\varepsilon\to 0} \boldsymbol{F}_{i,j}^{(\varepsilon)}= \boldsymbol{F}_{i,j}^{(0)}$.

            \item $\lim_{\varepsilon\to 0}\mu^{(\varepsilon)}= \mu^{(0)}$.
        \end{enumerate}


        \item \label{M_conv}Let $M^{(0)}=1-(\mu^{(0)})^{-1}\bpi^{(0)}\Le(-\boldsymbol{F}^{(0)}+\mu^{(0)} \boldsymbol{I}\Ri)\boldsymbol{R}^{(0)}\Le(-\boldsymbol{F}^{(0)}+\mu^{(0)} \boldsymbol{I}\Ri)e^T$. \\
        For all $z\in (0,1)$,
        $\lim_{\varepsilon\to 0}M(z^\varepsilon;\varepsilon) =M^{(0)}$.
        
        \item $Y^{(\varepsilon)}$ is ergodic for all $\varepsilon\in \mathcal{E}$.
    \end{enumerate}
\end{definition}

To aid in our analysis, we establish a few additional conditions for families of IMAP queues. These turn out to be useful sufficient conditions for families of IMAP queues to have well-defined heavy traffic asymptotic distributions.

\begin{definition}[Rate Bounded]
\label{def:rtbdd}
    A \emph{rate-bounded} family of IMAP queues is a family of IMAP queues where, there exist some $\hat{\lambda}>0, T_{\max}>0,$ and $\hat{\mu} > 0$ such that for all modulation states and all possible idleness $\varepsilon\in \mathcal{E}$,
    \begin{enumerate}
        \item The instantaneous arrival rate to the queue is upper bounded by $\hat{\lambda}$.

        \item The expected time until transition in any modulation state is upper bounded by $T_{\max}$.

        \item The service rate is lower bounded by $\hat{\mu}$.
    \end{enumerate}
\end{definition}

The \textit{rate-bounded} condition essentially exists to rule out a few particularly ill-behaved families of IMAP queues. A family being rate bounded guarantees that the phase process does not linger too long in any state or become explosive both for particular $\varepsilon$ values and in the $\varepsilon\to 0$ limit.

\begin{definition}[Birth-Death]
\label{def:bd}
    A \emph{birth-death} family of IMAP queues is a family of IMAP queues where $Y^{(\varepsilon)}$ is a birth-death process on $\mathbb{N}_0$ for all $\varepsilon\in \mathcal{E}$, i.e., for all $\varepsilon\in \mathcal{E}$, $Y^{(\varepsilon)}(t)\in \{0, 1, 2, \dots\}$ and from $Y^{(\varepsilon)} = i$, $Y^{(\varepsilon)}$ can only transition to $i-1$ or $i+1$. 
\end{definition}

A \textit{birth-death} family of IMAP queues is simply an extension of the concept of a birth-death phase process to cover all $\varepsilon\in \mathcal{E}$.

\begin{definition}[Uniformly Stable]
\label{def:unifstab}
    A \emph{uniformly stable} birth-death family of IMAP queues is a family of IMAP queues where there is some $N$, $a\in(0,1)$, and $b\in (0, \frac{1}{2})$ such that:
        \begin{enumerate}
            \item For all $\varepsilon\in \mathcal{E}$ and all $n\geq N$, $\frac{\alpha^{(\varepsilon)}_{n,n+1}}{\alpha^{(\varepsilon)}_{n+1,n}} \leq a$

            \item For all $\varepsilon\in \mathcal{E}$ and all $n\geq N$, $\frac{\alpha^{(\varepsilon)}_{n,n+1}}{\alpha^{(\varepsilon)}_{n,n-1}+\alpha^{(\varepsilon)}_{n,n+1}} \leq b$
        \end{enumerate}
\end{definition}

Intuitively, a family of  IMAP queues is \textit{uniformly stable} if the phase process has a downward drift for all $\varepsilon \in \mathcal{E}$ when $n$ is ``big enough".

Lastly, we have our desired characteristic: that a family of IMAP queues is asymptotically exponential and asymptotically independent.

\begin{definition}[Asymptotically Exponential and Independent]
\label{def:asymp_exp_ind}
    An \emph{asymptotically exponential and independent} family of IMAP queues with a heavy-traffic convergent phase process with mean $M$ is a family of IMAP queues with heavy-traffic convergent phase process where
    \begin{equation}    \label{eq:asympexp}
         \lim_{\varepsilon\to 0} \overline{q}(z^{\varepsilon};\varepsilon) = \frac{1}{1+M\ln z}
    \end{equation}
    and
    \begin{equation}
    \label{eq:asymp_ind}
        \lim_{\varepsilon\to 0} \hat{q}(z^\varepsilon;i,\varepsilon) = \frac{\pi^{(0)}_i}{1+M\ln z}
    \end{equation}
    for all $z\in (0,1)$ and all $i\in \mathbb{N}_0$.
\end{definition}

\cref{eq:asympexp} states that the $z$-transform of the random variable $\varepsilon Q$ approaches that of an exponential random variable with mean $M$. This is exactly equivalent to $\varepsilon Q$ converging in distribution to an exponential random variable with mean $M$. 

\cref{eq:asymp_ind} states that for any modulation state $i$, the queue length conditioned on the modulation state also is an exponential random variable with mean $M$. Thus, the queue length and phase process are independent.

The asymptotic exponential and independence properties are the focuses of our analysis. In each of \cref{Thm1,Thm2,Thm3} we will focus on showing that, under particular conditions, a family of IMAP queues has these asymptotic properties.

\section{Main Results}

\label{Sec4}

We characterize the asymptotic behavior of hybrid abandonment queues as the second queue's idleness approaches zero. We begin by analyzing a considerably broader class of models, infinite-phase Markovian arrival processes with ``light-tailed phase processes" (\cref{def:light-tailed}) and describing their behavior in the same heavy-traffic limit. This ``light-tailed" condition can be roughly described as a weighted average of $\boldsymbol{R}$ values having finite first and second moments.

Here we present statements of each of Theorems 1, 2, and 3 before presenting a full proof of each in \cref{Full_Thm1,Full_Thm2,Full_Thm3}, respectively.

\begin{theorem} \label{Thm1}
    If a family of IMAP queues with a heavy-traffic convergent phase process has a light-tailed phase process (\cref{def:light-tailed}, which will be defined in \cref{Full_Thm1}), then $\varepsilon Q^{(\varepsilon)}$ is asymptotically exponential with mean $M^{(0)}$ (where $M^{(0)}$ is defined in \cref{def:HT_cond}) and asymptotically independent from $Y^{(\varepsilon)}$ in the $\varepsilon\to 0$ limit. Equivalently,
    \begin{align*}
        \lim_{\varepsilon\to 0} \overline{q}(z;\varepsilon) &= \frac{1}{1+M^{(0)}\ln z}, \qquad
        \lim_{\varepsilon\to 0} \hat{q}(z;i,\varepsilon) = \frac{\pi^{(0)}_i}{1+M^{(0)}\ln z}.
    \end{align*}
\end{theorem}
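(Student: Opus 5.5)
My plan is to follow the transform method of \citet{falin1999heavy}, adapted to countably many phases. The statement is written with $z$, but by \cref{def:asymp_exp_ind} it should be read with $z^\varepsilon$ in place of $z$; this is the form I will prove.

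\textbf{Step 1: vector balance equation.} Write $\Q(z)=[\Qi(z;i)]_{i\in\mathbb{N}_0}$ as a row vector. Multiplying the global balance equations of $(Y,Q)$ by $z^q$ and summing over $q$ gives
\[
\Q(z)\Le(\boldsymbol{A}-\boldsymbol{F}+z\boldsymbol{F}+\mu(z^{-1}-1)\boldsymbol{I}\Ri)=\mu(z^{-1}-1)\piidle .
\]
Multiplying by $z$ turns this into
\[
\Q(z)\Le(z\boldsymbol{A}+(1-z)(-z\boldsymbol{F}+\mu\boldsymbol{I})\Ri)=\mu(1-z)\piidle .
\]
Since there are infinitely many phases, each interchange of summation has to be justified. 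I will use absolute summability of $\sum_j\Qi(z;j)|\alpha_{j,i}|$ for $z\in(0,1)$; this should follow from the light-tailed condition, or at least from $\bpi|\boldsymbol{A}|$ being finite.

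\textbf{Step 2: scalar equations.} I get two scalar relations from Step 1.
\begin{enumerate}
\item Right-multiply by $e^T$. Because $\boldsymbol{A}e^T=0$, this gives
\[
\Q(z)(-z\boldsymbol{F}+\mu\boldsymbol{I})e^T=\mu\,\varepsilon .
\]
\item Right-multiply by $\boldsymbol{R}(-z\boldsymbol{F}+\mu\boldsymbol{I})e^T$ and use the near-inverse identity $\boldsymbol{A}\boldsymbol{R}x=x-\frac{x_0}{\pi_0}\bpi$ from \cref{R_exst_lemma}. This gives $z\Q(z)x$ minus a correction term proportional to $x_0=[(-z\boldsymbol{F}+\mu\boldsymbol{I})e^T]_0$, where $x=(-z\boldsymbol{F}+\mu\boldsymbol{I})e^T$, plus $(1-z)\Q(z)(-z\boldsymbol{F}+\mu\boldsymbol{I})\boldsymbol{R}(-z\boldsymbol{F}+\mu\boldsymbol{I})e^T$.
\end{enumerate}

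\textbf{Step 3: asymptotic independence, done first.} Right-multiply Step 1 by $\boldsymbol{R}$ alone. This expresses $\Q(z)$ as $\frac{\Q(z)_0}{\pi_0}\bpi$ plus an error of order $O((1-z)/z)$ times $\Q(z)$ applied to bounded operators. Now substitute $z\mapsto z^\varepsilon$, so that $1-z^\varepsilon\sim-\varepsilon\ln z\to0$. Using the light-tailed condition to control $\bpi\,\boldsymbol{F}\boldsymbol{R}$ and $\bpi\boldsymbol{R}$ in weighted norms, I expect
\[
\Qi(z^\varepsilon;i)-\pi^{(\varepsilon)}_i\,\overline{q}(z^\varepsilon)\to0 .
\]
Together with the convergence $\pi_i^{(\varepsilon)}\to\pi_i^{(0)}$ from \cref{def:HT_cond}, this reduces \cref{eq:asymp_ind} to \cref{eq:asympexp}.

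\textbf{Step 4: closing the equation for $\overline{q}$.} Substitute $\Q(z)\approx\overline{q}(z)\bpi$ into the quadratic term $(1-z)\Q(z)(\cdots)\boldsymbol{R}(\cdots)e^T$ of Step 2. The coefficient that appears is exactly $\mu(1-M(z))$. Combining with relation 1 of Step 2, which says $\mu\varepsilon$ equals the leading term, and dividing by $\varepsilon$ should give
\[
\overline{q}(z^\varepsilon)\Le(1+M(z^\varepsilon;\varepsilon)\ln z\Ri)\;\to\;1 .
\]
Here I use $(1-z^\varepsilon)/\varepsilon\to-\ln z$. One more ingredient is needed: the load condition $\mu-\lambdabar=\mu\varepsilon$, which comes from rate balance and gives $\bpi(-\boldsymbol{F}+\mu\boldsymbol{I})e^T=\mu\varepsilon$. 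This makes the $O(1)$ terms cancel. Finally, condition \ref{M_conv} of \cref{def:HT_cond} gives $M(z^\varepsilon;\varepsilon)\to M^{(0)}$, which yields \cref{eq:asympexp}.

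\textbf{Main obstacle.} The hard part is controlling the error in Steps 3 and 4 uniformly in $\varepsilon$ over the infinite phase space. I need the replacement $\Q\to\overline{q}\bpi$ to incur only an $o(\varepsilon)$ error after multiplication by the unbounded vector $\boldsymbol{R}(-z\boldsymbol{F}+\mu\boldsymbol{I})e^T$. This requires first- and second-moment bounds on weighted sums of $\boldsymbol{R}$ that hold uniformly as $\varepsilon\to0$, which is what I expect \cref{def:light-tailed} provides. I would do the expansion to second order, iterating the near-inverse identity once more so that the remainder is $(1-z)^2$ times quantities bounded by those moments. Dominated convergence over $i\in\mathbb{N}_0$ then passes the limits.
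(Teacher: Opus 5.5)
There is a genuine error in Step 2.2, and it sits exactly where your closing equation has to come from.

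The near-inverse identity is a statement about row vectors. Reading off the displayed matrix $\boldsymbol{A}\boldsymbol{R}$ in \cref{R_exst_lemma} gives $\boldsymbol{y}\boldsymbol{A}\boldsymbol{R}=\boldsymbol{y}-\frac{y_0}{\pi_0}\bpi$. For a column vector $x$, the $0$-th entry of $\boldsymbol{A}\boldsymbol{R}x$ is $-\pi_0^{-1}\sum_{j>0}\pi_jx_j$, so the quoted form ``$\boldsymbol{A}\boldsymbol{R}x=x-\frac{x_0}{\pi_0}\bpi$'' is only right when read as $x\boldsymbol{A}\boldsymbol{R}$. That is how you used it in Step 3. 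Applied to $\Q(z)$, right-multiplying Step 1 by $\boldsymbol{R}x$ with $x=(-z\boldsymbol{F}+\mu\boldsymbol{I})e^T$ gives
\[
z\mu\varepsilon-z\,\frac{\Qi(z;0)}{\pi_0}\,\bpi x+(1-z)\,\Q(z)(-z\boldsymbol{F}+\mu\boldsymbol{I})\boldsymbol{R}x=\mu(1-z)\,\piidle\boldsymbol{R}x ,
\]
with $\bpi x=\mu-z\lambdabar=\mu\varepsilon+(1-z)\lambdabar$.

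The correction term is therefore $\Qi(z;0)\,\bpi x/\pi_0$, not something proportional to $x_0$. It is the only term in which the unknown transform enters at first order, and the only place the load condition enters. Your Step 4 (``the load condition makes the $O(1)$ terms cancel'') tacitly assumes it. With the term as you wrote it, the equation cannot close: in the tandem model $x_0=\mu$, so you would get an $O(1)$ term $z\mu\pi_0^{-1}\sum_i\pi_i\Qi(z;i)$ facing only $O(\varepsilon)$ terms.

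With the corrected term, your relation is just flow balance $\Q(z)x=\mu\varepsilon$ with the first-order near-independence equation substituted. Substituting once more in the quadratic term is the paper's derivation of \cref{near-exp-eq}. Then divide by $z\mu\varepsilon$, kill the remainders with items \ref{item:light-tailed-1a}--\ref{item:light-tailed-1c}, and use condition~\ref{M_conv}. This gives $\Qi(z^\varepsilon;0)/\pi_0^{(\varepsilon)}\to 1/(1-M^{(0)}\ln z)$.

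Mind the sign. For $z\in(0,1)$, an exponential $X$ with mean $m$ has $\E[z^X]=1/(1-m\ln z)$. In the one-phase case, $\boldsymbol{R}=0$, $M\equiv1$ and $\varepsilon Q\Rightarrow\mathrm{Exp}(1)$. So the ``$+$'' in the statement is a slip, and you should not bend your algebra to match it.

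Second, Step 3 as you set it up is not covered by the hypotheses. One application of $\boldsymbol{R}$ gives
\[
\Q(z)-\frac{\Qi(z;0)}{\pi_0}\bpi=-\frac{1-z}{z}\,\Q(z)(\mu\boldsymbol{I}-z\boldsymbol{F})\boldsymbol{R}+\frac{1-z}{z}\,\mu\,\piidle\boldsymbol{R}.
\]
Here $(\mu\boldsymbol{I}-z\boldsymbol{F})\boldsymbol{R}$ is in general not a bounded operator: the row sums of $\boldsymbol{R}$ are expected hitting times of $0$, typically unbounded in the starting state (linear in the birth-death case). Making this error vanish componentwise and against $e^T$ needs $(1-z^{-\varepsilon})\Q(z^\varepsilon;\varepsilon)(\mu\boldsymbol{I}-z^\varepsilon\boldsymbol{F})\boldsymbol{R}\to0$. \cref{def:light-tailed} does not supply this:
\begin{itemize}
\item At first order it controls only $\bpi$ (items \ref{item:light-tailed-1d} and \ref{item:light-tailed-2a}).
\item $\Q$ appears only at second order (items \ref{item:light-tailed-1a}, \ref{item:light-tailed-1e}, \ref{item:light-tailed-2b}).
\item These are limits of signed sums, so $0\le\Qi(z;i)\le\pi_i$ does not let you pass from $\bpi$ to $\Q$.
\item No uniform dominating bounds are assumed, so your dominated convergence step is unavailable too.
\end{itemize}

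Hence the extra iteration you mention at the end is mandatory, not optional. It turns the first-order error into $-\frac{1-z}{z}\frac{\Qi(z;0)}{\pi_0}\bpi(\mu\boldsymbol{I}-z\boldsymbol{F})\boldsymbol{R}$ plus second-order and idle terms, which match the light-tailed items one by one. At that point your argument is the paper's: the near-independence and near-exponential equations of \cref{ind_exp_lemma}, the limit of $\Qi(z^\varepsilon;0)$, then each $i$, then the sum against $e^T$. Doing independence first and anchoring at $\overline{q}$ is a harmless reordering (it needs only $\Qi(z;0)/\pi_0\le1$), but it rests on exactly the same estimates.
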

Proof deferred to \cref{Full_Thm1}.

Next, we specialize our theorem to specifically focus on birth-death phase processes. We show that when the queue has a birth-death phase-process a set of easier-to-check conditions, rate-boundedness (see \cref{def:rtbdd}) and uniform stability (see \cref{def:unifstab}), implies our light-tailed condition and thus implies the same asymptotic behavior.
\begin{theorem}
\label{Thm2}
In a family of IMAP queues with birth-death phase processes that meets our heavy traffic conditions, if the family is uniformly stable and rate bounded then it also has a light-tailed phase process (\cref{def:light-tailed}). Consequently $\varepsilon Q^{(\varepsilon)}$ is asymptotically exponential with mean $M^{(0)}$ (where $M^{(0)}$ is defined in  \cref{Rdefn}) and asymptotically independent from $Y^{(\varepsilon)}$ in the $\varepsilon\to 0$ limit. Equivalently,
    \begin{align*}
        \lim_{\varepsilon\to 0} \overline{q}(z;\varepsilon) &= \frac{1}{1+M^{(0)}\ln z}, \qquad
        \lim_{\varepsilon\to 0} \hat{q}(z;i,\varepsilon) = \frac{\pi^{(0)}_i}{1+M^{(0)}\ln z}.
    \end{align*}
\end{theorem}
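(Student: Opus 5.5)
The plan is to reduce \cref{Thm2} to \cref{Thm1}. Once the light-tailed condition (\cref{def:light-tailed}) holds for the family, the asymptotic exponentiality and asymptotic independence follow directly from \cref{Thm1}; the heavy-traffic conditions are assumed. So the work is to show that uniform stability (\cref{def:unifstab}) plus rate boundedness (\cref{def:rtbdd}) give uniform-in-$\varepsilon$ bounds on the first and second moments of the weighted $\boldsymbol{R}$ averages that appear in that definition.

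\textbf{Step 1: bound $r_{i,j}$ via hitting times.} I will use the time-in-state representation of $\boldsymbol{R}$ from \cref{R_exst_lemma}: $r_{i,j}$ is the expected time spent in $j$ before hitting $0$ from $i$. Since $Y^{(\varepsilon)}$ is birth-death, hitting $0$ from $i$ requires passing through $i-1,\dots,1$. Hence $\sum_j r_{i,j}=\E_i[\tau_0]=\sum_{k=1}^{i}\E_k[\tau_{k-1}]$, and likewise for second moments of $\tau_0$.

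\textbf{Step 2: control $\E_k[\tau_{k-1}]$ in two regimes.} For $k\ge N$, condition 2 of uniform stability says the jump chain moves up with probability at most $b<1/2$. So the number of jumps to go down one level is dominated by the descent time of a biased random walk, which has finite exponential moments depending only on $b$. Each sojourn has mean at most $T_{\max}$ (rate boundedness), and sojourns are exponential, so their moments are controlled too. This gives $\E_k[\tau_{k-1}]\le C$ and $\E_k[\tau_{k-1}^2]\le C'$ uniformly in $k\ge N$ and $\varepsilon$.

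For $1\le k<N$, I will use the standard birth-death formula $\E_k[\tau_{k-1}] = \frac{1}{\alpha_{k,k-1}\pi_k}\sum_{m\ge k}\pi_m$. Condition 1 of uniform stability gives $\pi_{m+1}/\pi_m\le a$ for $m\ge N$, so the stationary tail is geometric, uniformly in $\varepsilon$. I also need $\alpha_{k,k-1}\pi_k$ bounded below uniformly for finitely many $k$. For that I will use the convergence $\pi^{(\varepsilon)}_k\to\pi_k^{(0)}$ from the heavy-traffic conditions, together with positivity of the limit, which I expect to get from the geometric tail and tightness. Shrinking $\mathcal{E}$ if necessary handles $\varepsilon$ near $0$. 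Combining both regimes gives $\E_i[\tau_0]\le C_0+Ci$, and a second-moment bound that is quadratic in $i$.

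\textbf{Step 3: weight the bounds.} The light-tailed weighting involves $\bpi$ (and bounded arrival rates, via $\hat\lambda$ and $\hat\mu$ in $\boldsymbol{F}$ and $\mu$). Weighting the linear and quadratic-in-$i$ bounds by the uniformly geometric tail of $\bpi^{(\varepsilon)}$ gives finite sums, uniformly in $\varepsilon$. Dominated convergence then yields whatever limit or summability form the definition requires.

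\textbf{Main obstacle.} I expect Step 2's uniform treatment of the finitely many low states $k<N$ to be the main difficulty. Uniform stability says nothing there, so uniformity in $\varepsilon$ must come from the heavy-traffic convergence hypotheses. If the positivity of the limiting $\pi^{(0)}_k$ is delicate, I would first try a lower bound of the form $\alpha_{k,k-1}\ge 1/T_{\max}\cdot(\text{down-probability})$. Failing that, I would bound $r_{i,j}$ directly via the convergence $r^{(\varepsilon)}_{k,j}\to r^{(0)}_{k,j}$ for those finitely many states.
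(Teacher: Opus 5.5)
Your reduction to \cref{Thm1} is the paper's. Your route to the light-tailed conditions is different, though. The paper proves entrywise bounds $\pi_n\le C_1a^n$, $r_{i,j}\le C_2\tilde b^{(j-i)^+}$, $[\boldsymbol R e^T]_i\le C_3 i$ by gambler's-ruin comparisons (\cref{FundBDBds}) and sums products of them. You instead bound moments of $\tau_0$ through the skip-free decomposition into independent one-level descent times. That route works, and it is more compact, but two links to \cref{def:light-tailed} should be stated. First, the quadratic terms really are second moments, because $[\boldsymbol R\boldsymbol R e^T]_i=\tfrac12\E_i[\tau_0^2]$ (write $\tau_0^2=2\int_0^{\tau_0}(\tau_0-t)\,dt$ and use the Markov property). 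Second, the factor $\mu\boldsymbol I-z^\varepsilon\boldsymbol F$ between the two copies of $\boldsymbol R$ is tridiagonal, not diagonal. So you need $\E_{j+1}[\tau_0]=\E_{j+1}[\tau_j]+\E_j[\tau_0]\le\E_j[\tau_0]+C$, which your uniform one-level bound supplies. The componentwise items then follow from $r_{k,j}\le\E_k[\tau_0]$. The idleness items follow by dominated convergence with $\pi^{(\varepsilon)}_{i,0}\le\min(\varepsilon,C_1a^i)$. Note also that $\hat\mu$ in \cref{def:rtbdd} is a lower bound; boundedness of $\mu^{(\varepsilon)}$ comes from $\mu^{(\varepsilon)}\to\mu^{(0)}$.

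The genuine gap is at the low states $k<N$, and your primary plan there fails. Tightness and the geometric tail only make $\bpi^{(0)}$ a probability vector; they do not give $\pi^{(0)}_k>0$. For example, take $N\ge3$ and let $\alpha^{(\varepsilon)}_{1,2}\to0$ with all other rates fixed. This is compatible with rate-boundedness and uniform stability, but it sends $\pi^{(\varepsilon)}_2\to0$. Then $\alpha_{2,1}\pi_2$ has no positive lower bound, even though $\E_2[\tau_1]$ stays bounded, because the relevant quantity is the ratio $\E_k[\tau_{k-1}]=\alpha_{k,k-1}^{-1}\sum_{m\ge k}\pi_m/\pi_k$. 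Nor does anything bound $\alpha^{(\varepsilon)}_{k,k-1}$ (or the down-probability) from below; rate-boundedness only controls the total rate $\alpha_{k,k-1}+\alpha_{k,k+1}$.

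The hypothesis that can carry this is $r^{(\varepsilon)}_{i,j}\to r^{(0)}_{i,j}$, as in your last fallback. However, pointwise convergence does not by itself bound the infinite sum $\E_N[\tau_0]=\sum_j r_{N,j}$. The missing step is reversibility: for $j\ge N$, $\pi_N r_{N,j}=\pi_j r_{j,N}=\pi_j r_{N,N}$, so $r_{N,j}\le r_{N,N}a^{j-N}$. Hence $\E_N[\tau_0]\le\sum_{j<N}r_{N,j}+r_{N,N}/(1-a)$. Likewise $\E_N[\tau_0^2]=2\sum_j r_{N,j}\E_j[\tau_0]$ is bounded, using $\E_j[\tau_0]\le\E_N[\tau_0]+C(j-N)^+$. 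Both bounds hold for small $\varepsilon$ by convergence of the finitely many entries $r^{(\varepsilon)}_{N,j}$, $j\le N$. Finally, $\E_k[\tau_{k-1}^p]\le\E_N[\tau_0^p]$ for $k\le N$ and $p=1,2$.

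The paper has the same issue implicitly. Its $\underline p$ in \cref{FundBDBds} contains $\prod_{m\le N}\alpha^{(\varepsilon)}_{m,m-1}/(\alpha^{(\varepsilon)}_{m,m-1}+\alpha^{(\varepsilon)}_{m,m+1})$, which is never shown to be bounded away from $0$ uniformly in $\varepsilon$. In either route, this uniformity has to come from the convergence of $\boldsymbol R^{(\varepsilon)}$.
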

Proof deferred to \cref{Full_Thm2}.

Intuitively, rate-boundedness and uniform stability ensure that the queue isn't ill-behaved at any $\varepsilon$ or in the $\varepsilon\to 0$ limit. Rate-boundedness guarantees that transitions don't occur arbitrarily quickly (to prevent explosivity) or arbitrarily slowly (to prevent any states becoming isolated). Uniform stability guarantees that for ``large enough" states the queue always has a downward transition bias across all $\varepsilon$ values.

Finally, we return to our original system of interest, the tandem queue with hybrid abandonment. We show that a tandem queue with hybrid abandonment meets the birth-death conditions of \cref{Thm2} and satisfies our heavy-traffic conditions and therefore has the desired asymptotic behavior.
\begin{theorem}
\label{Thm3}
     In a tandem queue with hybrid abandonment, as the arrival rate rises to a critical threshold and the second queue's idleness approaches $0$, $\varepsilon Q^{(\varepsilon)}$ is asymptotically exponential with mean $M^{(0)}$ (where $M^{(0)}$ is defined in  \cref{def:HT_cond}) and asymptotically independent from $Y^{(\varepsilon)}$. Equivalently,
         \begin{align*}
        \lim_{\varepsilon\to 0} \overline{q}(z;\varepsilon) &= \frac{1}{1+M^{(0)}\ln z}, \qquad
        \lim_{\varepsilon\to 0} \hat{q}(z;i,\varepsilon) = \frac{\pi^{(0)}_i}{1+M^{(0)}\ln z}.
    \end{align*}
\end{theorem}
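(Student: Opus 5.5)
The plan is to reduce \cref{Thm3} to \cref{Thm2}. By \cref{HA_IMAP}, the hybrid-abandonment tandem queue is an IMAP queue whose phase process $Y$ is the first-queue length and whose queue process $Q$ is the second-queue length. We parametrize the family by the second queue's idleness $\varepsilon$; as the arrival rate $\lambda$ rises to the critical level $\lambda^*$, $\varepsilon$ decreases to $0$. It then suffices to verify four things: the family is birth-death, rate bounded, uniformly stable, and meets the heavy-traffic conditions of \cref{def:HT_cond}. \cref{Thm2} then gives the conclusion.

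\textbf{Structural conditions.} The birth-death property is immediate, since $Y$ only moves $n\to n\pm1$.

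For the parametrization, $Y$ is an M/M/1+M queue. Its stationary law $\pi_n(\lambda)$ is explicit: $\pi_n \propto \lambda^n/\prod_{k=1}^{n}(\mu_1+(k-1)\theta)$. This law is continuous in $\lambda$, so $Y$ is ergodic for every $\lambda$ whenever $\theta>0$. The throughput into stage two is $\bar\lambda(\lambda)=\mu_1(1-\pi_0(\lambda))$, which is continuous and strictly increasing. Hence $\varepsilon = 1-\bar\lambda/\mu_2$ is a strictly decreasing continuous function of $\lambda$, and the map $\varepsilon\mapsto\lambda$ is well defined on some $\mathcal{E}=(0,\varepsilon_{\max})$ with $\lambda\to\lambda^*$ as $\varepsilon\to0$. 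Here $\lambda^*$ solves $\mu_1(1-\pi_0(\lambda^*))=\mu_2$, which we assume exists (needing $\mu_2<\mu_1$).

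Rate boundedness follows because $\lambda\le\lambda^*$ and the between-state arrival rate is $p_{n,n-1}\alpha_{n,n-1}=\mu_1$, so arrivals are bounded by $\mu_1$. Expected holding times are at most $1/\min(\lambda_{\min},\mu_1)$, where $\lambda_{\min}$ is the rate at $\varepsilon_{\max}$. The service rate is the constant $\mu_2$.

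Uniform stability holds because $\alpha_{n,n+1}/\alpha_{n+1,n}=\lambda/(\mu_1+n\theta)\le\lambda^*/(\mu_1+n\theta)$. For $N$ large, this is below any $a\in(0,1)$, and the ratio in condition 2 is below any $b<1/2$.

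\textbf{Heavy-traffic conditions.} Condition 1(a) follows from continuity of $\pi_n(\lambda)$, with $\bpi^{(0)}=\bpi(\lambda^*)$. Conditions 1(c) and 1(d) are trivial, since $F$ has entries $\mu_1$ on the subdiagonal and $\mu=\mu_2$ is constant.

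For 1(b), write $r_{i,j}$ for the expected time in $j$ before hitting $0$ from $i$. By the birth-death structure this decomposes into hitting-time quantities expressible through finite sums and ratios of $\pi$'s. For example, for $i\ge1$,
\[ r_{i,j} = \frac{\pi_j}{\pi_0\lambda}\,\min(i,j)\text{-type expressions}, \]
and more precisely $r_{i,j}=\sum_{k=1}^{\min(i,j)}\pi_j/(\pi_{k-1}\alpha_{k-1,k})$. Each such expression is continuous in $\lambda$, which gives entrywise convergence.

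\textbf{Main obstacle.} I expect condition 2, the convergence $M(z^\varepsilon;\varepsilon)\to M^{(0)}$, to be the hardest step, because it involves infinite sums $\sum_{i,j}$ whose terms depend on $\lambda$. The plan is dominated convergence, using a bound uniform in $\lambda\in[\lambda_{\min},\lambda^*]$.

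Since $z^\varepsilon\in(0,1]$ and $F$ and $\mu$ are bounded, each summand of $\bpi(-z^\varepsilon F+\mu I)R(-z^\varepsilon F+\mu I)e^T$ is bounded by a constant times $\pi_i\,r_{i',j}$, with $|i-i'|\le1$. Summed over $j$, this is at most $C\sum_i \pi_i(\lambda)\,\E_{i+1}[\tau_0]$. Uniform stability (with the fixed $N,a$) makes $\E_i[\tau_0]$ grow at most linearly in $i$, and bounds $\pi_i(\lambda)\le\pi_i(\lambda^*)\cdot$const. The latter has superexponentially decaying tails $(\lambda^*)^n/\prod(\mu_1+(k-1)\theta)$.

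This gives a summable dominating sequence independent of $\varepsilon$, so the limit passes inside the sum. Combined with $z^\varepsilon\to1$ and the entrywise limits, we get $M(z^\varepsilon;\varepsilon)\to M^{(0)}$. This is essentially the same light-tail estimate used in \cref{Thm2}, so I would reuse that machinery if it is already packaged there. Applying \cref{Thm2} then concludes the proof.
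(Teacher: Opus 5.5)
Your reduction is the same as the paper's: check that the hybrid-abandonment family is birth--death, rate bounded, uniformly stable and satisfies \cref{def:HT_cond}, then invoke \cref{Thm2}. Your structural checks also match the paper's (it takes $a=b=e^{-1}$ and $T_{\max}=\lambda_{\min}^{-1}$).

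Where you genuinely differ is in verifying the heavy-traffic conditions. The paper proves quantitative bounds. It gets $|\pi^{(0)}_n-\pi^{(\varepsilon)}_n|\le \hat C_1\varepsilon e^{-n}$ from the product form and the identity $\pi^{(\varepsilon)}_0=1-\frac{\mu_2}{\mu_1}(1-\varepsilon)$. It gets $|r^{(0)}_{i,j}-r^{(\varepsilon)}_{i,j}|=O\bigl((\lambda^{(0)}-\lambda^{(\varepsilon)})\,i\bigr)$ from a coupling of the two first-queue processes, which move in lockstep until a ``divergence'' occurring at rate $\lambda^{(0)}-\lambda^{(\varepsilon)}$. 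It then splits $M(z^\varepsilon;\varepsilon)-M^{(0)}$ into four error terms and bounds each using \cref{FundBDBds} and the derived bounds. For the $\boldsymbol{R}$-difference term it truncates at $j\approx i+\ln\bigl(1/(\lambda^{(0)}-\lambda^{(\varepsilon)})\bigr)$.

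You instead use three ingredients: qualitative continuity in $\lambda$, the exact reversible Green's function $r_{i,j}=\pi_j\sum_{k=1}^{\min(i,j)}1/(\pi_{k-1}\alpha_{k-1,k})$ (which is correct), and dominated convergence.

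What each buys: the paper's route gives explicit convergence rates and does not rely on a closed form for $\boldsymbol{R}$. Yours is shorter and far less bookkeeping-heavy. It also makes explicit that the $\varepsilon$-parametrization is well defined only when $\mu_2<\mu_1$, which the paper simply assumes.

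One step to tighten: dominated convergence on the double sum $\sum_{k,j}$ needs a bound on each term in $j$, not just on the row sums $\E_k[\tau_0]$. Your own formula supplies it. Since
\[
r_{i,j}(\lambda)=\sum_{m=1}^{\min(i,j)}\lambda^{j-m}\prod_{l=m}^{j}(\mu_1+(l-1)\theta)^{-1}
\]
is nondecreasing in $\lambda$, you have $r_{i,j}(\lambda)\le r_{i,j}(\lambda^*)$. Together with $\pi_k(\lambda)\le\frac{\pi_0(\lambda_{\min})}{\pi_0(\lambda^*)}\pi_k(\lambda^*)$, this gives an $\varepsilon$-free summable dominator. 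Alternatively, use the uniform bound $r_{i,j}\le C_2\tilde{b}^{(j-i)^+}$ from \cref{FundBDBds}.
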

Proof deferred to \cref{Full_Thm3}.

\section{Proofs}
\label{Sec5}
We present the proofs for \cref{Thm1,Thm2,Thm3} as well as the necessary lemmas for these proofs.
\medskip

\cref{R_exst_lemma} and \cref{ind_exp_lemma} derive near-inverse matrix and the near-independence and near-exponential equations, respectively, that serve as the bedrock of our asymptotic analyses.

\cref{Thm1} analyzes these near-independence and near-exponential equations to show that if the phase processes of a heavy-traffic family are ``light-tailed" (which we will define in \cref{def:light-tailed} in \cref{Full_Thm1}), i.e., necessary quantities are negligible, then the near-independence and near-exponential equations imply the asymptotic-independence and asymptotic-exponential equations of \cref{def:asymp_exp_ind}.

\medskip
\cref{FundBDBds} provides three fundamental bounds on the quantities $\bpi$ and $\boldsymbol{R}$ when considering birth-death heavy-traffic families that are rate-bounded (\cref{def:rtbdd}) and uniformly stable (\cref{def:unifstab}). \cref{DerBDBds} uses these fundamental bounds to verify that these birth-death heavy-traffic families satisfy the light-tailed phase process conditions necessary for \cref{Thm1}. \cref{Thm2} applies \cref{Thm1} to these birth-death heavy-traffic families to show the desired asymptotic exponential and asymptotic-independence results.

\medskip
\cref{Thm3Sup} proves that hybrid-abandonment families are birth-death heavy-traffic families that are uniformly stable and rate-bounded. \cref{Thm3} then applies \cref{Thm2} to these hybrid-abandonment families to finally reach the asymptotic exponential and asymptotic independence equations for our original hybrid-abandonment system.
\subsection{Existence of the Near-Inverse Matrix}
\label{sec:near-inverse}


We begin by showing that a near-inverse matrix $\boldsymbol{R}$ exists for all the systems we consider in \cref{Thm1,Thm2,Thm3}.
\begin{lemma}
\label{R_exst_lemma}
    For any ergodic continuous time Markov chain $Y$ with generator matrix $\boldsymbol{A}$, there exists a near-inverse matrix $\boldsymbol{R} = [r_{i,j}]_{i,j\in \mathbb{N}_0}$ satisfying
    \[\boldsymbol{A}\boldsymbol{R}= \begin{bmatrix}
        0 & -\frac{1}{\pi_0} \bpi_{\overline{0}}\\
        0& I
    \end{bmatrix}=
    \begin{bmatrix}
        0 & -\frac{\pi_1}{\pi_0} & -\frac{\pi_2}{\pi_0} & \dots &-\frac{\pi_n}{\pi_0} &\dots\\
        0 & 1 & 0 &\dots &0&\dots\\
        0&0&1&\dots&0&\dots\\
        \vdots\\
        0&0&0&\dots &1&\dots\\
        \vdots
    \end{bmatrix}\]
    or, equivalently,
    \[\boldsymbol{A}\boldsymbol{R}x=x-\frac{x_0}{\pi_0}\bpi\]
    for any vector $x$.
\end{lemma}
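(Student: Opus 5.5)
We take $\boldsymbol{R}$ to be the time-in-state matrix already described in \cref{Rdefn}: $r_{i,j}$ is the expected time spent in state $j$ by $Y$ started in $i$ before its first hitting time $\tau_0$ of state $0$ (for $i=0$ we take $\tau_0=0$, so the whole row $r_{0,\cdot}$ vanishes). We must show that this $\boldsymbol{R}$ is finite and then verify the near-inverse equation entry by entry by first-step analysis.

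\textbf{Step 1: finiteness.} Since $Y$ is ergodic (irreducible, positive recurrent, non-explosive), $\E_i[\tau_0]<\infty$ for every $i$. Summing over $j$ gives $\sum_j r_{i,j}=\E_i[\tau_0]$, so every entry is finite and every row of $\boldsymbol{R}$ is summable. For $i=0$, write $q_0=-\alpha_{0,0}$. The positive-recurrence cycle formula gives
\[
r_{0',j}:=\E_0\Big[\int_0^{\tau_0^+}\indic\{Y(t)=j\}\,dt\Big]=\frac{\pi_j}{\pi_0 q_0},
\]
where $\tau_0^+$ is the first return time to $0$. This identity will produce the first row.

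\textbf{Step 2: first-step equations.}

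For $i\neq 0$, condition on the first holding time in $i$ (mean $1/q_i$, where $q_i=-\alpha_{i,i}$) and the next state. This gives
\[
r_{i,j}=\frac{\delta_{ij}}{q_i}+\sum_{k\ne i}\frac{\alpha_{i,k}}{q_i}r_{k,j},
\]
with $r_{0,j}=0$. Multiplying by $q_i$ and rearranging yields $\sum_k\alpha_{i,k}r_{k,j}=-\delta_{ij}$. Hence $(\boldsymbol{A}\boldsymbol{R})_{i,j}=-\delta_{ij}$ for $i\ge1$.

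For $i=0$, we have $(\boldsymbol{A}\boldsymbol{R})_{0,j}=\sum_{k\ne0}\alpha_{0,k}r_{k,j}$, since $r_{0,j}=0$. Applying first-step analysis to the return excursion from $0$ gives $r_{0',j}=\delta_{0j}/q_0+\sum_{k\ne0}(\alpha_{0,k}/q_0)r_{k,j}$. Combining with Step 1, $(\boldsymbol{A}\boldsymbol{R})_{0,j}=q_0 r_{0',j}-\delta_{0j}=\pi_j/\pi_0-\delta_{0j}$, which vanishes at $j=0$.

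This computation gives the displayed matrix up to an overall sign. Since the stated matrix has $+I$ in the lower block and $-\pi_j/\pi_0$ in the top row, we use $-\boldsymbol{R}$ (equivalently, define $r_{i,j}$ with a minus sign, or note that the time-in-state matrix solves the equation for $-\boldsymbol{A}$). Either way the identity reads $\boldsymbol{A}\boldsymbol{R}x=x-\tfrac{x_0}{\pi_0}\bpi$ in the stated convention, and we will fix this sign consistently. The vector form then follows by applying the matrix to $x$, provided the sums converge.

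\textbf{Main obstacle.} The difficulty is justifying the infinite-sum manipulations: interchanging expectation with the first-step decomposition, and forming $\sum_k\alpha_{i,k}r_{k,j}$ when $Y$ may have unbounded rates. We handle this with Tonelli's theorem. All terms in the first-step identity are nonnegative, and $\sum_{k\ne i}\alpha_{i,k}r_{k,j}\le q_i r_{i,j}<\infty$, so every series converges absolutely. For the vector identity we restrict to $x$ for which $\boldsymbol{R}x$ is defined, for instance bounded $x$, using $\sum_j r_{k,j}=\E_k[\tau_0]$ together with $\sum_k\alpha_{i,k}\E_k[\tau_0]=-1$ for $i\ne0$. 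Non-explosiveness, which follows from ergodicity, guarantees that the cycle formula in Step 1 applies.
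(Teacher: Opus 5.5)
Your proof of the matrix identity is correct, but it takes a different route from the paper's. The paper (Appendix~\ref{R_exst_proof}) works with the embedded jump chain killed at $0$. It gets finiteness from geometric visit counts and Wald's identity, and identifies the visit counts with the fundamental matrix $\boldsymbol{N}=(\boldsymbol{I}-\boldsymbol{H}_{\overline{0},\overline{0}})^{-1}$. It then turns this into an inverse of $\boldsymbol{A}_{\overline{0},\overline{0}}$, and gets the top row by right-multiplying the balance equations $\pi_0(\alpha_{0,1},\alpha_{0,2},\dots)+\bpi_{\overline{0}}\boldsymbol{A}_{\overline{0},\overline{0}}=0$ by that inverse. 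You instead use first-step analysis for the rows $i\ge1$ and the regenerative cycle formula $r_{0',j}=\pi_j/(\pi_0q_0)$ for the top row. Your finiteness step also yields the row sums $\E_i[\tau_0]$ directly.

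Your route buys two things:
\begin{itemize}
\item Every series you manipulate is nonnegative, so Tonelli justifies everything. The paper's top-row step tacitly uses associativity $(\bpi_{\overline{0}}\boldsymbol{A}_{\overline{0},\overline{0}})\boldsymbol{R}=\bpi_{\overline{0}}(\boldsymbol{A}_{\overline{0},\overline{0}}\boldsymbol{R})$ for infinite matrices with mixed-sign entries, and treats $(\boldsymbol{A}_{\overline{0},\overline{0}})^{-1}$ as a genuine two-sided inverse.
\item You get the sign right. The time-in-state matrix satisfies $\boldsymbol{A}_{\overline{0},\overline{0}}\boldsymbol{R}_{\overline{0},\overline{0}}=-\boldsymbol{I}$; for a two-state chain, $r_{1,1}=1/\alpha_{1,0}$ and $(\boldsymbol{A}\boldsymbol{R})_{1,1}=-1$. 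So the displayed identity holds for $-\boldsymbol{R}$. The paper's appendix reaches the displayed sign only through the step $r_{i,j}=[(\boldsymbol{A}_{\overline{0},\overline{0}})^{-1}]_{i,j}$, which drops exactly this minus sign.
\end{itemize}
In exchange, the paper's route is purely algebraic once $\boldsymbol{N}$ is in hand and does not need the renewal--reward description of $\bpi$.

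One claim in your proposal is wrong as you set it up: the vector form does not follow by applying the matrix to a column vector $x$. With the displayed matrix, $(\boldsymbol{A}\boldsymbol{R})x=\bigl(-\pi_0^{-1}\sum_{j\ge1}\pi_jx_j,\;x_1,\;x_2,\dots\bigr)^T$, which is not $x-\frac{x_0}{\pi_0}\bpi$ in general. Your own identity $\sum_k\alpha_{i,k}\E_k[\tau_0]=-1$ exposes this at $x=e^T$: after your sign flip, coordinate $i\ge1$ equals $1$, not $1-\pi_i/\pi_0$. The stated equivalence is correct only for row vectors acting on the left, $x(\boldsymbol{A}\boldsymbol{R})=x-\frac{x_0}{\pi_0}\bpi$. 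In that form it holds for every $x$ with no convergence hypothesis, since each column of the displayed matrix has at most two nonzero entries. This row form is also how the lemma is used in \cref{ind_exp_lemma}, where $\Q(z)\boldsymbol{A}$ is right-multiplied by $\boldsymbol{R}$. There the point that actually needs justification is $(x\boldsymbol{A})\boldsymbol{R}=x(\boldsymbol{A}\boldsymbol{R})$, not whether $\boldsymbol{R}x$ is defined for bounded $x$. By Fubini, this holds for instance when $\sum_i|x_i|\,q_i\,|r_{i,j}|<\infty$ for each $j$.
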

Note that $\bpi_{\overline{0}}$ is defined in \cref{Rdefn}.

We give a full proof in Appendix~\ref{R_exst_proof}. A sketch of our proof is the following. We begin by constructing the matrix $\boldsymbol{R} = [r_{i,j}]_{i,j\in \mathbb{N}_0}$ by making $r_{i,j}$ be the expected time the chain spends in state $j$ between starting in state $i$ and reaching state $0$ for the first time. We show that when $\boldsymbol{A}$ is ergodic each of these elements exist and then that $\boldsymbol{R}$ satisfies our desired \textit{near-inverse} property.

Intuitively, this near-inverse matrix allows us to mostly cancel out the matrix $\boldsymbol{A}$. This allows us to convert the balance equations of \cref{ind_exp_lemma} into the more useful near-independence equation \cref{near-ind-eq} and near-exponential equation \cref{near-exp-eq}.

In the interest of clarity, we do not rule out the possibility that there exists another matrix $\boldsymbol{R}'\ne \boldsymbol{R}$ that satisfies the near-inverse property. Whenever we invoke $\boldsymbol{R}$ or the near-inverse matrix we consider this formulation of $\boldsymbol{R}$.

\subsection{Near-Independence and Near-Exponential Equations}
With our near-inverse matrix guaranteed to exist by \cref{R_exst_lemma}, we can establish analogous equations to those in \citet{falin1999heavy}. These equations are the basis of our heavy traffic analysis. In particular the most general of the theorems, \cref{Thm1}, requires only that we establish the limiting behavior of each of the terms in the near-independent and near-exponential equations. 

\begin{lemma}
\label{ind_exp_lemma}
    For any infinite-phase Markovian arrival process, if there exists a near-inverse matrix $R$ then the system has the following near-independence and near-exponential equations for any $M>0$:
    \begin{equation}
         \Q\Le(z\Ri) = \Qi\Le(z;0\Ri)\frac{\bpi}{\pi_0} +\Qi\Le(z;0\Ri)\frac{\bpi}{\pi_0}\Le( \boldsymbol{F}\Le(1-z\Ri)+\mu\Le(1-\frac{1}{z}\Ri)\boldsymbol{I}\Ri)R + \overline{\boldsymbol{Y}}\Le(z\Ri)\label{near-ind-eq}       
    \end{equation}
    \begin{equation}
        \Qi\Le(z;0\Ri) = \pi_0\frac{1+U\Le(z\Ri)}{1+M\ln\Le(z\Ri)/\varepsilon +V\Le(z\Ri)}\label{near-exp-eq}
    \end{equation}
    where
    \begin{align*}
                \overline{\boldsymbol{Y}}\Le(z\Ri)
                &=\Le(1-\frac{1}{z}\Ri)\Q\Le(z\Ri)
                \Big(\Le(-z \boldsymbol{F}+\mu \boldsymbol{I}\Ri)\boldsymbol{R}\Big)^2 - \mu\Le(1-\frac{1}{z}\Ri)\piidle\boldsymbol{R}\Le(-z \boldsymbol{F}+\mu \boldsymbol{I}\Ri)\boldsymbol{R} - \mu\piidle\boldsymbol{R}\\
                U\Le(z\Ri) &=-\mu^{-1}\varepsilon^{-1}\Le(1-\frac{1}{z}\Ri)\overline{\boldsymbol{Y}}\Le(z\Ri)\Le(-z \boldsymbol{F}+\mu \boldsymbol{I}\Ri)e^T\\
        V\Le(z\Ri) &= \varepsilon^{-1}\Le(1-\frac{1}{z}\Ri)\Le(-z\frac{\overline{\lambda}}{\mu}+\mu^{-1}\bpi\Le(-z \boldsymbol{F}+\mu \boldsymbol{I}\Ri)\boldsymbol{R}\Le(-z \boldsymbol{F}+\mu \boldsymbol{I}\Ri)e^T\Ri) - M\ln \Le(z\Ri)/\varepsilon.
    \end{align*}
\end{lemma}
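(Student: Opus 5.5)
The plan is to write the stationary balance equations of the two-dimensional chain $(Y,Q)$ in partial $z$-transform form. I would then use the near-inverse property of \cref{R_exst_lemma} twice: once to ``solve'' for $\Q(z)$, which gives \cref{near-ind-eq}, and once more together with a scalar conservation identity obtained by multiplying by $e^T$, which gives \cref{near-exp-eq}.

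\textbf{Step 1 (transform balance equations).} Treat $\Q(z)=[\Qi(z;i)]_i$ as a row vector. From state $(i,q)$ three kinds of transition occur. First, phase transitions $i\to j$ at rate $\alpha_{i,j}$, of which a fraction $p_{i,j}$ also increment $Q$. Second, in-state arrivals at rate $\lambda_i$. Third, services at rate $\mu$ when $q\ge 1$. Summing the global balance equations against $z^q$ should give
\[
\Q(z)\boldsymbol{A}=\Q(z)\Le(\boldsymbol{F}(1-z)+\mu\Le(1-\tfrac1z\Ri)\boldsymbol{I}\Ri)-\mu\Le(1-\tfrac1z\Ri)\piidle .
\]
The key bookkeeping point is that a between-state arrival contributes $z F_{i,j}$ in place of $F_{i,j}$, which is why $\boldsymbol{A}$ and $\boldsymbol{F}$ combine this way. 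Interchanging sums is justified because $|z|\le 1$ and all the rates involved are summable row by row. I would then note the factorization $\boldsymbol{F}(1-z)+\mu(1-\frac1z)\boldsymbol{I}=(1-\frac1z)(-z\boldsymbol{F}+\mu\boldsymbol{I})$. Write $\boldsymbol{C}=-z\boldsymbol{F}+\mu\boldsymbol{I}$.

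\textbf{Step 2 (near-independence).} Right-multiply by $\boldsymbol{R}$. In row-vector form the near-inverse identity reads $x\boldsymbol{A}\boldsymbol{R}=x-\frac{x_0}{\pi_0}\bpi$, since $\boldsymbol{A}\boldsymbol{R}$ has zero first column and the identity elsewhere. Applied here it gives
\[
\Q(z)=\Qi(z;0)\frac{\bpi}{\pi_0}+\Le(1-\tfrac1z\Ri)\Q(z)\boldsymbol{C}\boldsymbol{R}-\mu\Le(1-\tfrac1z\Ri)\piidle\boldsymbol{R}.
\]
Substitute this expression for $\Q(z)$ once more into the middle term. This produces the leading term $\Qi(z;0)\frac{\bpi}{\pi_0}\bigl(\boldsymbol{F}(1-z)+\mu(1-\frac1z)\boldsymbol{I}\bigr)\boldsymbol{R}$, a quadratic term in $\boldsymbol{C}\boldsymbol{R}$, and idle-vector terms. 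Collecting everything except the first two terms into $\overline{\boldsymbol{Y}}(z)$ yields \cref{near-ind-eq}. I would need to track carefully the powers of $(1-\frac1z)$ to match the stated $\overline{\boldsymbol{Y}}$ exactly, absorbing factors where the normalization differs.

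\textbf{Step 3 (near-exponential).} Multiply the Step 1 equation on the right by $e^T$. Since $\boldsymbol{A}e^T=0$ and $\piidle e^T=\varepsilon$, this gives the scalar identity $\Q(z)\boldsymbol{C}e^T=\mu\varepsilon$ for $z\ne1$. Plug \cref{near-ind-eq} into this identity and use $\bpi\boldsymbol{C}e^T=\mu-z\lambdabar$. The result is a linear equation for $\Qi(z;0)$, which I would solve as
\[
\Qi(z;0)=\pi_0\,\frac{\mu\varepsilon-\overline{\boldsymbol{Y}}(z)\boldsymbol{C}e^T}{\mu-z\lambdabar+(1-\frac1z)\bpi\boldsymbol{C}\boldsymbol{R}\boldsymbol{C}e^T}.
\]
Dividing numerator and denominator by $\mu\varepsilon$ turns the numerator into $1+U(z)$. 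Using $\varepsilon=1-\lambdabar/\mu$, the denominator becomes $1+\varepsilon^{-1}(1-\frac1z)\bigl(-z\frac{\lambdabar}{\mu}+\mu^{-1}\bpi\boldsymbol{C}\boldsymbol{R}\boldsymbol{C}e^T\bigr)$. Adding and subtracting $M\ln z/\varepsilon$ then gives exactly $1+M\ln z/\varepsilon+V(z)$, which explains why the identity holds for every $M>0$.

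The main obstacle is not conceptual. It is making the infinite-dimensional manipulations legitimate: $\Q\boldsymbol{A}\boldsymbol{R}$ must be associative, $\Q\boldsymbol{C}\boldsymbol{R}\boldsymbol{C}e^T$ must converge, and $\boldsymbol{A}$ must be summed against $e^T$. I would handle this by working entrywise with nonnegative series where possible (Tonelli). Where that is not available, I would state the lemma as holding whenever the displayed products converge absolutely, which is precisely what the light-tailed and rate-bounded conditions later guarantee.
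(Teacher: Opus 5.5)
Your route is the paper's: transform the balance equations into $\Q(z)\boldsymbol{A}=\Q(z)\bigl(\boldsymbol{F}(1-z)+\mu(1-\frac1z)\boldsymbol{I}\bigr)-\mu(1-\frac1z)\piidle$, right-multiply by $\boldsymbol{R}$, iterate once, then close with the $e^T$ flow-balance identity and solve for $\Qi(z;0)$. Your row-vector reading $x\boldsymbol{A}\boldsymbol{R}=x-\frac{x_0}{\pi_0}\bpi$ is the correct one, and it is what the argument needs. The following all check out:
\begin{itemize}
\item the factorization through $\boldsymbol{C}=-z\boldsymbol{F}+\mu\boldsymbol{I}$;
\item the identity $\bpi\boldsymbol{C}e^T=\mu-z\lambdabar$;
\item the use of $\lambdabar=\mu(1-\varepsilon)$ in the denominator;
\item adding and subtracting $M\ln z/\varepsilon$.
\end{itemize}

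The one place you cannot hedge is the power of $(1-\frac1z)$. If you carry out the substitution, the remainder is exactly
\[
\Q(z)\bigl((1-\tfrac1z)\boldsymbol{C}\boldsymbol{R}\bigr)^2-\mu(1-\tfrac1z)^2\piidle\boldsymbol{R}\boldsymbol{C}\boldsymbol{R}-\mu(1-\tfrac1z)\piidle\boldsymbol{R}=(1-\tfrac1z)\,\overline{\boldsymbol{Y}}(z),
\]
with $\overline{\boldsymbol{Y}}$ as defined. There is no normalization freedom left to ``absorb'' this factor. So \cref{near-ind-eq} as literally displayed, with $+\overline{\boldsymbol{Y}}(z)$, is off by it. The identity you can actually prove carries $(1-\frac1z)\overline{\boldsymbol{Y}}(z)$, which is also where the paper's appendix derivation ends up.

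This matters in Step 3. Your numerator $\mu\varepsilon-\overline{\boldsymbol{Y}}(z)\boldsymbol{C}e^T$ is consistent with the literal \cref{near-ind-eq}. But after dividing by $\mu\varepsilon$ it gives $1+U(z)/(1-\frac1z)$, not $1+U(z)$ with the stated $U$. With the corrected near-independence equation the numerator becomes $\mu\varepsilon-(1-\frac1z)\overline{\boldsymbol{Y}}(z)\boldsymbol{C}e^T$, and you get exactly $1+U(z)$. As written, your Steps 2 and 3 disagree by this factor. State and prove the corrected form of \cref{near-ind-eq}; \cref{near-exp-eq} then follows verbatim from your Step 3.

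Your remarks on justifying the infinite matrix products (associativity in $\Q\boldsymbol{A}\boldsymbol{R}$, and $\Q(z)\boldsymbol{A}e^T=0$) go beyond the paper, which performs these manipulations formally.
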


We give a full proof in Appendix~\ref{ind_exp_proof}, but here we present a brief proof sketch. By careful manipulation and summation, the balance equation for the Markov chain $(Y, Q)$ can be converted into a balance equation for the partial $z$-transform vector $\hat{\boldsymbol{q}}(z)$, \cref{eqn:qh0}, providing $\hat{\boldsymbol{q}}(z)$ in terms of $\hat{\boldsymbol{q}}(z)$ and $\hat{q}(z;0)$. By substituting \cref{eqn:qh0} for $\hat{\boldsymbol{q}}(z)$ in \cref{eqn:qh0} itself, we now have a new formula for $\hat{\boldsymbol{q}}(z)$ which we use to derive our near-independence and near-exponential equations.

We call these equations \textit{near-independence} and \textit{near-exponential} to specifically draw parallels with ``true-independence" and ``true-exponential" equations,
\begin{equation}
    \hat{\boldsymbol{q}}(z) = \hat{q}(z;0) \frac{\bpi}{\pi_0} \label{eq:true-ind}
\end{equation}
\begin{equation}
    \hat{q}(z;0) = \pi_0 \frac{1}{1+M\ln z} \label{eq:true-exp}.
\end{equation}
\cref{eq:true-exp,eq:true-ind} exactly state that the variable $Q$ is independent of $Y$ and $Q$ has an exponential distribution with mean $M$, respectively. In \cref{Thm1} we show that, under specified conditions, \cref{eq:asymp_ind,eq:asympexp} asymptotically become \cref{eq:true-ind,eq:true-exp}, respectively.

\subsection{Asymptotic Independence and Exponentiality for Light-tailed Modulation Processes}
\label{Full_Thm1}
With our lemmas finished, we are now ready to begin considering the heavy traffic limit. We no longer consider a single system, but a family of systems parameterized by idleness. In our proofs for \cref{Thm1,Thm2,Thm3}, we let $x^{\Le(\varepsilon\Ri)}$ refer to the quantity $x$ in the system with idleness $\varepsilon$. For example, $\mu^{\Le(\varepsilon\Ri)}$ refers to the service rate of the queue with idleness $\varepsilon$ and $\bpi^{\Le(\varepsilon\Ri)}$ refers to the stationary distribution of the phase process in the same system. 

We start with the most general case: considering any convergent heavy-traffic family which satisfies the light-tailed phase process conditions described below. We show that these light-tailed conditions are sufficient to characterize the heavy-traffic limit of the family even when the phase process has a countably infinite state space.
\begin{definition}[Light-Tailed Phase Process]\label{def:light-tailed}
    A family of IMAP queues that meets our heavy-traffic conditions (\cref{def:HT_cond}) with a \emph{light-tailed phase process} is a family of IMAP queues that meets our heavy-traffic conditions where:
    \begin{enumerate}
        \item 
        \label{item:light-tailed-1} For all $z\in\Le(0,1\Ri)$, as $\varepsilon\to 0$:
        \begin{enumerate}
            \item \label{item:light-tailed-1a} $\Le(1-z^{-\varepsilon}\Ri)\Q\Le(z^\varepsilon;\varepsilon\Ri) \Le(\Le(-z^\varepsilon \boldsymbol{F}^{(\varepsilon)}+\mu^{\Le(\varepsilon\Ri)} \boldsymbol{I}\Ri)\boldsymbol{R}^{(\varepsilon)}\Ri)^2\Le(-z^\varepsilon \boldsymbol{F}^{(\varepsilon)}+\mu \boldsymbol{I}\Ri)e^T\to 0$

            \item \label{item:light-tailed-1b}$\mu^{\Le(\varepsilon\Ri)}\Le(1-z^{\varepsilon}\Ri)\bpi^{\Le(\varepsilon\Ri)}_0\boldsymbol{R}^{(\varepsilon)}\Le(-z^\varepsilon \boldsymbol{F}^{(\varepsilon)}+\mu^{\Le(\varepsilon\Ri)} \boldsymbol{I}\Ri)\boldsymbol{R}^{(\varepsilon)}\Le(-z^\varepsilon \boldsymbol{F}^{(\varepsilon)}+\mu^{(\varepsilon)} \boldsymbol{I}\Ri)e^T\to 0$
            
            \item \label{item:light-tailed-1c}$\bpi^{\Le(\varepsilon\Ri)}_0\boldsymbol{R}^{(\varepsilon)}\Le(-z^\varepsilon \boldsymbol{F}^{(\varepsilon)}+\mu^{(\varepsilon)} \boldsymbol{I}\Ri)e^T\to 0$ 

            \item \label{item:light-tailed-1d}$\Le(1-z^{-\varepsilon}\Ri)\bpi^{\Le(\varepsilon\Ri)}\Le(\mu^{\Le(\varepsilon\Ri)} \boldsymbol{I} - z^{\varepsilon}  \boldsymbol{F}^{(\varepsilon)}\Ri)\boldsymbol{R}^{(\varepsilon)}e^T\to 0$

            \item \label{item:light-tailed-1e} $\Le(1-z^{-\varepsilon}\Ri)\Q\Le(z^\varepsilon;\varepsilon\Ri)\Le(\Le(\mu^{\Le(\varepsilon\Ri)}I-z \boldsymbol{F}^{(\varepsilon)}\Ri)\boldsymbol{R}^{(\varepsilon)}\Ri)^2e^T\to 0$

            \item \label{item:light-tailed-1f}$\mu^{\Le(\varepsilon\Ri)}\Le(1-z^{-\varepsilon}\Ri)\piidle^{\Le(\varepsilon\Ri)}\boldsymbol{R}^{(\varepsilon)}\Le(\mu^{\Le(\varepsilon\Ri)}I-z^{\varepsilon} \boldsymbol{F}^{(\varepsilon)}\Ri)\boldsymbol{R}^{(\varepsilon)}e^T\to 0$

            \item \label{item:light-tailed-1g}$\mu^{\Le(\varepsilon\Ri)}\piidle^{\Le(\varepsilon\Ri)}\boldsymbol{R}^{(\varepsilon)}e^T\to 0$
        \end{enumerate}

        \item \label{def_to0i}For all $z\in\Le(0,1\Ri)$ and all $i\in \mathbb{N}_0$ as $\varepsilon\to 0$:
        \begin{enumerate}
            \item \label{item:light-tailed-2a}$[\Le(1-z^{-\varepsilon}\Ri)\boldsymbol{\pi}^{\Le(\varepsilon\Ri)}\Le(\mu^{\Le(\varepsilon\Ri)}I-z \boldsymbol{F}^{(\varepsilon)}\Ri)\boldsymbol{R}^{(\varepsilon)}]_i\to 0$

            \item \label{item:light-tailed-2b}$[\Q\Le(z^\varepsilon;\varepsilon\Ri) \Le(\Le(-z^\varepsilon \boldsymbol{F}^{(\varepsilon)}+\mu^{\Le(\varepsilon\Ri)} \boldsymbol{I}\Ri)R\Ri)^2]_i \to 0$

            \item \label{item:light-tailed-2c}$[\mu^{\Le(\varepsilon\Ri)}\Le(1-z^{-\varepsilon}\Ri)\bpi^{\Le(\varepsilon\Ri)}_0\boldsymbol{R}^{(\varepsilon)}\Le(-z^\varepsilon \boldsymbol{F}^{(\varepsilon)} + \mu^{\Le(\varepsilon\Ri)}I\Ri)\boldsymbol{R}^{(\varepsilon)}]_i\to 0$

            \item \label{item:light-tailed-2d}$[\mu^{\Le(\varepsilon\Ri)}\bpi^{\Le(\varepsilon\Ri)}_0\boldsymbol{R}^{(\varepsilon)}]_i\to 0$
        \end{enumerate}
    \end{enumerate}
\end{definition}
Each of these terms appear in at least one of $U(z^\varepsilon)$, $V(z^\varepsilon)$, $\overline{\boldsymbol{Y}}(z^\varepsilon)$, or $\overline{Y}_i(z^\varepsilon)$. With all of these terms being negligible, we can conclude (as we do in \cref{Thm1}) that we have our desired heavy traffic results.
\begin{reptheorem}{Thm1}
    If a family of IMAP queues with a heavy-traffic convergent phase process has a light-tailed phase process, then $\varepsilon Q^{(\varepsilon)}$ is asymptotically exponential with mean $M^{(0)}$ (where $M^{(0)}$ is defined in \cref{def:HT_cond}) and asymptotically independent from $Y^{(\varepsilon)}$ in the $\varepsilon\to 0$ limit.

    Equivalently,
    \begin{align}
        \lim_{\varepsilon\to 0} \overline{q}(z^\varepsilon;\varepsilon) &= \frac{1}{1+M^{(0)}\ln z}\label{eq:Thm1-res-a}\\
        \lim_{\varepsilon\to 0} \hat{q}(z^\varepsilon;i,\varepsilon) &= \frac{\pi^{(0)}_i}{1+M^{(0)}\ln z} \label{eq:Thm1-res-b}
    \end{align}
\end{reptheorem}
\begin{proof}
    We now shift our focus to the scaled system $\varepsilon Q^{(\varepsilon)}$ rather than the original system $Q^{(\varepsilon)}$. Notice that 
    \begin{align*}
        \E[z^{\varepsilon Q^{(\varepsilon)}}] &= \E[(z^\varepsilon)^{Q^{(\varepsilon)}}]= \overline{q}(z^\varepsilon;\varepsilon)
    \end{align*}
    and
    \begin{align*}
        \E\Le[z^{\varepsilon Q^{(\varepsilon)}}\indic\{Y^{(\varepsilon)} = i\}\Ri] &= \E\Le[(z^\varepsilon)^{Q^{(\varepsilon)}}\indic\{Y^{(\varepsilon)} = i\}\Ri]= \hat{q}(z^\varepsilon;i, \varepsilon).
    \end{align*}
    Therefore, our analysis of $\varepsilon Q^{(\varepsilon)}$ can rely on our existing near-independence and near-exponential equations \cref{near-ind-eq,near-exp-eq} for $Q^{(\varepsilon)}$.
    
    We break our result into three parts.
    In part 1, we show that $[\varepsilon Q^{(\varepsilon)} | Y^{(\varepsilon)} = 0]$ is asymptotically exponential or  
    \begin{align}
        \lim_{\varepsilon\to 0} \hat{q}(z^\varepsilon;0,\varepsilon) &= \frac{\pi^{(0)}_0}{1+M^{(0)}\ln z}. \label{eq:Thm-1-0res}
    \end{align}
    Next, in part 2, we show this for any $i\in \mathbb{N}_0,$ $[\varepsilon Q^{(\varepsilon)} | Y^{(\varepsilon)} = i]$ is asymptotically exponential or  
    \begin{align}
        \lim_{\varepsilon\to 0} \hat{q}(z^\varepsilon;i,\varepsilon) &= \frac{\pi^{(0)}_i}{1+M^{(0)}\ln z}.\tag{\ref{eq:Thm1-res-b}}
    \end{align}
    Finally, in part 3,  we show that $\varepsilon Q^{(\varepsilon)}$ (absent any conditioning on $Y^{(\varepsilon)}$) is asymptotically exponential or
    \begin{align*}
        \lim_{\varepsilon\to 0} \overline{q}(z^\varepsilon;\varepsilon) &= \frac{1}{1+M^{(0)}\ln z}. \tag{\ref{eq:Thm1-res-a}}
    \end{align*}
    \subsubsection{Part 1: Asymptotic Exponentiality Conditioned on State 0}

    First, we show that $[\varepsilon Q^{(\varepsilon)}|Y^{(0)}=0]$ is asymptotically exponential by proving \cref{eq:Thm-1-0res}.
    
    By \cref{ind_exp_lemma}, for any member of the parameterized family we have
    \begin{align*}
        \Q\Le(z;\varepsilon\Ri) &= \Qi\Le(z;0,\varepsilon\Ri)\frac{\bpi^{\Le(\varepsilon\Ri)}}{\pi_0^{\Le(\varepsilon\Ri)}} +\Qi\Le(z;0,\varepsilon\Ri)\frac{\bpi^{\Le(\varepsilon\Ri)}}{\pi_0^{\Le(\varepsilon\Ri)}}\Le( \boldsymbol{F}^{(\varepsilon)}\Le(1-z\Ri)+\mu^{\Le(\varepsilon\Ri)}\Le(1-\frac{1}{z}\Ri)\boldsymbol{I}\Ri)\boldsymbol{R}^{(\varepsilon)} + \overline{\boldsymbol{Y}}\Le(z;\varepsilon\Ri)\\
        \Qi\Le(z;0,\varepsilon\Ri) &= \pi_0^{\Le(\varepsilon\Ri)}\frac{1+U\Le(z;\varepsilon\Ri)}{1+M^{(0)}\ln \Le(z\Ri)/\varepsilon +V\Le(z;\varepsilon\Ri)}
    \end{align*}

    We begin by showing $U\Le(z^\varepsilon;\varepsilon\Ri) \to 0$. We can consider $U\Le(z^\varepsilon;\varepsilon\Ri)$ in two parts:
    \begin{align*}
        U\Le(z^\varepsilon;\varepsilon\Ri)&=\Le(-\frac{1-z^{-\varepsilon}}{\varepsilon\mu^{\Le(\varepsilon\Ri)}}\Ri)\cdot\Le(\overline{\boldsymbol{Y}}\Le(z^\varepsilon;\varepsilon\Ri)\Le(-z^\varepsilon \boldsymbol{F}^{(\varepsilon)}+\mu^{(\varepsilon)} \boldsymbol{I}\Ri)e^T\Ri)
    \end{align*}
    
    We analyze the two multiplicands separately. Because $-\frac{1-z^{-\varepsilon}}{\varepsilon\mu^{\Le(\varepsilon\Ri)}}\to -\frac{\ln\Le(z\Ri)}{\mu^{\Le(\varepsilon\Ri)}}$ and $|-\frac{\ln\Le(z\Ri)}{\mu^{\Le(\varepsilon\Ri)}}|<\infty$ for any $z>0$, if $\overline{\boldsymbol{Y}}\Le(z^\varepsilon;\varepsilon\Ri)\Le(-z^\varepsilon \boldsymbol{F}^{(\varepsilon)}+\mu^{(\varepsilon)} \boldsymbol{I}\Ri)e^T\to 0$ then $U\Le(z^\varepsilon;\varepsilon\Ri)\to 0$ as well.

    Expanding $\overline{\boldsymbol{Y}}\Le(z^\varepsilon;\varepsilon\Ri)\Le(-z^\varepsilon \boldsymbol{F}^{(\varepsilon)}+\mu^{(\varepsilon)} \boldsymbol{I}\Ri)e^T$, we have 3 distinct terms:
    \begin{align}
        &\overline{\boldsymbol{Y}}\Le(z^\varepsilon;\varepsilon\Ri)\Le(-z^\varepsilon \boldsymbol{F}^{(\varepsilon)}+\mu^{(\varepsilon)} \boldsymbol{I}\Ri)e^T\nonumber\\
        &=\Le(1-z^{-\varepsilon}\Ri)\Q\Le(z^\varepsilon;\varepsilon\Ri) \Le(\Le(-z^\varepsilon \boldsymbol{F}^{(\varepsilon)}+\mu^{\Le(\varepsilon\Ri)} \boldsymbol{I}\Ri)R\Ri)^2\Le(-z^\varepsilon \boldsymbol{F}^{(\varepsilon)}+\mu^{(\varepsilon)} \boldsymbol{I}\Ri)e^T\label{eq:Y_FI_e_expdd_1}\\
        &-\mu^{\Le(\varepsilon\Ri)}\Le(1-z^{\varepsilon}\Ri)\bpi^{\Le(\varepsilon\Ri)}_0\boldsymbol{R}^{(\varepsilon)}\Le(-z^\varepsilon \boldsymbol{F}^{(\varepsilon)}+\mu^{\Le(\varepsilon\Ri)} \boldsymbol{I}\Ri)\boldsymbol{R}^{(\varepsilon)}\Le(-z^\varepsilon \boldsymbol{F}^{(\varepsilon)}+\mu^{(\varepsilon)} \boldsymbol{I}\Ri)e^T \label{eq:Y_FI_e_expdd_2}\\
        &-\mu^{\Le(\varepsilon\Ri)}\bpi^{\Le(\varepsilon\Ri)}_0\boldsymbol{R}^{(\varepsilon)}\Le(-z^\varepsilon \boldsymbol{F}^{(\varepsilon)}+\mu^{(\varepsilon)} \boldsymbol{I}\Ri)e^T \label{eq:Y_FI_expdd_3}
    \end{align}

    Because the family of IMAP queues has a light-tailed phase process, Properties~\ref{item:light-tailed-1a}, \ref{item:light-tailed-1b}, and \ref{item:light-tailed-1c} of \cref{def:light-tailed} guarantee that each of \cref{eq:Y_FI_e_expdd_1,eq:Y_FI_e_expdd_2,eq:Y_FI_expdd_3}, respectively, are negligible as $\varepsilon\to 0$.

    Thus, $U(z^\varepsilon;\varepsilon) \to 0$ as $\varepsilon\to 0$.

    Next, let us show $V\Le(z^\varepsilon;\varepsilon\Ri)\to 0$. It is useful to reformulate $V\Le(z^\varepsilon; \varepsilon\Ri)$ by adding and subtracting $\varepsilon^{-1}\Le(1-z^{-\varepsilon}\Ri)$ and then carefully rearranging terms.
    \begin{align}
        &V\Le(z^\varepsilon; \varepsilon\Ri) = \varepsilon^{-1}\Le(1-z^{-\varepsilon}\Ri)\Le(-z^\varepsilon\frac{\overline{\lambda}^{\Le(\varepsilon\Ri)}}{\mu^{\Le(\varepsilon\Ri)}}\Ri)\nonumber\\
        &+ \varepsilon^{-1}\Le(1-z^{-\varepsilon}\Ri)\Le(\Le(\mu^{\Le(\varepsilon\Ri)}\Ri)^{-1}\bpi^{\Le(\varepsilon\Ri)}\Le(-z^\varepsilon \boldsymbol{F}^{(\varepsilon)}+\mu^{(\varepsilon)} \boldsymbol{I}^{\Le(\varepsilon\Ri)}\Ri)\boldsymbol{R}^{(\varepsilon)}\Le(-z^\varepsilon \boldsymbol{F}^{(\varepsilon)}+\mu^{(\varepsilon)} \boldsymbol{I}\Ri)e^T\Ri)\nonumber \\
        &- M^{(0)}\ln\Le(z\Ri)\nonumber\\
        &=\varepsilon^{-1}\Le(1-z^{-\varepsilon}\Ri) \Le(1-1-z^{\varepsilon}\frac{\lambdabar^{\Le(\varepsilon\Ri)}}{\mu^{\Le(\varepsilon\Ri)}}\Ri)\nonumber \\
        &+ \varepsilon^{-1}\Le(1-z^{-\varepsilon}\Ri)\Le(-1+\Le(\mu^{\Le(\varepsilon\Ri)}\Ri)^{-1}\bpi^{\Le(\varepsilon\Ri)}\Le(-z^\varepsilon \boldsymbol{F}^{(\varepsilon)}+\mu^{\Le(\varepsilon\Ri)} \boldsymbol{I}\Ri)\boldsymbol{R}^{(\varepsilon)}\Le(-z^\varepsilon \boldsymbol{F}^{(\varepsilon)}+\mu^{\Le(\varepsilon\Ri)} \boldsymbol{I}\Ri)e^T\Ri) \nonumber \\\
        &- M^{(0)}\ln\Le(z\Ri)\nonumber\\
        &=\varepsilon^{-1}\Le(1-z^{-\varepsilon}\Ri) \Le(1-z^{\varepsilon}\frac{\lambdabar^{\Le(\varepsilon\Ri)}}{\mu^{\Le(\varepsilon\Ri)}}\Ri) \label{eq:V-expdd-1}\\
        &+ \varepsilon^{-1}\Le(1-z^{-\varepsilon}\Ri)\Le(-1+\Le(\mu^{\Le(\varepsilon\Ri)}\Ri)^{-1}\bpi^{\Le(\varepsilon\Ri)}\Le(-z^\varepsilon \boldsymbol{F}^{(\varepsilon)}+\mu^{\Le(\varepsilon\Ri)} \boldsymbol{I}\Ri)\boldsymbol{R}^{(\varepsilon)}\Le(-z^\varepsilon \boldsymbol{F}^{(\varepsilon)}+\mu^{\Le(\varepsilon\Ri)} \boldsymbol{I}\Ri)e^T\Ri) \label{eq:V-expdd-2}\\
        &- M^{(0)}\ln\Le(z\Ri) \label{eq:V-expdd-3}
    \end{align}
    Let us consider \cref{eq:V-expdd-1} separately from the other terms of $V\Le(z^\varepsilon;\varepsilon\Ri)$, \cref{eq:V-expdd-2,eq:V-expdd-3}. 

    Beginning with \cref{eq:V-expdd-1}, we note that $\varepsilon^{-1}\Le(1-z^{-\varepsilon}\Ri)\to-\ln z$ (where $|-\ln z| < \infty$ for all $z\in (0,1)$) as $\varepsilon\to 0$. Next, using the fact that $\frac{\lambdabar^{(\varepsilon)}}{\mu^{(\varepsilon)}} = (1-\varepsilon)$ we can see that $\Le(1-z^{\varepsilon}\frac{\lambdabar^{\Le(\varepsilon\Ri)}}{\mu^{\Le(\varepsilon\Ri)}}\Ri) = \Le(1-z^{\varepsilon}(1-\varepsilon)\Ri)\to 0$ as $\varepsilon\to 0$. Thus, their product, \cref{eq:V-expdd-1}, is negligible as $\varepsilon\to 0$.

    Next, we move onto \cref{eq:V-expdd-2,eq:V-expdd-3}. We again note that $\varepsilon^{-1}\Le(1-z^{-\varepsilon}\Ri)\to-\ln z$. By property \ref{M_conv} of the heavy-traffic family conditions (\cref{def:HT_cond}), we can see that
    \[-1+\Le(\mu^{\Le(\varepsilon\Ri)}\Ri)^{-1}\bpi^{\Le(\varepsilon\Ri)}\Le(-z^\varepsilon \boldsymbol{F}^{(\varepsilon)}+\mu^{\Le(\varepsilon\Ri)} \boldsymbol{I}\Ri)\boldsymbol{R}^{(\varepsilon)}\Le(-z^\varepsilon \boldsymbol{F}^{(\varepsilon)}+\mu^{\Le(\varepsilon\Ri)} \boldsymbol{I}\Ri)e^T\to -M^{(0)}\]

    Thus, \cref{eq:V-expdd-2} approaches $M^{(0)}\ln z$ in the $\varepsilon\to 0$ limit and \cref{eq:V-expdd-2,eq:V-expdd-3} cancel in this limit.

    We therefore conclude that $V\Le(z^\varepsilon;\varepsilon\Ri)\to 0$.

    Having shown that $U\Le(z^\varepsilon;\varepsilon\Ri), V\Le(z^\varepsilon;\varepsilon\Ri)\to 0$, we have our desired result: \begin{equation}
        \Qi\Le(z^\varepsilon;0,\varepsilon\Ri) = \pi_0^{\Le(\varepsilon\Ri)} \frac{1+U\Le(z^\varepsilon;\varepsilon\Ri)}{1+M^{(0)}\ln z - V\Le(z^\varepsilon;\varepsilon\Ri)} \to \frac{\pi_0^{\Le(0\Ri)}}{1+M^{(0)}\ln z} \tag{\ref{eq:Thm-1-0res}}
    \end{equation}

    We have thus shown that the scaled queue length process is asymptotically exponential when conditioned on $Y^{\Le(\varepsilon\Ri)}=0$.

    \subsubsection{Part 2: Asymptotic Exponentiality Conditioned on State $i$}

    Next, we show that the scaled queue length process is asymptotically exponential when conditioned on $Y^{(\varepsilon)}=i$ any $i\in \mathbb{N}_0$.
    
    Using \cref{near-ind-eq}, for any $i\in \mathbb{N}_0$, we can express the $i^{\text{th}}$ partial $z$-transform as
     \begin{align}
         \Qi\Le(z^\varepsilon;i,\varepsilon\Ri) &= \Qi\Le(z^\varepsilon;0,\varepsilon\Ri)\frac{\pi^{\Le(\varepsilon\Ri)}_i}{\pi_0^{\Le(\varepsilon\Ri)}}
         \label{eq:Qii_expdd_1}\\
         &+\Le[\Qi\Le(z^\varepsilon;0,\varepsilon\Ri)\frac{\bpi^{\Le(\varepsilon\Ri)}}{\pi_0^{\Le(\varepsilon\Ri)}}\Le( \boldsymbol{F}^{(\varepsilon)}\Le(1-z^\varepsilon\Ri)+\mu^{\Le(\varepsilon\Ri)}\Le(1-z^{-\varepsilon}\Ri)\boldsymbol{I}\Ri)\boldsymbol{R}^{(\varepsilon)}\Ri]_i
         \label{eq:Qii_expdd_2}\\
         &+ \overline{Y}^i\Le(z^\varepsilon;\varepsilon\Ri)
         \label{eq:Qii_expdd_3)}.
     \end{align}

     If we can show that \cref{eq:Qii_expdd_2,eq:Qii_expdd_3)} in our expression for $\Qi\Le(z^\varepsilon;i,\varepsilon\Ri)$ are asymptotically negligible then we have our desired asymptotic independence result. Let us focus on the terms separately.
     We begin with \cref{eq:Qii_expdd_2}
     \begin{align*}
         \Le[\Qi\Le(z^\varepsilon;0,\varepsilon\Ri)\frac{\bpi^{\Le(\varepsilon\Ri)}}{\pi_0^{\Le(\varepsilon\Ri)}}\Le( \boldsymbol{F}^{(\varepsilon)}\Le(1-z^\varepsilon\Ri)+\mu^{\Le(\varepsilon\Ri)}\Le(1-z^{-\varepsilon}\Ri)\boldsymbol{I}\Ri)\boldsymbol{R}^{(\varepsilon)}\Ri]_i \\
         \quad= \frac{\Qi\Le(z^\varepsilon;0,\varepsilon\Ri)}{\pi_0^{\Le(\varepsilon\Ri)}}\cdot\Le[\Le(1-z^{-\varepsilon}\Ri)\bpi^{\Le(\varepsilon\Ri)}\Le(\mu^{\Le(\varepsilon\Ri)}I-z^\varepsilon \boldsymbol{F}^{(\varepsilon)}\Ri)\boldsymbol{R}^{(\varepsilon)}\Ri]_i
     \end{align*}

     Because $\frac{\Qi\Le(z^\varepsilon;0,\varepsilon\Ri)}{\pi_0^{\Le(\varepsilon\Ri)}}\to \frac{1}{1-M^{(0)}\ln z}$ and, by property \ref{item:light-tailed-2a} of our light-tailed phase process conditions, $\Le[\Le(1-z^{-\varepsilon}\Ri)\bpi^{\Le(\varepsilon\Ri)}\Le(\mu^{\Le(\varepsilon\Ri)}I-z^\varepsilon \boldsymbol{F}^{(\varepsilon)}\Ri)\boldsymbol{R}^{(\varepsilon)}\Ri]_i\to 0$, the second summand approaches $0$.

     Moving onto the third summand, we can expand $\overline{Y}^i\Le(z^\varepsilon;\varepsilon\Ri)$ to
     \begin{align}
        \overline{Y}^i\Le(z^\varepsilon;\varepsilon\Ri) 
        &=\Le[\Le(1-z^{-\varepsilon}\Ri)\Q\Le(z^\varepsilon;\varepsilon\Ri) \Le(\Le(-z^{\varepsilon} \boldsymbol{F}^{(\varepsilon)}+\mu^{\Le(\varepsilon\Ri)} \boldsymbol{I}\Ri)\boldsymbol{R}^{(\varepsilon)}\Ri)^2\Ri]_i \\
        &\quad- \Le[\mu^{\Le(\varepsilon\Ri)}\Le(1-z^{-\varepsilon}\Ri)\piidle^{\Le(\varepsilon\Ri)}\boldsymbol{R}^{(\varepsilon)}\Le(-z^{\varepsilon} \boldsymbol{F}^{(\varepsilon)}+\mu^{\Le(\varepsilon\Ri)} \boldsymbol{I}\Ri)R\Ri]_i \\
        &\quad - \Le[\mu^{\Le(\varepsilon\Ri)}\piidle^{\Le(\varepsilon\Ri)}\boldsymbol{R}^{(\varepsilon)}\Ri]_i
     \end{align}

     The light-tailed phase process conditions, specifically properties \ref{item:light-tailed-2a}, \ref{item:light-tailed-2b}, and \ref{item:light-tailed-2c}, provide that each of the three parts above go to $0$ as $\varepsilon\to 0$ and therefore $\overline{Y}^i\Le(z^\varepsilon;\varepsilon\Ri) \to 0$ as well.

     With these two terms shown to be negligible, we have our desired asymptotic independence result:
     For each $i$,
     \begin{equation} \label{eq:thm1-res-a}
     \lim_{\varepsilon\to 0}\Qi\Le(z^\varepsilon;i,\varepsilon\Ri) = \frac{\pi_i^{\Le(0\Ri)}}{1+M^{(0)}\ln z} \tag{\ref{eq:Thm1-res-a}}
     \end{equation}
     
    This expression exactly states that $\varepsilon Q^{\Le(\varepsilon\Ri)}$ and $Y^{\Le(\varepsilon\Ri)}$ are asymptotically independent and ${[\varepsilon Q^{\Le(\varepsilon\Ri)}|Y^{(0)}=i]}$ is asymptotically exponential with mean $M^{(0)}$.

    \subsubsection{Part 3: Asymptotic Exponentiality Without Conditioning}

    Because we have infinitely many modulation states $i$, our prior results do not immediately prove that $\varepsilon Q^{(\varepsilon)}$ is asymptotically exponential. In particular, if $\hat{q}(z^\varepsilon;i,\varepsilon)$ converges slower and slower for larger $i$, $\varepsilon Q^{(\varepsilon)}$ could fail to converge to the exponential distribution we'd otherwise expect it to. Thus, we show explicitly that the convergence to the exponential still occurs absent any conditioning.

    We begin by expressing $\overline{q}(z^\varepsilon;\varepsilon)$ in terms of $\hat{q}(z^\varepsilon;i,\varepsilon)$ and applying our near-independence equation (\cref{near-ind-eq}).

    \begin{align}
        \overline{q}(z^\varepsilon;\varepsilon) &= \sum_{i\in \mathbb{N}_0} \hat{q}(z^\varepsilon;\varepsilon, i)= \hat{q}(z^\varepsilon;\varepsilon)e^T\nonumber\\
        &=\Qi\Le(z^\varepsilon;0,\varepsilon\Ri)\frac{\bpi^{\Le(\varepsilon\Ri)}}{\pi_0^{\Le(\varepsilon\Ri)}}e^T \label{eq:532-1}\\
        &+\Qi\Le(z^\varepsilon;0,\varepsilon\Ri)\frac{\bpi^{\Le(\varepsilon\Ri)}}{\pi_0^{\Le(\varepsilon\Ri)}}\Le( \boldsymbol{F}^{(\varepsilon)}\Le(1-z^\varepsilon\Ri)+\mu^{\Le(\varepsilon\Ri)}\Le(1-z^{-\varepsilon}\Ri)\boldsymbol{I}\Ri)\boldsymbol{R}^{(\varepsilon)}e^T \label{eq:532-2}\\
        &+ \overline{\boldsymbol{Y}}\Le(z^\varepsilon;\varepsilon\Ri)e^T \label{eq:532-3}
    \end{align}

    Let us begin with \cref{eq:532-2}.
    \[\frac{\Qi\Le(z^\varepsilon;0,\varepsilon\Ri)}{\pi_0^{\Le(\varepsilon\Ri)}}\cdot\bpi^{\Le(\varepsilon\Ri)}\Le( \boldsymbol{F}^{(\varepsilon)}\Le(1-z^\varepsilon\Ri)+\mu^{\Le(\varepsilon\Ri)}\Le(1-z^{-\varepsilon}\Ri)\boldsymbol{I}\Ri)\boldsymbol{R}^{(\varepsilon)}e^T\]
    
    We can notice that $\frac{\Qi\Le(z^\varepsilon;0,\varepsilon\Ri)}{\pi_0^{\Le(\varepsilon\Ri)}} \to \frac{1}{1-M^{(0)}\ln z}$ (with $|\frac{1}{1-M^{(0)}\ln z}|<\infty$) by \cref{eq:thm1-res-a} and that the Property \ref{item:light-tailed-1d} of the light-tailed phase process (\cref{def:light-tailed}) conditions guarantee 
    \begin{align}
        {\bpi^{\Le(\varepsilon\Ri)}\Le( \boldsymbol{F}^{(\varepsilon)}\Le(1-z^\varepsilon\Ri)+\mu^{\Le(\varepsilon\Ri)}\Le(1-z^{-\varepsilon}\Ri)\boldsymbol{I}\Ri)\boldsymbol{R}^{(\varepsilon)}e^T\to 0}. 
    \end{align}
    Together, these imply that this second term is asymptotically negligible.

    Focusing on the \cref{eq:532-3},
    \begin{align}
        \overline{\boldsymbol{Y}}\Le(z^\varepsilon;\varepsilon\Ri)e^T &=\Le(1-z^{-\varepsilon}\Ri)\Q\Le(z^\varepsilon;\varepsilon\Ri) \Le(\Le(-z^{\varepsilon} \boldsymbol{F}^{(\varepsilon)}+\mu^{\Le(\varepsilon\Ri)} \boldsymbol{I}\Ri)\boldsymbol{R}^{(\varepsilon)}\Ri)^2e^T \label{eq:Ybar-e-1}\\
        &\quad- \mu^{\Le(\varepsilon\Ri)}\Le(1-z^{-\varepsilon}\Ri)\piidle^{\Le(\varepsilon\Ri)}\boldsymbol{R}^{(\varepsilon)}\Le(-z^{\varepsilon} \boldsymbol{F}^{(\varepsilon)}+\mu^{\Le(\varepsilon\Ri)} \boldsymbol{I}\Ri)\boldsymbol{R}^{(\varepsilon)}e^T \label{eq:Ybar-e-2}\\
        &\quad - \mu^{\Le(\varepsilon\Ri)}\piidle^{\Le(\varepsilon\Ri)}\boldsymbol{R}^{(\varepsilon)}e^T \label{eq:Ybar-e-3}
     \end{align}
     Each of \cref{eq:Ybar-e-1,eq:Ybar-e-2,eq:Ybar-e-3} approach to $0$ as $\varepsilon\to 0$ by properties \ref{item:light-tailed-1e}, \ref{item:light-tailed-1f}, and \ref{item:light-tailed-1g} of \cref{def:light-tailed}. Thus, \cref{eq:532-2}, like \cref{eq:532-3}, is asymptotically negligible and we need only consider the asymptotic behavior of \cref{eq:532-1}.

     In particular, because $\bpi^{(\varepsilon)} e^T =1$ for all $\varepsilon$,
     \begin{align*}
         \Qi\Le(z^\varepsilon;0,\varepsilon\Ri)\frac{\bpi^{\Le(\varepsilon\Ri)}}{\pi_0^{\Le(\varepsilon\Ri)}}e^T &= \frac{\Qi\Le(z^\varepsilon;0,\varepsilon\Ri)}{\pi_0^{\Le(\varepsilon\Ri)}}
     \end{align*}
     and therefore
     \begin{align}
         \lim_{\varepsilon\to 0} \overline{q}\Le(z^\varepsilon;\varepsilon\Ri) &= \lim_{\varepsilon\to 0} \frac{\Qi\Le(z^\varepsilon;0,\varepsilon\Ri)}{\pi_0^{\Le(\varepsilon\Ri)}}= \frac{1}{1+M^{(0)}\ln z} \tag{\ref{eq:Thm1-res-a}}
     \end{align}
     With \cref{eq:thm1-res-a} proven, we now have completed our proof for \cref{Thm1}: that $\varepsilon Q$ is asymptotically exponentially distributed and is asymptotically independent of $Y$.
\end{proof}

\subsection{Birth-Death Modulation Process Fundamental Bounds}
Before we explicitly consider the asymptotic behavior of any birth-death families, we establish a few fundamental bounds on the elements in $\boldsymbol{R}^{(\varepsilon)}$ and $\bpi^{\Le(\varepsilon\Ri)}$ in \Cref{FundBDBds}. These form the backbone of \cref{DerBDBds} and, through it, \cref{Thm2}.

    \begin{lemma}
        \label{FundBDBds}
        A birth-death family that meets our heavy traffic conditions and is uniformly stable and rate-bounded has the following bounds for some $C_1, C_2, C_3$:
        \begin{enumerate}[(a)]
        \item $\pi^{\Le(\varepsilon\Ri)}_n \leq C_1a^n$ for all $n\in \mathbb{N}, \varepsilon\in \mathcal{E}$ \label{eq:birth-bd-1}
        \item $r^{\Le(\varepsilon\Ri)}_{i,j} \leq C_2 \tilde{b}^{\Le(j-i\Ri)^+}$ where $\tilde{b}:=\frac{b}{1-b}<1$ for all $i,j\in \mathbb{N}_0, \varepsilon\in \mathcal{E}$ \label{eq:birth-bd-2}
        \item $\Le|[\boldsymbol{R}^{(\varepsilon)}e^T]_i\Ri| \leq C_3 i$ for all $i\in \mathbb{N}_0, \varepsilon\in \mathcal{E}$ \label{eq:birth-bd-3}
        \end{enumerate}
        where $a,b$ are defined in the uniform stability definition (\cref{def:unifstab}).
    \end{lemma}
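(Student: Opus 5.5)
The plan is to exploit the explicit structure of birth-death chains: stationary probabilities are products of up/down rate ratios, and hitting and occupation quantities reduce to one-dimensional gambler's-ruin comparisons. The comparisons above level $N$ come from uniform stability (\cref{def:unifstab}). The finitely many states below $N$ are handled separately and only ever contribute constants.

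For (a), I will use the detailed-balance formula $\pi^{(\varepsilon)}_n=\pi^{(\varepsilon)}_0\prod_{k=0}^{n-1}\alpha^{(\varepsilon)}_{k,k+1}/\alpha^{(\varepsilon)}_{k+1,k}$. For $n\ge N$, condition 1 of uniform stability gives $\pi^{(\varepsilon)}_n\le \pi^{(\varepsilon)}_N a^{n-N}$. Since $\pi^{(\varepsilon)}_N\le 1$, this yields $\pi^{(\varepsilon)}_n\le a^{-N}a^n$. For $n<N$ we trivially have $\pi_n\le 1\le a^{-N}a^n$. Hence $C_1=a^{-N}$ works uniformly in $\varepsilon$.

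For (b), I will use the interpretation of $r_{i,j}$ from \cref{R_exst_lemma} as the expected time in $j$ before hitting $0$. Write $r_{i,j}=\p_i(\text{hit } j \text{ before } 0)\, r_{j,j}$. This splits the bound into two pieces.

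The first piece is the hitting probability. For $j>i\ge N$, the embedded jump chain above $N$ steps up with probability at most $b<\tfrac12$, by condition 2 of uniform stability. Coupling with a simple random walk with up-probability $b$ bounds the probability of ever climbing from $i$ to $j$ by $(b/(1-b))^{j-i}=\tilde b^{\,j-i}$. For $i<N$, the probability of reaching $j$ is at most that of reaching $j$ from $N$, so we get $\tilde b^{\,j-N}\le \tilde b^{-N}\tilde b^{\,j-i}$.

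The second piece is $r_{j,j}$. It equals the mean holding time at $j$, which is at most $T_{\max}$ by rate-boundedness, times the expected number of visits. The expected number of visits is $1/\p_j(\text{hit } 0 \text{ before returning to } j)$, so I need a lower bound on this escape probability that is uniform in $j$ and $\varepsilon$. For $j>N$, the chain first steps down with probability at least $1-b$. The biased-walk comparison then gives probability at least $1-\tilde b$ of reaching $N$ before $j$. From $N$, I need a uniform positive probability $c$ of reaching $0$ before $N+1$.

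For (c), $[\boldsymbol{R}e^T]_i$ is the expected hitting time of $0$ from $i$. Because the chain is skip-free, this is $\sum_{k=1}^{i}\tau_k$, where $\tau_k$ is the expected passage time from $k$ to $k-1$. For $k> N$, comparison with the downward-biased walk bounds the expected number of jumps by $1/(1-2b)$. Each jump takes expected time at most $T_{\max}$, so $\tau_k\le T_{\max}/(1-2b)$. For $k\le N$, $\tau_k$ is again a quantity involving only the bottom block, and is bounded by a constant. Summing gives $C_3 i$.

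The main obstacle is the uniformity in $\varepsilon$ for the finitely many states $0,\dots,N$. This covers the constant $c$ in (b) and the bound on $\tau_k$ for $k\le N$ in (c). Rate-boundedness alone does not obviously prevent a downward rate from vanishing as $\varepsilon$ varies. My first attempt will use the heavy-traffic conditions (\cref{def:HT_cond}). These give convergence of $r^{(\varepsilon)}_{i,j}$ to finite limits $r^{(0)}_{i,j}$ for each fixed pair. Since only finitely many pairs with indices at most $N+1$ are involved, the $r^{(\varepsilon)}_{i,j}$ stay bounded for $\varepsilon$ near $0$, and the bottom-block constants follow from them. For $\varepsilon$ bounded away from $0$, I would appeal to ergodicity at each $\varepsilon$ and, if needed, shrink $\mathcal{E}$ so that only the near-zero regime matters. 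If that fails, I would instead try to extract lower bounds on downward rates from the bounded arrival rate and holding time.
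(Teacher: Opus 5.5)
Your (a) is the paper's argument. Your (b) has the same architecture as the paper's three cases: $r_{i,j}=\mathbb{P}_i(\text{hit } j \text{ before } 0)\, r_{j,j}$, a geometric number of visits to $j$ whose escape probability is at least $(1-b)(1-\tilde b)$ times a bottom-block factor, and gambler's-ruin comparisons above $N$. Your (c) uses skip-free passage times with $\tau_k\le T_{\max}/(1-2b)$ for $k\ge N$. That is a valid and cleaner alternative to the paper's route of summing the bound in (b), since it does not use (b).

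The step that fails is the hitting-probability bound in (b). Coupling with the $b$-biased walk controls the chain only while it stays at levels $\ge N$. Once it drops to $N-1$, where the up-probabilities are unconstrained, it can climb back to $N$ and make a fresh attempt at $j$, possibly many times before absorption at $0$. Hence $\mathbb{P}_N(\text{hit } j \text{ before } 0)\le \tilde b^{\,j-N}$ is false in general. For $j=N+1$ the left side is $1-c$, which exceeds $\tilde b$ once the states below $N$ push the chain back up strongly enough. The paper handles exactly this with a renewal at level $N$, namely $\mathbb{P}(N\to j/0)\le \mathbb{P}(N\to j/0,N)/\mathbb{P}(N\to 0/j,N)$. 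In your notation the same argument gives $\mathbb{P}_N(\text{hit } j \text{ before } 0)\le \frac{1-c}{c}\,\tilde b^{\,j-N-1}$. So the fix costs a factor $1/c$ and nothing beyond the constant you already need.

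You are right that the stated hypotheses do not give uniformity in $\varepsilon$ below $N$, and the paper's proof does not supply it either. Its $\underline{p}$ contains $\prod_{m=1}^{N}\alpha^{(\varepsilon)}_{m,m-1}/(\alpha^{(\varepsilon)}_{m,m-1}+\alpha^{(\varepsilon)}_{m,m+1})$ and is silently treated as $\varepsilon$-free. That holds for the hybrid-abandonment family but is an extra assumption in general. Your proposed repairs do not close it as written:
\begin{itemize}
\item Ergodicity at each $\varepsilon$ gives only pointwise finiteness, not a bound uniform over $[\delta,\varepsilon_{\max})$. You would therefore prove the lemma only on a shrunken $\mathcal{E}$; that is harmless downstream, but it is a different statement.
\item Bounded $r^{(\varepsilon)}_{i,j}$ for $i,j\le N+1$ does not bound $c$ below. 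The renewal computation at $N$ gives exactly $c=1/(1+\alpha_{N,N+1}\,r_{N,N})$, and nothing in the hypotheses bounds $\alpha_{N,N+1}$ above. Rate-boundedness only bounds total rates below, and the arrival-rate bound constrains $\boldsymbol{F}$, not $\boldsymbol{A}$. Occupation times convert into probabilities only when holding times are bounded below.
\end{itemize}

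There are two ways to close this.
\begin{itemize}
\item Add the hypothesis the paper uses implicitly: down-step probabilities in states $1,\dots,N$ bounded below uniformly in $\varepsilon$. This gives $c\ge\prod_{m=1}^N \alpha_{m,m-1}/(\alpha_{m,m-1}+\alpha_{m,m+1})$. Your $\tau_k$ for $k<N$ then follow from the recursion $\tau_k=(h_k+u_k\tau_{k+1})/d_k$, where $h_k$ is the mean holding time and $u_k,d_k$ are the up and down jump probabilities. Note that these passages are not confined to the bottom block, as your sketch suggests.
\item Bypass probabilities entirely. Birth-death reversibility gives $r_{i,j}=(\pi_j/\pi_i)\,r_{i,i}$ for $0<i<j$. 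The explicit formula $r_{j,j}=\pi_j\sum_{k=1}^{j}1/(\pi_k\alpha_{k,k-1})$ gives $r_{j,j}\le r_{N,N}+T_{\max}/((1-a)(1-b))$ for $j\ge N$. This reduces everything to the finitely many $r_{i,j}$ with $i,j\le N$, which the heavy-traffic condition keeps bounded near $\varepsilon=0$. The price is decay rate $a$ rather than $\tilde b$ in (b), which is still enough for Lemma~\ref{DerBDBds}.
\end{itemize}
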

    \begin{proof}
    Let us prove the three bounds in order.
    \begin{enumerate}[(a)]
        \item $\pi^{\Le(\varepsilon\Ri)}_n \leq C_1a^n$.
        \label{pi_bd}

        The bounded rate conditions (\cref{def:rtbdd}) guarantee that there is some $a\in (0,1)$ and $N$ such that for all $n\geq N$, $\frac{\alpha^{\Le(\varepsilon\Ri)}_{n, n+1}}{\alpha^{\Le(\varepsilon\Ri)}_{n+1,n}}\leq a$ and therefore ${\pi^{\Le(\varepsilon\Ri)}_n = \pi_N^{\Le(\varepsilon\Ri)} \prod_{m=N}^n \frac{\alpha^{\Le(\varepsilon\Ri)}_{n, n+1}}{\alpha^{\Le(\varepsilon\Ri)}_{n+1,n}} \leq \pi^{\Le(\varepsilon\Ri)}_N a^{n-N}}$. Letting $C_1= a^{-N}$, this bound trivially holds for all for all $n\leq N$ and we have our desired bound for all $n$.

        This bound also holds for $\pi^{(\varepsilon)}_{n,0}$ and $\Qi(z;n)$ as $\pi^{(\varepsilon)}_{n,0}\leq \pi^{(\varepsilon)}_n$ and $\Qi(z;n)\leq \pi^{(\varepsilon)}_n$.

        \item $r^{\Le(\varepsilon\Ri)}_{i,j} \leq C_2 \tilde{b}^{\Le(j-i\Ri)^+}$.
        \label{rijbd}

        It is useful to consider this bound in three cases: when $i= j$, when $i\geq j$, and when $i<j$.
        
        To simplify notation we let $\p(x_1\to x_2\to \dots \to x_n/y_1,y_2, \dots, y_m)$ be the probability that the phase process begins in state $x_1$ and transitions (over any number of steps) to state $x_2$ \dots and then finally to state $x_n$ without entering states $y_1, y_2,\dots, y_m$.
        \begin{enumerate}

            \item Case 1: $i= j$.
            
            Starting in state $i$ there is some probability, $p_i$, that the process reaches state $0$ before returning to state $i$. The number of visits to state $i$ is distributed as a shifted geometric variable, $K_{i,i}\sim\text{Geom}\Le(p_i\Ri)$ (where $K_{i,i} \in \{1, 2, \dots\}$). Thus if we can uniformly lower bound $p_i$ for all $i$ and $\varepsilon$ then we can uniformly upper bound $r^{(\varepsilon)}_{i,i}$.

            Let us start with the case when $i\leq N$. In this case we can naively bound the probability of moving from state $i$ to state $0$ without returning to $i$ by the probability of transitioning downward $i$ times in a row: ${p_{i\downarrow0}:=\prod_{m=1}^i \frac{\alpha_{m,m-1}}{\alpha_{m,m-1}+\alpha_{m,m+1}}}$.

            When $i > N$ we can break the process of moving from $i$ to $0$ into two components: moving from $i$ to $N$ without returning to $i$ and then moving from $N$ to $0$ without returning to $i$. We already have a lower bound on the probability of transitioning from $N$ to $0$ without returning to state $i$, ${p_{N\downarrow 0} =\prod_{m=1}^N \frac{\alpha^{(\varepsilon)}_{m,m-1}}{\alpha_{m,m-1}^{(\varepsilon)}+\alpha_{m,m+1}^{(\varepsilon)}}}$, so we only need to bound the probability $\p(i\to N/i)$.

            To move from $i$ to $N$ the process has to first transition downward from $i$ to $i-1$, which occurs with probability at least $1-b$ by our uniform stability conditions. From there we can think of the process as akin to a gambler's ruin which either reaches state $i$ or state $N$ first.

            Consider a biased gambler's ruin on $\{N, N+1, \dots, i\}$ with ``success" probability $b$ starting in state $i-1$. The probability the gambler runs out of money (i.e., reaches state $N$ before state $i$) is $\frac{1-\Le(\frac{b}{1-b}\Ri)^1}{1-\Le(\frac{b}{1-b}\Ri)^{i-N}}$. We can uniformly upper bound this probability for any $i$ by $1-\frac{b}{1-b}>0$.

            Because the probability of transitioning upward for any state $i\geq N$ is bounded above by $b$, a coupling argument shows that the probability of the process reaching state $i$ before state $N$ from state $i-1$ has the same upper bound of $1-\frac{b}{1-b}$.

            Letting \[\underline{p}:=\Le(1-b\Ri)\Le(1-\frac{b}{1-b}\Ri)\prod_{m=1}^N \Le(\frac{\alpha_{m,m-1}}{\alpha_{m,m-1}+\alpha_{m,m+1}}\Ri)\] we have for any state $i$, ${p_i \geq \underline{p}>0}$
            and therefore 
            \[r_{i,i} \leq T_{\text{max}} \frac{1-\underline{p}}{\underline{p}}\]
            where $T_{\text{max}}$ is the maximum expected time in state for any state as defined in \cref{def:rtbdd}.
            \item Case 2: $i>j$.

            Next we consider the case when $i>j$. Because the phase process is Markovian, if the process enters state $j$ after starting in state $i$ then the expected time in state $j$ before reaching state $0$ is the same as if the process had started in state $j$. With probability $1$, the phase process started in state $i$ eventually reaches state $0$ so, as a birth-death process, it must pass through state $j$ at least once on the way to $0$. Therefore,
            \[r_{i,j}=r_{j,j}\leq T_{max} \frac{1-\underline{p}}{\underline{p}}.\]

            \item Case 3: $i<j$.

            Lastly we consider the case when $i<j$. In this case, the phase process might reach $0$ without ever going through state $j$. If we can bound the probability that the process started in $i$ visits $j$ before $0$ then we can bound $r_{i,j}$.

            Consider the probability in two cases: from state $i$ to state $j$ the process either transitions through state $N$ or not:
            \[\p(i\to j/0) = \p(i\to j/0,N) + \p(i\to N\to j/0)\]

            We can analyze these two possibilities separately. The probability the process moves from from $i$ to $j$ without reaching state $N$ is bounded by $(\frac{b}{1-b})^{j-i}$ by coupling with a biased gambler's ruin.

            Next consider the probability of touching state $N$ but not state $0$ on the way to state $j$ from state $i$. We can immediately upper bound this by the probability of simply moving from state $N$ to state $j$ from state $N$.
            \[\p(i\to N\to j/0) \leq \p(N\to j/0)\]

            We can also notice that the probability that the system reaches $j$ before $0$ after starting in $N$ with some number of returns is the same as the probability that, given this is the last time the system is in $N$, the system reaches state $j$ before $0$. Thus
            \[\p(N\to j/0) = \frac{\p(N\to j/0,N)}{\p(N\to j/0, N) + \p(N\to 0/j,N)} \leq \frac{\p(N\to j/0,N)}{\p(N\to 0/j,N)}\]

            We thus have a bound in terms of two easier to analyze quantities: $\p(N\to j/0,N)$ and $\p(N\to 0/j,N)$. We have an existing lower bound on $\p(N\to 0/j,N)$ in $\underline{p} = \prod_{i=1}^{N} \frac{\alpha_{i, i-1}}{\alpha_{i,i-1}+\alpha_{i, i+1}}$. We can also use our gambler's ruin technique again to upper bound $\p(N\to j/0,N)$. Once the process moves from state $N$ to $N+1$ (with probability no more than $b$), we can couple the process with a gambler's ruin on $\{N, N+1, N+2, \dots, j\}$. The probability the gambler's ruin reaches state $j$ can be upper bounded by $(\frac{b}{1-b})^{j-N-1}$.

            Taken together we have a bound on the probability of reaching state $j$,
            \begin{align*}
                \p(i\to j/0) &\leq (\frac{b}{1-b})^{j-i} + \frac{1}{\underline{p}} (\frac{b}{1-b})^{j-N-1},
            \end{align*}
            and can find our final bound
            \begin{align*}
                r_{i,j}^{(\varepsilon)} &\leq T_{max} \frac{1-\underline{p}}{\underline{p}}\Le(\Le(\frac{b}{1-b}\Ri)^{j-i} + \frac{1}{\underline{p}} \Le(\frac{b}{1-b}\Ri)^{j-N-1}\Ri)\\
                &= T_{max} \frac{1-\underline{p}}{\underline{p}}\Le(1 + \frac{1}{\underline{p}} \Le(\frac{b}{1-b}\Ri)^{i-N-1}\Ri) \Le(\frac{b}{1-b}\Ri)^{j-i}\\
                &\leq T_{max} \frac{1-\underline{p}}{\underline{p}}\Le(1 + \frac{1}{\underline{p}} \Le(\frac{b}{1-b}\Ri)^{-N-1}\Ri) \Le(\frac{b}{1-b}\Ri)^{j-i}
            \end{align*}

        \end{enumerate}

        Taking these three cases together we have our desired bound that for any $i,j$,
        \begin{align*}
            r_{i,j}^{(\varepsilon)} \leq C_2 (\frac{p}{1-p})^{(j-i)^+},
        \end{align*}
        where $C_2 :=T_{max} \frac{1-\underline{p}}{\underline{p}}\Le(1 + \frac{1}{\underline{p}} \Le(\frac{b}{1-b}\Ri)^{-N-1}\Ri) \geq T_{max} \Le(\frac{b}{1-b}\Ri)$.
        
        \item $\Le|[\boldsymbol{R}^{(\varepsilon)}e^T]_i\Ri| \leq C_3 i$.
        \label{ReTbd}
        \begin{align*}
            \Le|[\boldsymbol{R}^{(\varepsilon)}e^T]_i\Ri| &= \Le|\sum_{j=0}^\infty r_{i,j}^{\Le(\varepsilon\Ri)}\Ri|
            \leq \Le|\sum_{j=0}^{i-1} r_{i,j}^{\Le(\varepsilon\Ri)}\Ri| + \Le|\sum_{j=i}^\infty r_{i,j}^{\Le(\varepsilon\Ri)}\Ri|
        \end{align*}
            Using property \ref{eq:birth-bd-2} we can bound $r_{i,j}^{(\varepsilon)}$ to find that
        \begin{align*}
            \Le|[\boldsymbol{R}^{(\varepsilon)}e^T]_i\Ri| &\leq C_2 i + C_2 \sum_{j=i}^\infty a ^{j-i}
            = C_2i + C_2 \frac{1}{1-a}
            \leq C_3 i,
        \end{align*}
        where $C_3:=C_2\Le(1 + \frac{1}{1-a}\Ri)$.

    \end{enumerate}
    We thus have our desired bounds on $\pi_n^{(\varepsilon)}$, $r_{i,j}^{(\varepsilon)}$, and $\Le|[\boldsymbol{R}^{(\varepsilon)}e^T]_i\Ri|$.
    \end{proof}

\subsection{Birth-Death and Light-Tailed Modulation Processes}
\label{BDandLTMods}
With the bounds in \cref{FundBDBds} proven, we are ready to prove in \cref{DerBDBds} that the uniform stability and rate bounded conditions imply the light-tailed phase process conditions when we restrict our attention to birth-death families meeting our heavy traffic conditions.
\begin{lemma}
    \label{DerBDBds}
    A birth-death heavy-traffic family which is uniformly stable and rate bounded satisfies the light-tailed phase process conditions.
\end{lemma}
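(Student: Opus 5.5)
The plan is to verify each of the eleven conditions of \cref{def:light-tailed} directly from the three bounds of \cref{FundBDBds}. These are $\pi^{(\varepsilon)}_n\le C_1a^n$, $r^{(\varepsilon)}_{i,j}\le C_2\tilde b^{(j-i)^+}$ and $[\boldsymbol{R}^{(\varepsilon)}e^T]_i\le C_3 i$, and all are uniform in $\varepsilon$. They are combined with rate-boundedness (\cref{def:rtbdd}), so that every entry of $\mu^{(\varepsilon)}\boldsymbol{I}-z^\varepsilon\boldsymbol{F}^{(\varepsilon)}$ is bounded in absolute value by a constant $K$ independent of $\varepsilon$. In the birth-death case this matrix is tridiagonal. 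Its row sums of absolute values are at most $\mu^{(\varepsilon)}+\hat\lambda$, with $\mu^{(\varepsilon)}$ bounded near $\varepsilon=0$ by the heavy-traffic conditions.

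Every condition then has the form (scalar prefactor)$\times$(vector)$\times$(products of $\boldsymbol{D}:=\mu\boldsymbol{I}-z^\varepsilon\boldsymbol{F}$ and $\boldsymbol{R}$)$\times$($e^T$ or a coordinate). The scalar prefactors are $(1-z^{\pm\varepsilon})=O(\varepsilon)$, or else they contain $\piidle^{(\varepsilon)}$, whose total mass is $\varepsilon$.

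\textbf{Step 1: transfer estimates through $\boldsymbol{D}$ and $\boldsymbol{R}$.}
\begin{itemize}
\item Left-multiplication by a vector $x$ with $|x_n|\le C a^n$ (valid for $\bpi$, $\Q(z^\varepsilon)$ and $\piidle$ by the remark after bound (a)) gives $x\boldsymbol{D}$ with the same geometric decay, since $\boldsymbol{D}$ is tridiagonal and bounded.
\item For $x\boldsymbol{R}$ we have $[x\boldsymbol{R}]_j\le C_2\sum_i Ca^i\tilde b^{(j-i)^+}$, which is bounded by a uniform constant times $\max(a,\tilde b)^j\,(j+1)$. So geometric decay (with a slightly larger ratio) is preserved.
\item Right-multiplication: $\boldsymbol{D}e^T$ is bounded, $\boldsymbol{R}\boldsymbol{D}e^T$ grows at most like $C_3' i$, and $\boldsymbol{R}\boldsymbol{D}\boldsymbol{R}\boldsymbol{D}e^T$ grows at most polynomially in $i$. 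Polynomial growth against geometric decay always sums to a finite constant.
\end{itemize}
Consequently every bilinear expression $x(\boldsymbol{D}\boldsymbol{R})^k e^T$ and every coordinate $[x(\boldsymbol{D}\boldsymbol{R})^k]_i$ appearing in \cref{def:light-tailed} is bounded uniformly in $\varepsilon$ for fixed $z$.

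\textbf{Step 2: use the small prefactors.}
\begin{itemize}
\item Conditions 1a, 1b, 1d, 1e, 1f and 2a, 2c carry a factor $(1-z^{\pm\varepsilon})\to0$, times a quantity that Step 1 bounds uniformly, so they vanish.
\item Conditions 1c, 1g and 2d involve $\piidle^{(\varepsilon)}$ (I read $\bpi_0$ there as $\piidle$). For these I bound $|\piidle\boldsymbol{R}\boldsymbol{D}e^T|\le \sum_n \pi_{n,0}\,KC_3 n$. Using $\pi_{n,0}\le\min(\varepsilon,C_1a^n)$, the sum splits at $n\approx\log(1/\varepsilon)$: the head is $O(\varepsilon\log^2(1/\varepsilon))$ and the tail is $O(\varepsilon)$, so the whole expression tends to $0$.
\item Coordinatewise, $[\piidle\boldsymbol{R}]_i\le C_2\sum_n\pi_{n,0}\tilde b^{(i-n)^+}$, which tends to $0$ by the same split.
\end{itemize}

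\textbf{Step 3: condition 2b, the main obstacle.} Condition 2b has no small prefactor: it asks that $[\Q(z^\varepsilon)(\boldsymbol{D}\boldsymbol{R})^2]_i\to0$, which a mere boundedness argument cannot give. My plan is to exploit the near-inverse identity $\boldsymbol{A}\boldsymbol{R}x=x-\frac{x_0}{\pi_0}\bpi$ together with the balance equation for $\Q$ from the proof of \cref{ind_exp_lemma}. That equation expresses $\Q(z)\boldsymbol{A}$ as $(1-1/z)$ times terms in $\Q$ and $\piidle$. Substituting it shows that the relevant combination is either $O(1-z^{-\varepsilon})$ or controlled by $\piidle$.

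If that route fails, I would instead check whether the factor $(1-z^{-\varepsilon})$ was dropped in the statement of 2b. The corresponding term of $\overline{\boldsymbol{Y}}$ in \cref{ind_exp_lemma} carries it, so with that factor restored, 2b is handled exactly like 1e via Steps 1 and 2. I expect this bookkeeping to be the delicate point; all other conditions reduce to Steps 1 and 2 with dominated convergence from the uniform geometric bounds.
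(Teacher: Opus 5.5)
Your proposal follows the paper's proof of this lemma. The appendix first derives the transferred bounds
$|[\bpi^{(\varepsilon)}(\mu^{(\varepsilon)}\boldsymbol{I}-z^\varepsilon\boldsymbol{F}^{(\varepsilon)})]_j|\le\tilde C_1a^j$,
$|[\boldsymbol{R}^{(\varepsilon)}(\mu^{(\varepsilon)}\boldsymbol{I}-z^\varepsilon\boldsymbol{F}^{(\varepsilon)})]_{i,j}|\le\tilde C_2\tilde b^{(j-i)^+}$ and
$|[\boldsymbol{R}^{(\varepsilon)}(\mu^{(\varepsilon)}\boldsymbol{I}-z^\varepsilon\boldsymbol{F}^{(\varepsilon)})e^T]_i|\le\tilde C_3 i$, which is your Step 1. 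It then disposes of each condition in one of two ways:
\begin{itemize}
\item a vanishing prefactor $(1-z^{\pm\varepsilon})$ times a uniformly bounded sum of the form $\sum_i a^iO(i^2)$;
\item for the $\piidle$ terms 1c and 1g, a split of the sum at $\log(1/\varepsilon)$.
\end{itemize}
For 2d it just uses $r_{i,j}\le r_{j,j}$ to get the bound $\varepsilon\mu^{(\varepsilon)}r^{(\varepsilon)}_{j,j}$. For 2b it does exactly your fallback: it verifies the condition with the factor $\mu^{(\varepsilon)}(1-z^{-\varepsilon})$ present. That is also the form in which the term enters $\overline{Y}^i$ in Part 2 of the proof of Theorem~\ref{Thm1}.

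You should discard your primary route for 2b rather than try it, because without the prefactor the quantity does not vanish in general. Whenever the conclusion of Theorem~\ref{Thm1} holds, it converges to $[\bpi^{(0)}((\mu^{(0)}\boldsymbol{I}-\boldsymbol{F}^{(0)})\boldsymbol{R}^{(0)})^2]_i/(1+M^{(0)}\ln z)$. Already for a two-phase Markov-modulated input with $\mu=\lambdabar$ one computes $[\bpi((\mu\boldsymbol{I}-\boldsymbol{F})\boldsymbol{R})^2]_1=\pi_0^2\pi_1(\lambda_1-\lambda_0)^2/\alpha_{1,0}^2\neq0$. So no manipulation of the balance equation can give the $O(1-z^{-\varepsilon})$ or $\piidle$-controlled form you anticipate.

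One small correction elsewhere. With the cut at $\log(1/\varepsilon)$, your tail in Step 2 is $O(\log(1/\varepsilon)\,\varepsilon^{\ln(1/a)})$, not $O(\varepsilon)$, for general $a$. It still tends to $0$, which is all you need.
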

We provide a full proof in Appendix~\ref{DerDBBds_proof}. We first prove extensions of the inequalities in \cref{FundBDBds} to the quantities $\bpi^{(\varepsilon)} \Le(\mu^{(\varepsilon)} \boldsymbol{I}-z^\varepsilon \boldsymbol{F}^{(\varepsilon)}\Ri)$ and $\boldsymbol{R}^{(\varepsilon)} \Le(\mu^{(\varepsilon)} \boldsymbol{I}-z^\varepsilon \boldsymbol{F}^{(\varepsilon)}\Ri)$. In particular we show that just like we have the bounds
\begin{align*}
            \pi^{\Le(\varepsilon\Ri)}_n &\leq C_1a^n\\
        r^{\Le(\varepsilon\Ri)}_{i,j} &\leq C_2 \tilde{b}^{\Le(j-i\Ri)^+}\\
        \Le|[\boldsymbol{R}^{(\varepsilon)}e^T]_i\Ri| &\leq C_3 i
\end{align*}
for appropriate constants $C_1, C_2, C_3$ defined in \cref{FundBDBds}, we will also have bounds
\begin{align*}
    \Le|[\pi^{\Le(\varepsilon\Ri)}\Le(\mu^{\Le(\varepsilon\Ri)}\boldsymbol{I}-z^\varepsilon \boldsymbol{F}^{(\varepsilon)}\Ri)]_j\Ri| &\leq \tilde{C}_1a^j\\
    \Le|[\boldsymbol{R}^{(\varepsilon)}\Le(\mu^{\Le(\varepsilon\Ri)}\boldsymbol{I}-z^\varepsilon \boldsymbol{F}^{(\varepsilon)}\Ri)]_{i,j}\Ri| &\leq \tilde{C}_2\tilde{b}^{\Le(j-i\Ri)^+}\\
    \Le|[\boldsymbol{R}^{(\varepsilon)}\Le(\mu^{\Le(\varepsilon\Ri)}\boldsymbol{I}-z^\varepsilon \boldsymbol{F}^{(\varepsilon)}\Ri)e^T]_{i}\Ri| &\leq \tilde{C}_3i
\end{align*}
for appropriate $\tilde{C}_1, \tilde{C}_2, \tilde{C}_3$ defined in Appendix~\ref{DerDBBds_proof}.

We then use the fundamental bounds and our new derived bounds to show that each of the necessary quantities in the light-tailed phase process definition are asymptotically negligible.

\subsection{Asymptotic Independence and Exponentiality for Birth-death Modulation Processes}
With \cref{DerBDBds} proven, we can directly apply \cref{Thm1} to our birth-death heavy-traffic families.

\begin{reptheorem}{Thm2}
\label{Full_Thm2}
    In a family of IMAP queues with birth-death phase processes that meets our heavy traffic conditions, if the family is uniformly stable and rate bounded then it is also light-tailed. Consequently $\varepsilon Q^{(\varepsilon)}$ is asymptotically exponential with mean $M^{(0)}$ (where $M^{(0)}$ is defined in  \cref{Rdefn}) and asymptotically independent from $Y^{(\varepsilon)}$ in the $\varepsilon\to 0$ limit.

    Equivalently,
    \begin{align}
        \lim_{\varepsilon\to 0} \overline{q}(z;\varepsilon) &= \frac{1}{1+M^{(0)}\ln z}\tag{\ref{eq:Thm1-res-a}}\\
        \lim_{\varepsilon\to 0} \hat{q}(z;i,\varepsilon) &= \frac{\pi^{(0)}_i}{1+M^{(0)}\ln z}\tag{\ref{eq:Thm1-res-b}}
    \end{align}
\end{reptheorem}
\begin{proof}
    \cref{DerBDBds} proves that any birth-death convergent heavy-traffic family that is uniformly stable and rate bounded satisfies the light-tailed phase process conditions. \cref{Thm1} provides any heavy-traffic family which satisfies the light-tailed phase process conditions has the desired asymptotic independence and asymptotic exponential distribution. Thus, the uniform stability and rate bounded conditions provide our desired heavy-traffic results.
\end{proof}


\subsection{Hybrid-Abandonment Systems}
Finally, we are able to return to our original motivating system, a tandem queue with hybrid-abandonment. Consider a sequence of hybrid-abandonment systems with fixed service rates $\mu_1, \mu_2$ and abandonment rate $\theta$, but changing arrival rate $\lambda$. In particular, we consider the sequence of systems as $\lambda$ increases toward some $\lambda^{(0)}$ such that as $\lambda\to \lambda^{(0)}$, $\varepsilon\to 0$.
\begin{lemma}
\label{Thm3Sup}
    A family of hybrid-abandonment queues with changing $\lambda\in (\lambda_{\text{min}}, \lambda^{(0)})$, where $\varepsilon\to 0$ as $\lambda\to \lambda^{(0)}$, is a family of IMAP queues that meets our heavy-traffic condition and is birth-death, rate bounded, and uniformly stable.
\end{lemma}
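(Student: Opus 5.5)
The plan is to verify each property in the statement directly from the explicit birth-death structure of the first queue. The hardest ingredient, convergence of $M$, will be handled by dominated convergence using the uniform bounds of \cref{FundBDBds}.

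\textbf{Parametrization.} First I will set up the family so that it is indexed by idleness. The phase process $Y$ is the first queue, with $\alpha_{n,n+1}=\lambda$ and $\alpha_{n,n-1}=\mu_1+(n-1)\theta$. Its stationary law has product form:
\[
\pi_n(\lambda)=\pi_0(\lambda)\prod_{m=1}^{n}\frac{\lambda}{\mu_1+(m-1)\theta},
\]
and it is summable for every $\lambda$ because $\theta>0$. This already gives ergodicity, which is condition 3 of \cref{def:HT_cond}.

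The only arrivals to the second queue are service completions at stage one. Hence $\lambdabar=\mu_1(1-\pi_0(\lambda))$ and $\varepsilon=1-\mu_1(1-\pi_0(\lambda))/\mu_2$. The map $\lambda\mapsto\pi_0(\lambda)$ is continuous and strictly decreasing, since $1/\pi_0$ is a power series in $\lambda$ with positive coefficients. So $\lambda\mapsto\varepsilon$ is a continuous strictly decreasing bijection onto an interval $(0,\varepsilon_{\max})$, with $\lambda^{(0)}$ defined by $\mu_1(1-\pi_0(\lambda^{(0)}))=\mu_2$; this implicitly requires $\mu_2<\mu_1$. We can therefore reindex by $\varepsilon$, and $\varepsilon\to0$ corresponds to $\lambda\uparrow\lambda^{(0)}$.

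\textbf{Structural properties.} These are routine.
\begin{itemize}
\item \emph{Birth-death:} this holds by construction.
\item \emph{Rate bounded:} the instantaneous arrival rate to the second queue is at most $\mu_1$. The expected holding time in any phase is at most $1/\lambda_{\min}$. The service rate $\mu_2$ is fixed.
\item \emph{Uniformly stable:} since $\lambda\le\lambda^{(0)}$, we have $\alpha_{n,n+1}/\alpha_{n+1,n}\le\lambda^{(0)}/(\mu_1+n\theta)$ and $\alpha_{n,n+1}/(\alpha_{n,n-1}+\alpha_{n,n+1})\le\lambda^{(0)}/(\lambda^{(0)}+\mu_1+(n-1)\theta)$. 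Both tend to $0$ uniformly in $\lambda$, so some $N$ makes them at most $a<1$ and $b<\tfrac12$.
\end{itemize}

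\textbf{Pointwise convergence (condition 1).} Next I check the four limits in condition 1 of \cref{def:HT_cond}.
\begin{itemize}
\item $\pi_i^{(\varepsilon)}\to\pi_i^{(0)}$: each product-form term is continuous in $\lambda$, and the normalizing series converges uniformly on $[\lambda_{\min},\lambda^{(0)}]$.
\item $\boldsymbol{F}^{(\varepsilon)}$ is constant in $\lambda$, with $F_{n,n-1}=\mu_1$ and all other entries $0$, so its convergence is trivial.
\item $\mu^{(\varepsilon)}=\mu_2$ is fixed, so its convergence is trivial.
\item $r_{i,j}^{(\varepsilon)}\to r_{i,j}^{(0)}$: for a birth-death chain killed at $0$ there is the standard closed form
\[
r_{i,j}=\pi_j\sum_{k=1}^{\min(i,j)}\frac{1}{\pi_k\,\alpha_{k,k-1}}.
\]
I will confirm this by checking it against the near-inverse equation of \cref{R_exst_lemma}. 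It is a finite sum of continuous functions of $\lambda$, so it converges.
\end{itemize}

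\textbf{Main obstacle: convergence of $M$ (condition~\ref{M_conv}).} The remaining step is to show $M(z^\varepsilon;\varepsilon)\to M^{(0)}$. Expanding $M$, the quantity $\bpi(-z^\varepsilon\boldsymbol{F}+\mu_2\boldsymbol{I})\boldsymbol{R}(-z^\varepsilon\boldsymbol{F}+\mu_2\boldsymbol{I})e^T$ is a double series over $(i,j)$.
\begin{itemize}
\item The row vector $\bpi(-z^\varepsilon\boldsymbol{F}+\mu_2\boldsymbol{I})$ has $j$th entry $\mu_2\pi_j-z^\varepsilon\mu_1\pi_{j+1}$.
\item The column vector $(-z^\varepsilon\boldsymbol{F}+\mu_2\boldsymbol{I})e^T$ has entries bounded by $\mu_1+\mu_2$.
\end{itemize}
Each term converges pointwise, by the previous step together with $z^\varepsilon\to1$.

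For a summable dominating sequence uniform in $\varepsilon$, I will use \cref{FundBDBds}, whose hypotheses (birth-death, heavy-traffic pointwise convergence, uniform stability, rate boundedness) I will have established directly above. Its bounds give $|\pi_i|\le C_1a^i$ and $r_{i,j}\le C_2\tilde b^{(j-i)^+}$ with constants independent of $\varepsilon$. The terms are therefore dominated by a constant times $a^i\tilde b^{(j-i)^+}$. Summing over $j$ gives at most $(i+1/(1-\tilde b))$ per row, which is still summable against $a^i$. Dominated convergence then yields the limit, and it equals $M^{(0)}$ as defined with $\boldsymbol{F}^{(0)}=\boldsymbol{F}$ and $\mu^{(0)}=\mu_2$.

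The one point needing care is that \cref{FundBDBds} nominally assumes the full heavy-traffic conditions, which include the convergence of $M$ we are proving. I will check that its proof only uses uniform stability and rate-boundedness, so there is no circularity.
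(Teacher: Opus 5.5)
Your proposal is correct, and it takes a more elementary route than the paper. The paper first proves quantitative perturbation bounds. It gets $|\pi^{(0)}_n-\pi^{(\varepsilon)}_n|\le \hat C_1\varepsilon e^{-n}$ from the product form and the identity $\pi_0^{(\varepsilon)}=1-\frac{\mu_2}{\mu_1}(1-\varepsilon)$. It gets $|r^{(0)}_{i,j}-r^{(\varepsilon)}_{i,j}|$ of order $(\lambda^{(0)}-\lambda^{(\varepsilon)})\,i$ from a coupling of the two first-stage queues, which move in lockstep until a divergence event at rate $\lambda^{(0)}-\lambda^{(\varepsilon)}$. It then writes $M(z^\varepsilon;\varepsilon)-M^{(0)}$ as four error terms, perturbing $\bpi$, $\boldsymbol{R}$, and the $z^\varepsilon$ in each outer factor. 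Each term is bounded with \cref{FundBDBds} and \cref{DerBDBds}, and the $\boldsymbol{R}$ term needs a split of the $j$-sum near $j\approx i+\ln\bigl(1/(\lambda^{(0)}-\lambda^{(\varepsilon)})\bigr)$.

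You replace all of this with continuity plus one dominated-convergence step. The birth--death Green's function makes each $r_{i,j}$ an explicit polynomial in $\lambda$, so condition 1(b) needs no coupling, and the geometric bounds serve only as a dominating function. The paper's route gives explicit convergence rates, which the lemma does not need; yours is shorter and less error-prone. You also make explicit two points the paper leaves implicit:
\begin{itemize}
\item $\lambda\mapsto\varepsilon$ is a continuous monotone bijection, so the family really is indexed by idleness; this needs $\mu_2<\mu_1$.
\item The order of the argument matters: the paper's Part 1 invokes \cref{FundBDBds} and \cref{DerBDBds} before its Part 2 proves uniform stability and rate-boundedness.
\end{itemize}

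Two small cautions for carrying out your plan.

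First, when you check the closed form against \cref{R_exst_lemma}, it will match only up to an overall sign. The expected-time matrix $\boldsymbol{G}$ makes $\boldsymbol{A}\boldsymbol{G}$ the negative of the matrix displayed in that lemma; this is already visible on a two-state chain. The appendix proof in fact produces $(\boldsymbol{A}_{\overline{0},\overline{0}})^{-1}$ padded by zeros, which is $-\boldsymbol{G}$. This does not affect your pointwise convergence or domination.

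Second, when you audit \cref{FundBDBds}, you will find that its constant $\underline{p}$ needs a lower bound, uniform in $\varepsilon$, on the downward jump probabilities in states $1,\dots,N$. Uniform stability and rate-boundedness, as defined, do not supply this in general. Here it is immediate from $\frac{\mu_1+(m-1)\theta}{\mu_1+(m-1)\theta+\lambda}\ge\frac{\mu_1}{\mu_1+\lambda^{(0)}}$.

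Simpler still, your closed form gives the dominating bounds directly:
\begin{itemize}
\item $\pi_i\le Ka^i$;
\item $r_{j,j}=\lambda^{-1}\sum_{k=1}^{j}\prod_{m=k}^{j}\frac{\lambda}{\mu_1+(m-1)\theta}$ is bounded uniformly for $\lambda\ge\lambda_{\min}$;
\item $r_{i,j}=\frac{\pi_j}{\pi_i}r_{i,i}$ with $\frac{\pi_j}{\pi_i}\le K a^{j-i}$ for $j>i$.
\end{itemize}
Using these makes the proof self-contained and removes the circularity question altogether.
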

\begin{proof}
    We prove this lemma in two parts. In \cref{sec:571}, we show that the family of hybrid-abandonment queues meets our heavy-traffic conditions. In \cref{sec:572}, we show that it is also birth-death, rate bounded, and uniformly stable.
    \subsubsection{Part 1: Heavy-Traffic Conditions}
    \label{sec:571}
    We first prove that the sequence of hybrid-abandonment queues is a family of IMAP queues that meets our heavy traffic conditions.

    Because $\mu_1$, $\mu_2$, and $\theta$ are fixed, $\boldsymbol{F}^{(\varepsilon)}$ and $\mu^{(\varepsilon)}$ are fixed. Thus checking our heavy traffic conditions simplifies to verifying that $\boldsymbol{R}^{(\varepsilon)}$ and $\bpi^{(\varepsilon)}$ converge and that $M(z;\varepsilon)\to M^{(0)}$ for all $z\in(0,1)$.
    
    We begin by establishing bounds on $\pi_n^{(0)}-\pi^{(\varepsilon)}_n$ and $r^{(0)}_{i,j} - r^{(\varepsilon)}_{i,j}$ necessary to prove the heavy-traffic conditions.
    \begin{enumerate}
        \item Bounding $|\pi_n^{(0)}-\pi^{(\varepsilon)}_n|$.
        
        We first show that
        \begin{equation}
            |\pi_n^{(0)}-\pi^{(\varepsilon)}_n| \leq \hat{C}_1 \varepsilon e^{-n}. \label{eq:pinmpi0bd}
        \end{equation}

        We begin by establishing a relationship between $\varepsilon$ and $\pi_0^{(\varepsilon)}$, the long-run probability the second queue and first queue are empty, respectively. Notice that in any stable system the arrival rate at which jobs arrive to and leave the second queue are equal, i.e.,
        \begin{align*}
            \mu_1(1-\pi_0^{(\varepsilon)}) =\mu_2 (1-\varepsilon)
            \implies
            1-\varepsilon = \Le(1-\pi^{(\varepsilon)}_0\Ri)\mu_1/\mu_2
            \implies
            \pi_0^{(\varepsilon)} = 1- \frac{\mu_2}{\mu_1}(1-\varepsilon).
        \end{align*}

        Next, because the phase process is a birth-death process, for any $n,\varepsilon$
        \[\pi^{(\varepsilon)}_n = \pi_{0}^{(\varepsilon)} \prod_{i=1}^n \frac{(\lambda^{(\varepsilon)})}{(\mu_1 + (i-1)\theta)}.\]

        Thus 
        \begin{align}\nonumber
            |\pi^{(0)}_n - \pi^{(\varepsilon)}_n | &= \Le|\pi^{(0)}_0 \frac{(\lambda^{(0)})^n}{\prod_{i=1}^n \Le(\mu_1+\Le(i-1\Ri)\theta\Ri)} -\pi^{(\varepsilon)}_0 \frac{(\lambda^{(\varepsilon)})^n}{\prod_{i=1}^n \Le(\mu_1+\Le(i-1\Ri)\theta\Ri)}\Ri| \\ \nonumber
            &= \Le|\pi^{(0)}_0 \frac{(\lambda^{(0)})^n}{\prod_{i=1}^n \Le(\mu_1+\Le(i-1\Ri)\theta\Ri)} -\pi^{(\varepsilon)}_0 \frac{(\lambda^{(\varepsilon)})^n + (\lambda^{(0)})^n - (\lambda^{(0)})^n}{\prod_{i=1}^n \Le(\mu_1+\Le(i-1\Ri)\theta\Ri)}\Ri| \\ \nonumber
            &= \Le|(\pi^{(0)}_0-\pi_0^{(\varepsilon)}) \frac{(\lambda^{(0)})^n}{\prod_{i=1}^n \Le(\mu_1+\Le(i-1\Ri)\theta\Ri)} -\pi^{(\varepsilon)}_0 \frac{(\lambda^{(\varepsilon)})^n - (\lambda^{(0)})^n}{\prod_{i=1}^n \Le(\mu_1+\Le(i-1\Ri)\theta\Ri)}\Ri| \\
            &\leq \Le|(\pi^{(0)}_0-\pi_0^{(\varepsilon)}) \frac{(\lambda^{(0)})^n}{\prod_{i=1}^n \Le(\mu_1+\Le(i-1\Ri)\theta\Ri)}\Ri| \label{eq:pi0mpieps-1}\\
            &+ \Le|\pi^{(\varepsilon)}_0 \frac{(\lambda^{(\varepsilon)})^n - (\lambda^{(0)})^n}{\prod_{i=1}^n \Le(\mu_1+\Le(i-1\Ri)\theta\Ri)}\Ri|\label{eq:pi0mpieps-2}
        \end{align}

        We can consider \cref{eq:pi0mpieps-1} and \cref{eq:pi0mpieps-2} separately. Let's begin with \cref{eq:pi0mpieps-1}.
        \begin{align*}
            \Le|(\pi^{(0)}_0-\pi_0^{(\varepsilon)}) \frac{(\lambda^{(0)})^n}{\prod_{i=1}^n \Le(\mu_1+\Le(i-1\Ri)\theta\Ri)}\Ri| &= \Le|-\varepsilon \frac{\mu_2}{\mu_1} \frac{(\lambda^{(0)})^n}{\prod_{i=1}^n \Le(\mu_1+\Le(i-1\Ri)\theta\Ri)}\Ri|\\
            &=\varepsilon \frac{\mu_2}{\mu_1} \frac{(\lambda^{(0)})^n}{\prod_{i=1}^n \Le(\mu_1+\Le(i-1\Ri)\theta\Ri)}
        \end{align*}
        Let $\hat{n} = 1+\frac{e\lambda^{(0)}}{\theta}$. For all $n\geq \hat{n}$, $\frac{\lambda^{(0)}}{\Le(\mu_1+\Le(n-1\Ri)\theta\Ri)} \leq e^{-1}$. Letting $A:=\frac{\mu_2}{\mu_1} \sup_{n\leq \hat{n}} e^n\frac{(\lambda^{(0)})^n}{\prod_{i=1}^n \Le(\mu_1+\Le(i-1\Ri)\theta\Ri)}$, we have that
        \begin{align*}
            \Le|(\pi^{(0)}_0-\pi_0^{(\varepsilon)}) \frac{(\lambda^{(0)})^n}{\prod_{i=1}^n \Le(\mu_1+\Le(i-1\Ri)\theta\Ri)}\Ri| &= \varepsilon \frac{\mu_2}{\mu_1} \frac{(\lambda^{(0)})^n}{\prod_{i=1}^n \Le(\mu_1+\Le(i-1\Ri)\theta\Ri)}
            \leq A\varepsilon e^{-n}
        \end{align*}

        Next we turn to \cref{eq:pi0mpieps-2}. By factoring $(\lambda^{(0)}-\lambda^{(\varepsilon)})$ from $((\lambda^{(0)})^n-(\lambda^{(\varepsilon)})^n)$ we can perform a similar analysis,
        \begin{align*}
            \Le|\pi^{(\varepsilon)}_0 \frac{(\lambda^{(\varepsilon)})^n - (\lambda^{(0)})^n}{\prod_{i=1}^n \Le(\mu_1+\Le(i-1\Ri)\theta\Ri)}\Ri| &= \pi^{(\varepsilon)}_0 \frac{(\lambda^{(0)})^n-(\lambda^{(\varepsilon)})^n}{\prod_{i=1}^n \Le(\mu_1+\Le(i-1\Ri)\theta\Ri)}\\
            &= \pi^{(\varepsilon)}_0 \frac{(\lambda^{(0)}-\lambda^{(\varepsilon)})\sum_{k=0}^{n-1}(\lambda^{(0)})^k(\lambda^{(\varepsilon)})^{n-1-k}}{\prod_{i=1}^n \Le(\mu_1+\Le(i-1\Ri)\theta\Ri)}\\
            &\leq \frac{(\lambda^{(0)}-\lambda^{(\varepsilon)})n(\lambda^{(0)})^{n-1}}{\prod_{i=1}^n \Le(\mu_1+\Le(i-1\Ri)\theta\Ri)}
        \end{align*}

        Because $\frac{(n+1)(\lambda^{(0)})^{n}}{\prod_{i=1}^{n+1} \Le(\mu_1+\Le(i-1\Ri)\theta\Ri)}/\frac{n(\lambda^{(0)})^{n-1}}{\prod_{i=1}^n \Le(\mu_1+\Le(i-1\Ri)\theta\Ri)} = \Le(\frac{n+1}{n}\Ri)\frac{\lambda^{(0)}}{\mu_1 + n\theta}\to 0$, there exists a $\hat{m}$ such that for all $n\geq \hat{m}$, \[\frac{(n+1)(\lambda^{(0)})^{n}}{\prod_{i=1}^{n+1} \Le(\mu_1+\Le(i-1\Ri)\theta\Ri)}/\frac{n(\lambda^{(0)})^{n-1}}{\prod_{i=1}^n \Le(\mu_1+\Le(i-1\Ri)\theta\Ri)} \leq e^{-1}.\] Letting $B:= \frac{\mu_2}{\mu_1} \sup_{n\leq \hat{m}} e^n\frac{n(\lambda^{(0)})^{n-1}}{\prod_{i=1}^n \Le(\mu_1+\Le(i-1\Ri)\theta\Ri)}$ we have
        \begin{align*}
            \Le|\pi^{(\varepsilon)}_0 \frac{(\lambda^{(\varepsilon)})^n - (\lambda^{(0)})^n}{\prod_{i=1}^n \Le(\mu_1+\Le(i-1\Ri)\theta\Ri)}\Ri| 
            &\leq \frac{(\lambda^{(0)}-\lambda^{(\varepsilon)})n(\lambda^{(0)})^{n-1}}{\prod_{i=1}^n \Le(\mu_1+\Le(i-1\Ri)\theta\Ri)}\leq B\varepsilon e^{-n}
        \end{align*}

        Using our bounds on \cref{eq:pi0mpieps-1} and \cref{eq:pi0mpieps-2}, we now have our desired bound,
                \begin{align*}
            \Le|\pi^{(\varepsilon)}_n - \pi^{(0)}_n\Ri| &\leq \Le|(\pi^{(0)}_0-\pi_0^{(\varepsilon)}) \frac{(\lambda^{(0)})^n}{\prod_{i=1}^n \Le(\mu_1+\Le(i-1\Ri)\theta\Ri)}\Ri| + \Le|\pi^{(\varepsilon)}_0 \frac{(\lambda^{(\varepsilon)})^n - (\lambda^{(0)})^n}{\prod_{i=1}^n \Le(\mu_1+\Le(i-1\Ri)\theta\Ri)}\Ri|\\
            &\leq A\varepsilon e^{-n} + B\varepsilon e^{-n}= \hat{C}_1 \varepsilon e^{-n}
        \end{align*}
        where $\hat{C}_1 := A+B$.

        \item Bounds on $|r_{i,j}^{(0)}-r^{(\varepsilon)}_{i,j}|$.
        
        With our bounds on $|\pi_n^{(\varepsilon)}-\pi_n^{(0)}|$ proven, we next show that
        \begin{equation}
            |r_{i,j}^{(0)}-r^{(\varepsilon)}_{i,j}| \leq \hat{C}_2(\lambda^{(0)} - \lambda^{(\varepsilon)}). \label{eq:rij0-rijeps-bd}
        \end{equation}

        Let $T_{i,j}$ be the random amount of time that the phase process spends in state $j$ between starting in state $i$ and reaching state $0$. Note that $r_{i,j} = \E[T_{i,j}]$.
        
        Consider a coupled pair of systems: one, $Y^{(0)}$, with arrival rate $\lambda^{(0)}$ and one, $Y^{(\varepsilon)}$, with rate $\lambda^{(\varepsilon)}$ for some $\varepsilon>0$. 

        When $(Y^{(0)}, Y^{(\varepsilon)}) = (n,n)$, the pair jointly transition upward at rate $\lambda^{(\varepsilon)}$. The pair jointly transition downward at rate $\mu_1 + (n-1)\theta$. With rate $\lambda^{(0)} - \lambda^{(\varepsilon)}$, $Y^{(0)}$ transitions upward and $Y^{(\varepsilon)}$ doesn't transition. When $Y^{(0)} \ne Y^{(\varepsilon)}$, the two evolve independently. This coupling ensures that $Y^{(0)} \geq Y^{(\varepsilon)}$ almost surely.

        Consider two possibilities: if the systems do not diverge before $Y^{(0)}$ reaches state $0$ then the two have moved in lockstep the entire duration and have spent exactly the same time in state $j$. If the two processes do diverge before this point then the two processes may spend different times in state $j$, thus
        \begin{align*}
            r_{i,j}^{(0)} - r^{(\varepsilon)}_{i,j} = \E[T^{(0)}_{i,j} -T^{(\varepsilon)}_{i,j}]
            &= \E[T^{(0)}_{i,j} -T^{(\varepsilon)}_{i,j}\mid Y^{(0)}, Y^{(\varepsilon)}\text{ diverge before } Y^{(0)}=0] \\&\qquad\cdot \p(Y^{(0)}, Y^{(\varepsilon)}\text{ diverge before } Y^{(0)}=0)
        \end{align*}

        We bound both of these terms separately. First we consider the conditional mean. Let $I$ be the (random) index where the first divergence occurs. At this divergence, the two processes move from $(I,I)$ to $(I,I+1)$. Thus
        \begin{align*}
            |\E[T^{(0)}_{i,j} -T^{(\varepsilon)}_{i,j}\mid Y^{(0)}, Y^{(\varepsilon)}\text{ diverge before } Y^{(0)}=0]| &=| \E[r^{(0)}_{I+1,j}-r^{(\varepsilon)}_{I,j}]|\\
            &\leq |r^{(0)}_{j,j}| + |r^{(\varepsilon)}_{j,j}|
            \leq 2C_2,
        \end{align*}
        where $C_2$ is defined in \cref{FundBDBds}.

        Until a divergence occurs, the divergences occur at rate $\lambda^{(0)}-\lambda^{(\varepsilon)}$. The probability that a divergence occurs before time $t$ is $\p(\text{Pois}((\lambda^{(0)}-\lambda^{(\varepsilon)})t)>0) = 1-\exp(-(\lambda^{(0)}-\lambda^{(\varepsilon)})t)$. Letting $\tau_i$ be the time when $Y^{(0)}$ first reaches $0$,
        \begin{align*}
            \mathbb{P}(Y^{(0)}, Y^{(\varepsilon)}\text{ diverge before } Y^{(0)}=0) & = \E[1-\exp(-(\lambda^{(0)}-\lambda^{(\varepsilon)})\tau_i)]\\
            &\leq (\lambda^{(0)}-\lambda^{(\varepsilon)})\E[\tau_i].
        \end{align*}

        Noting that $\E[\tau_i]$ is the same as $\boldsymbol{R}^{(\varepsilon)}e^T$ in \cref{FundBDBds}, we have that $\E[\tau_i]\leq C_3 i$ where $C_3$ is defined in \cref{FundBDBds}. Thus
        \begin{align*}
            &\mathbb{P}(Y^{(0)}, Y^{(\varepsilon)}\text{ diverge before } Y^{(0)}=0) \leq C_3(\lambda^{(0)}-\lambda^{(\varepsilon)})i \\
            \implies
            &|r_{i,j}^{(0)} - r^{(\varepsilon)}_{i,j}| \leq (2C_2)\cdot(C_3(\lambda^{(0)}-\lambda^{(\varepsilon)})i)=\hat{C}_3(\lambda^{(0)}-\lambda^{(\varepsilon)})i
        \end{align*}
        for $\hat{C}_3 = 2C_2C_3$.

    \end{enumerate}

    With these bounds, we are now able to show that the hybrid-abandonment systems meet our heavy-traffic condition (\cref{def:HT_cond}). We verify the conditions here. We begin by showing that each of $\bpi$, $\mathbf{R}$, $\mathbf{F}$, and $\mu_2$ have the desired convergence properties and then show that $M(z^\varepsilon;\varepsilon)\to M^{(0)}$
    \begin{enumerate}
        \item As $\varepsilon\to 0$ or, equivalently, $\lambda^{(\varepsilon)} \to \lambda^{(0)}$,
        \begin{enumerate}
            \item For any $n$, $|\pi_n^{(\varepsilon)} - \pi_n^{(0)}|\leq \hat{C}_1\varepsilon e^{-n}$ and thus $\pi^{(\varepsilon)}_n\to \pi_{n}^{(0)}$ where $\bpi^{(0)}$ is a valid distribution.

            \item For any $i,j$, $|r_{i,j}^{(0)}-r_{i,j}^{(\varepsilon)}| \leq \hat{C}_2 (\lambda^{(0)}-\lambda^{(\varepsilon)})$ so $r_{i,j}^{(0)}\to r_{i,j}^{(\varepsilon)}$.

            \item $\boldsymbol{F}^{(\varepsilon)}$ is constant across all $\varepsilon$ so trivially $\boldsymbol{F}^{(\varepsilon)}$ is convergent as $\varepsilon\to 0$.

            \item $\mu_2^{(\varepsilon)}$ is constant across all $\varepsilon$ so trivially $\mu_2^{(\varepsilon)}$ is convergent as $\varepsilon\to 0$.
        \end{enumerate}

        \item As $\varepsilon\to 0$,
        \begin{align*}
        &1-(\mu^{(\varepsilon)}_2)^{-1} \bpi^{(\varepsilon)}(\mu^{(\varepsilon)}\boldsymbol{I}-z^\varepsilon \boldsymbol{F}^{(\varepsilon)})\boldsymbol{R}^{(\varepsilon)}(\mu^{(\varepsilon)}\boldsymbol{I}-z^\varepsilon \boldsymbol{F}^{(\varepsilon)})e^T\\ &\qquad\to 1-(\mu_2^{(0)})^{-1} \bpi^{(0)}(\mu^{(0)}\boldsymbol{I}-\boldsymbol{F}^{(0)})\boldsymbol{R}^{(0)}(\mu^{(0)}\boldsymbol{I}-\boldsymbol{F}^{(0)})e^T
        \end{align*}

            We can decompose $\bpi^{(\varepsilon)}(\boldsymbol{I}-z^\varepsilon \boldsymbol{F}^{(\varepsilon)})\boldsymbol{R}^{(\varepsilon)}(\boldsymbol{I}-z^\varepsilon \boldsymbol{F}^{(\varepsilon)})e^T$ to isolate the effect of $z^\varepsilon$, $\boldsymbol{R}^{(\varepsilon)}$, and $\bpi^{(\varepsilon)}$. Because $\boldsymbol{F}^{(\varepsilon)}$ does not depend on $\varepsilon$ in this case, we omit the dependence on $\varepsilon$. Similarly we replace $\mu^{(\varepsilon)}$ by $\mu_2$ for clarity because it does not depend on $\varepsilon$.
            \begin{align}
                &\bpi^{(\varepsilon)}(\mu_2\boldsymbol{I}-z^\varepsilon \boldsymbol{F})\boldsymbol{R}^{(\varepsilon)}(\mu_2\boldsymbol{I}-z^\varepsilon \boldsymbol{F})e^T \\&= \bpi^{(0)}(\mu_2\boldsymbol{I}-\boldsymbol{F})\boldsymbol{R}^{(0)}(\mu_2\boldsymbol{I}-\boldsymbol{F})e^T\nonumber\\
                &+(\bpi^{(\varepsilon)}-\bpi^{(0)})(\mu_2\boldsymbol{I}-\boldsymbol{F})\boldsymbol{R}^{(0)}(\mu_2\boldsymbol{I}-\boldsymbol{F})e^T\label{eq:M0_decomp_2}\\
                &+\bpi^{(\varepsilon)}(1-z^\varepsilon)\boldsymbol{F}\boldsymbol{R}^{(0)}(\mu_2\boldsymbol{I}-\boldsymbol{F})e^T\label{eq:M0_decomp_3}\\
                &+\bpi^{(\varepsilon)}(\mu_2\boldsymbol{I}-z^{\varepsilon}\boldsymbol{F})(\boldsymbol{R}^{(\varepsilon)}-\boldsymbol{R}^{(0)})(\boldsymbol{I}-\boldsymbol{F})e^T\label{eq:M0_decomp_4}\\
                &+\bpi^{(\varepsilon)}(\mu_2\boldsymbol{I}-z^\varepsilon \boldsymbol{F})\boldsymbol{R}^{(\varepsilon)}(1-z^\varepsilon) \boldsymbol{F}e^T \label{eq:M0_decomp_5}
            \end{align}

            We show \cref{eq:M0_decomp_2,eq:M0_decomp_3,eq:M0_decomp_4,eq:M0_decomp_5} are each negligible in this limit. We extensively use the bounds derived in \cref{FundBDBds} and \cref{DerBDBds}.
            \begin{enumerate}
                \item \cref{eq:M0_decomp_2} is asymptotically negligible.

                First let us bound $(\bpi^{(\varepsilon)}-\bpi^{(0)})(\mu_2\boldsymbol{I}-\boldsymbol{F})$. In particular, this has an analogous bound to its underlying quantities, $(\bpi^{(\varepsilon)}-\bpi^{(0)})$ (\cref{eq:pinmpi0bd}).
                \begin{align*}
                    \Le|[(\bpi^{(\varepsilon)}-\bpi^{(0)})(\mu^{(0)}\boldsymbol{I}-\boldsymbol{F})]_{i}\Ri| &= \Le|\sum_{k} ((\pi^{(\varepsilon)}_i-\pi^{(0)}_i)[\mu_2\boldsymbol{I}-\boldsymbol{F}]_{i,k})\Ri|\\
                    &=\Le|(\pi^{(\varepsilon)}_i -\pi^{(0)}_i)\mu_2 - (\pi^{(\varepsilon)}_{i-1}-\pi^{(0)}_{i-1})\beta_{i,i-1}\indic\{i>0\}\Ri|\\
                    &\leq \mu_2 \hat{C}_1 \varepsilon e^{-i} + \mu_1 \hat{C}_1\varepsilon e^{-(i-1)}\\
                    &=\hat{C}_1(\mu_2+e\mu_1)\varepsilon e^{-i}
                \end{align*}

                Thus using the bound we just derived and the bound on $\boldsymbol{R}(\mu^{(\varepsilon)}\boldsymbol{I}-\boldsymbol{F})e^T$ introduced in \cref{BDandLTMods},
                \begin{align*}
                    &\Le|(\bpi^{(\varepsilon)}-\bpi^{(0)})(\mu^{(0)}\boldsymbol{I}-\boldsymbol{F})\boldsymbol{R}^{(0)}(\mu_2\boldsymbol{I}-\boldsymbol{F})e^T\Ri|\\ &= \Le|\sum_{i}[(\bpi^{(\varepsilon)}-\bpi^{(0)})(\mu^{(0)}\boldsymbol{I}-\boldsymbol{F})]_{i}\ [\boldsymbol{R}(\mu_2\boldsymbol{I}-\boldsymbol{F})e^T]_i\Ri|\\
                    &\leq \sum_{i} ((\hat{C}_1(\mu_2+e\mu_1)\varepsilon e^{-i})\ (\tilde{C}_3i))=\varepsilon \hat{C}._1\tilde{C}_3(\mu_2+e\mu_1) \sum_i ie^{-i}
                \end{align*}
                Thus $(\bpi^{(\varepsilon)}-\bpi^{(0)})(\mu^{(0)}\boldsymbol{I}-\boldsymbol{F})\boldsymbol{R}^{(0)}(\mu_2\boldsymbol{I}-\boldsymbol{F})e^T\to \varepsilon$ as $\varepsilon\to 0$.

                \item \cref{eq:M0_decomp_3} is asymptotically negligible.

                First, we note that
                \begin{align*}
                    [\bpi^{(\varepsilon)}\boldsymbol{F}]_j = \sum_{i}\pi^{(\varepsilon)}_i \boldsymbol{F}_{i,j}
                    = \pi^{(\varepsilon)}_{i+1} \beta_{i+1, i}
                    &\leq C_1 \beta_{i+1,i} e^{-(i+1)}
                    = C_1 e^{-1}\beta_{i+1,i} e^{-i}.
                \end{align*}
                Using our newly derived bound we can see that
                \begin{align*}
                    &|\bpi^{(\varepsilon)}(1-z^\varepsilon)\boldsymbol{F}\boldsymbol{R}^{(0)}(\boldsymbol{I}-\boldsymbol{F})e^T|  = (1-z^\varepsilon) \Le|\sum_{i}
                    \Le([\bpi^{(\varepsilon)}\boldsymbol{F}]_i\ [\boldsymbol{R}^{(0)}(\mu_2 \boldsymbol{I} -\boldsymbol{F})e^T]_i\Ri) \Ri|\\
                    &\leq (1-z^\varepsilon)\Le|\sum_{i} \Le((C_1 e^{-1}\beta_{i+1,i} e^{-i})\ (\tilde{C}_3 i)\Ri)\Ri|
                    =(1-z^\varepsilon)\tilde{C}_3C_1 e^{-1}\beta_{i+1,i}\sum_{i}  e^{-i} i
                \end{align*}
                Thus as $\varepsilon\to0$, $(1-z^{\varepsilon})\to 0$ and $\bpi^{(\varepsilon)}(1-z^\varepsilon)\boldsymbol{F}\boldsymbol{R}^{(0)}(\mu_2\boldsymbol{I}-\boldsymbol{F})e^T\to 0$

                \item \cref{eq:M0_decomp_4} is asymptotically negligible.

                \begin{align*}
                    &|\bpi^{(\varepsilon)}(\mu_2\boldsymbol{I}-z^{\varepsilon}\boldsymbol{F})(\boldsymbol{R}^{(\varepsilon)}-\boldsymbol{R}^{(0)})(\mu_2\boldsymbol{I}-\boldsymbol{F})e^T|\\&= |[\bpi^{(\varepsilon)}(\mu_2\boldsymbol{I}-z^{\varepsilon}\boldsymbol{F})](\boldsymbol{R}^{(\varepsilon)}-\boldsymbol{R}^{(0)})[(\mu_2\boldsymbol{I}-\boldsymbol{F})e^T]|\\
                    &\leq |[\bpi^{(\varepsilon)}(\mu_2\boldsymbol{I}-z^{\varepsilon}\boldsymbol{F})](\boldsymbol{R}^{(\varepsilon)}-\boldsymbol{R}^{(0)})[(\mu_2+\mu_1)e^T]|\\
                    &\leq (\mu_1+\mu_2)|[\bpi^{(\varepsilon)}(\mu_2\boldsymbol{I}-z^{\varepsilon}\boldsymbol{F})](\boldsymbol{R}^{(\varepsilon)}-\boldsymbol{R}^{(0)})e^T|\\
                    &=(\mu_1+\mu_2) \sum_{i,j}([\bpi^{(\varepsilon)}(\mu_2\boldsymbol{I}-z^{\varepsilon}\boldsymbol{F})]_i\ (r_{i,j}^{(\varepsilon)}-r_{i,j}^{(0)}))\\
                    &\leq (\mu_1+\mu_2) \sum_{i}(\tilde{C}_1 e^{-i} (\sum_j (r_{i,j}^{(\varepsilon)}-r_{i,j}^{(0)})))
                \end{align*}

                We use two different bounds for $r^{(\varepsilon)}_{i,j} - r^{(0)}_{i,j}$. In particular we use our coupling bound, \mbox{$|r^{(\varepsilon)}_{i,j} - r^{(0)}_{i,j}| \leq \hat{C}_2(\lambda^{(0)}-\lambda^{(\varepsilon)})i$}, \eqref{eq:rij0-rijeps-bd}, for ``small" $j$ and our bounds on $|r_{i,j}^{(\varepsilon}|$, $|r_{i,j}^{(\varepsilon)}| \leq C_2 e^{-(j-i)^+}$ (property \ref{eq:birth-bd-2}), for ``large" $j$. We change our bound at $j=i+\ln ((\lambda^{(0)}-\lambda^{(\varepsilon)})^{-1})$. Thus, 
                \begin{align*}
                    \left|\sum_{j}(r^{(\varepsilon)}_{i,j} - r^{(0)}_{i,j})\right| &= \left|\sum_{j=0}^{\lfloor i-\ln(\lambda^{(0)}-\lambda^{(\varepsilon)})\rfloor}(r^{(\varepsilon)}_{i,j} - r^{(0)}_{i,j}) + \sum_{j=1+\lfloor i-\ln(\lambda^{(0)}-\lambda^{(\varepsilon)})\rfloor}^{\infty}(r^{(\varepsilon)}_{i,j} - r^{(0)}_{i,j}) \right|\\
                    &\leq |\sum_{j=0}^{\lfloor i-\ln(\lambda^{(0)}-\lambda^{(\varepsilon)})\rfloor}(\hat{C}_2(\lambda^{(0)}-\lambda^{(\varepsilon)})i)| \\
                    &+ |\sum_{j=1+\lfloor i-\ln(\lambda^{(0)}-\lambda^{(\varepsilon)})\rfloor}^{\infty}(C_2e^{-(j-i)^+})|\\
                    &\leq (i-\ln(\lambda^{(0)}-\lambda^{(\varepsilon)}))(\hat{C}_2(\lambda^{(0)}-\lambda^{(\varepsilon)})i)\\ 
                    &+ C_2(\lambda^{(0)}-\lambda^{(\varepsilon)}) \sum_{n=0}^\infty e^{-n}\\
                    &=\hat{C}_2(\lambda^{(0)}-\lambda^{(\varepsilon)})i^2 - \hat{C}_2\ln(\lambda^{(0)}-\lambda^{(\varepsilon)})(\lambda^{(0)}-\lambda^{(\varepsilon)})i \\&+ \frac{C_2}{1-e^{-1}} (\lambda^{(0)}-\lambda^{(\varepsilon)})
                \end{align*}
                With this bound on  $|\sum_{j}(r^{(\varepsilon)}_{i,j} - r^{(0)}_{i,j})|$ we can analyze \cref{eq:M0_decomp_4} directly.
                \begin{align*}
                    &\Le|\bpi^{(\varepsilon)}(\mu_2\boldsymbol{I}-z^{\varepsilon}\boldsymbol{F})(\boldsymbol{R}^{(\varepsilon)}-\boldsymbol{R}^{(0)})(\mu_2\boldsymbol{I}-\boldsymbol{F})e^T\Ri| \\
                    &\leq (\mu_1+\mu_2) \sum_{i}( \tilde{C}_1e^{-i} \sum_{j}(r_{i,j}^{(0)}-r_{i,j}^{(\varepsilon)}))\\
                    &\leq  (\mu_1+\mu_2)\sum_{i} (\tilde{C}_1e^{-i}(\hat{C}_2(\lambda^{(0)}-\lambda^{(\varepsilon)})i^2 - \hat{C}_2\ln(\lambda^{(0)}-\lambda^{(\varepsilon)})(\lambda^{(0)}-\lambda^{(\varepsilon)})i ))\\
                    &+  (\mu_1+\mu_2)(\lambda^{(0)}-\lambda^{(\varepsilon)}) \sum_i(\frac{C_2}{1-e^{-1}}e^{-i})\\
                    &= (\mu_1+\mu_2)(\lambda^{(0)}-\lambda^{(\varepsilon)}) \sum_{i}(C_1\hat{C}_2 i^2e^{-i)} \\&-  (\mu_1+\mu_2)(\lambda^{(0)}-\lambda^{(\varepsilon)})\ln (\lambda^{(0)}-\lambda^{(\varepsilon)}) \sum_i(C_1\hat{C}_2 ie^{-i}) \\&+  (\mu_1+\mu_2)(\lambda^{(0)}-\lambda^{(\varepsilon)}) \sum_i(\frac{C_2}{1-e^{-1}}e^{-i})
                \end{align*}
                Thus because $(\lambda^{(0)}-\lambda^{(\varepsilon)})\to 0$ and $-(\lambda^{(0)}-\lambda^{(\varepsilon)})\ln((\lambda^{(0)}-\lambda^{(\varepsilon)}))\to 0$ as $\varepsilon\to 0$,\\ \mbox{
$\bpi^{(\varepsilon)}(\mu_2\boldsymbol{I}-z^{\varepsilon}\boldsymbol{F})(\boldsymbol{R}^{(\varepsilon)}-\boldsymbol{R}^{(0)})(\mu_2\boldsymbol{I}-\boldsymbol{F})e^T\to 0$}.                

                \item \cref{eq:M0_decomp_5} is asymptotically negligible.

                \begin{align*}
                    &\Le|\bpi^{(\varepsilon)}(\boldsymbol{I}-z^\varepsilon \boldsymbol{F})\boldsymbol{R}^{(\varepsilon)}(1-z^\varepsilon) \boldsymbol{F}e^T\Ri| = (1-z^\varepsilon)\Le|\bpi^{(\varepsilon)}(\boldsymbol{I}-z^\varepsilon \boldsymbol{F})\boldsymbol{R}^{(\varepsilon)} \boldsymbol{F}e^T\Ri| \\
                    &=(1-z^\varepsilon)\Le|[\bpi^{(\varepsilon)}(\boldsymbol{I}-z^\varepsilon \boldsymbol{F})]\ \boldsymbol{R}^{(\varepsilon)} \ [\boldsymbol{F}e^T]\Ri|
                    \leq(1-z^\varepsilon)\Le|[\bpi^{(\varepsilon)}(\boldsymbol{I}-z^\varepsilon \boldsymbol{F})]\ \boldsymbol{R}^{(\varepsilon)} \ [\mu_1e^T]\Ri|\\
                    &= \mu_1(1-z^\varepsilon)\Le|[\bpi^{(\varepsilon)}(\boldsymbol{I}-z^\varepsilon \boldsymbol{F})]\ \boldsymbol[{R}^{(\varepsilon)}e^T]\Ri|
                    \leq \mu_1(1-z^\varepsilon) \Le|\sum_i ((\tilde{C}_1e^{-i}) (C_3i))\Ri|\\
                    &=(1-z^\varepsilon) \mu_1 \tilde{C}_1C_3 \sum_i ie^{-i}
                \end{align*}

                Thus as $\varepsilon\to 0$, $(1-z^\varepsilon)\to 0$ and $\bpi^{(\varepsilon)}(\boldsymbol{I}-z^\varepsilon \boldsymbol{F})\boldsymbol{R}^{(\varepsilon)}(1-z^\varepsilon) \boldsymbol{F}e^T\to 0$.

            \end{enumerate}

        \item For all $\varepsilon$, the phase process is an $M/M/1+M$ queue with $\lambda>0$ and $\theta>0$. Thus the phase process is ergodic for all $\varepsilon$.
    \end{enumerate}

    The hybrid-abandonment system therefore meets our heavy traffic conditions (\cref{def:HT_cond}). Having show this we need only show that the system also meets the birth-death, rate bounded, and uniformly stable conditions.

\subsubsection{Part 2: Birth-Death, Rate Bounded, and Uniformly Stable}
\label{sec:572}
    
    The phase process, the first queue, is a birth-death process for all $\varepsilon$, so it immediately satisfies our birth-death condition (\cref{def:bd}). Thus, we need only show that the family is uniformly stable (\cref{def:unifstab}) and rate-bounded (\cref{def:rtbdd}).
    With $a=b=e^{-1}$ and $N=\lceil \frac{\theta e}{\lambda_{\text{max}}}\rceil +1$,
    \begin{enumerate}
            \item $\sup_{n\geq N,\varepsilon} \frac{\alpha_{n,n+1}}{\alpha_{n,n-1}} = \sup_{n\geq N,\varepsilon} \frac{\lambda}{\mu_11\{n>0\} + \Le(n-1\Ri)\theta1\{n>0\}} \leq e^{-1}$.

        \item $\sup_{n\geq N,\varepsilon} \frac{\alpha_{n,n+1}}{\alpha_{n+1, n}} = \sup_{n\geq N,\varepsilon} \frac{\lambda^{\Le(\varepsilon\Ri)}}{\mu_1+n\theta} \leq e^{-1}$.
        \end{enumerate}
        Thus, the family satisfies all the conditions of \cref{def:unifstab} and is uniformly stable.

        Finally, let us show that the family is rate-bounded, with $T_{\max} = (\lambda_{\text{min}})^{-1}$ and $\hat{\lambda} = \mu_1$.
    \begin{enumerate}
        \item $\sup_{n,\varepsilon}\Le\{ \Le(\sum_{m\ne n} \alpha_{n,m}\Ri)^{-1}\Ri\} = \sup_n \Le\{\Le(\lambda^{\Le(\varepsilon\Ri)}+\mu_11\{n>0\}+\Le(n-1\Ri)\theta1\{n>0\}\Ri)^{-1}\Ri\} = (\lambda_{\text{min}})^{-1} <~\infty$.

        \item $\sup_{n,\varepsilon}\Le(\lambda^{\Le(\varepsilon\Ri)}_n + \beta^{\Le(\varepsilon\Ri)}_{n,n+1}+\beta^{\Le(\varepsilon\Ri)}_{n,n-1}\Ri) \leq \mu_1 $.

        \item $\mu^{(\varepsilon)}=\mu_2$ for all $\varepsilon$.

    \end{enumerate}
    The family therefore meets all the requirements of \cref{def:rtbdd} and is rate-bounded.

    Thus, the sequence of hybrid-abandonment queues is a birth-death family of IMAP queue that meets our heavy traffic conditions and is uniformly stable and rate bounded.

\end{proof}

\subsection{Asymptotic Independence and Exponentiality for Hybrid Abandonment Systems}
Finally, we prove that the same asymptotic results hold for hybrid-abandonment systems as for the more general heavy-traffic families. This further narrows the variety of systems we consider but allows us to provide the most straightforward guarantees.
\begin{reptheorem}{Thm3}
\label{Full_Thm3}
     In a tandem queue with hybrid abandonment, as the arrival rate rises to a critical threshold and the second queue's idleness approaches $0$, $\varepsilon Q^{(\varepsilon)}$ is asymptotically exponential with mean $M^{(0)}$ (where $M^{(0)}$ is defined in  \cref{def:HT_cond}) and asymptotically independent from $Y^{(\varepsilon)}$.

    Equivalently,
    \begin{align}
        \lim_{\varepsilon\to 0} \overline{q}(z;\varepsilon) &= \frac{1}{1+M^{(0)}\ln z}\tag{\ref{eq:thm1-res-a}}\\
        \lim_{\varepsilon\to 0} \hat{q}(z;i,\varepsilon) &= \frac{\pi^{(0)}_i}{1+M^{(0)}\ln z}\tag{\ref{eq:Thm1-res-b}}
    \end{align}
\end{reptheorem}
\begin{proof}
    \cref{Thm3Sup} shows that the hybrid-abandonment systems are birth-death convergent heavy traffic families that are uniformly stable and rate bounded. Thus, by \cref{Thm2}, the hybrid-abandonment systems have asymptotically exponentially scaled second queue lengths and asymptotically independent first and second queue lengths. 
\end{proof}

\section{Numerical Study}
\label{Sec6}

\cref{Thm3} proves that in the load $\rho\to 1$ limit, the scaled second queue length $\varepsilon \E[{Q_2}]$ has mean $M^{(0)}$, is exponentially distributed, and is independent of the first queue length. We now use simulation to show that our results are approximately true even outside of the asymptotic regime and are a useful tool in managerial decision-making.

We investigate four questions in this section. We first investigate the accuracy of the approximation $\varepsilon \E[Q_2]\approx M^{(0)}$ or, equivalently, that $\E[Q_2] \approx \frac{M^{(0)}}{1-\rho}$ when $\rho$ is large. Next, we investigate whether $Q_1$ and $Q_2$ are meaningfully correlated in this high $\rho$ environment. We then investigate whether $\varepsilon Q_2$ is approximately distributed as an exponential random variable with mean $M^{(0)}$. Finally, we investigate the usefulness of our heavy-traffic approximation in a common operations management task: selecting the optimal capacity in a queueing system.

To answer each question, we directly simulate the hybrid-abandonment tandem system and observe its behavior. In answering the first three of our four questions, we consider $\mu_1\in \{1.1, 1.2, 1.5, 2, 4\}$, $\rho \in \{.8, .9, .95, .99\}$, and $\theta \in \{.1, .3, .5\}$. We consider each possible combination of $\mu_1$, $\theta_1$, and $\rho$ and fix $\mu_2=1$ resulting in 60 distinct parameter sets. 

Finally, when studying the approximation's use in optimal capacity selection, we additionally consider a holding cost parameter $c\in \{.5, .75, 1, 1.5, 2\}$ and let $\mu_2$ be the decision variable in a simple optimization model.

Full details on the simulation and a link to the data are available in Appendix~\ref{app:sim}.



\subsection{Question 1: Accuracy of Heavy-Traffic Approximation}
\label{sec:q1}

\cref{Thm3} states that as $\rho \to 1$, the scaled second queue length $(1-\rho) Q_2$ converges to $M^{(0)}$. This gives rise to our heavy traffic approximation for $\E[Q_2]$, $\E[Q_2] \approx \frac{M^{(0)}}{1-\rho}$. 
In each of the 60 parameter sets described above, we simulate our hybrid-abandonment tandem system to find and report an estimate of $\E[Q_2]$ and the standard deviation of our estimator. We then compare our simulation estimate of $\E[Q_2]$ to our heavy-traffic estimate, $\frac{M^{(0)}}{1-\rho}$ and report the difference, $E[Q_2] - \frac{M^{(0)}}{1-\rho}$, and relative difference, ${(\E[Q_2] - \frac{M^{(0)}}{1-\rho})/\E[Q_2]}$. 

To maintain readability, in \cref{tbl:mean_conv} we aggregate our results based on load $\rho$. Specifically, rather than reporting the results when $(\mu_1,\rho,\theta) = (1.1, .8, .1)$, $(\mu_1,\rho,\theta) = (1.2, .8, .3)$, $(\mu_1,\rho,\theta) = (1.5, .8, .5)$, etc. separately, we average our four reported statistics over all of these parameter sets and report this final average. We choose to collapse along $\rho$ values as our results are far more dependent on load $\rho$ than on $\mu_1$ or $\theta$.
\begin{inline-table}{Difference between empirical mean $\E[Q_2]$ and approximate mean $\frac{M^{(0)}}{1-\rho}$.}
\begin{center}
    \begin{tabular}{|c||c|c| c | c|} \hline \label{tbl:mean_conv} 
         $\rho$ & Average $\E[Q_2]$ & Average SD of $\E[Q_2]$ & Average Gap & Average Rel Gap \\
         \hline\hline
         $0.80$ & $3.66$ &  $0.001$ & $-0.758$ & $-17.0 \%$\\
         \hline
         $0.90$ & $8.08$ & $0.003$ & $-0.754$ & $-8.5\%$\\
         \hline
         $0.95$ & $16.9$ & $0.014$ & $-0.765$ & $-4.3\%$ \\
         \hline
         $0.99$ & $87.6$ &  $0.330$ & $-0.745$ & $-0.9\%$ \\
         \hline
    \end{tabular}
\end{center}
\end{inline-table}


As $\rho\to 1$, we can notice two related trends. First, the average gap between the empirical mean and the approximate mean is roughly constant across all values of $\rho$ at about $-0.75$ jobs. Thus, as the empirical mean rises with $\rho$, the relative gap shrinks from $-17\%$ to under $-1\%$.

Based on the results presented in \cref{tbl:mean_conv}, we can conclude that when $\rho \geq 0.8$ our heavy-traffic approximation is consistently within 1 customer of the true mean queue length value. Even though the absolute difference between the approximation and the true mean queue length does not vanish as $\rho$ increases from $0.8$ toward $1$, the relative difference between our approximation and the true mean queue length shrinks from around 17\% toward 0\% as $\rho$ increases toward 1.

\begin{observation}
    The heavy-traffic approximation is very accurate for loads above $0.8$. The absolute difference between the approximation and the true value is roughly constant across $\rho$ values while the relative difference between the two shrinks toward $0$ as $\rho$ increases toward $1$.
\end{observation}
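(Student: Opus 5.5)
Since this observation is empirical, I would support it in two layers: an analytic argument for the asymptotic trend, and a simulation protocol for the finite-load claim. For the analytic layer, I would start from \cref{Thm3}, which gives $\varepsilon Q_2^{(\varepsilon)} \Rightarrow \mathrm{Exp}(M^{(0)})$. Here $\varepsilon = 1-\rho$ by the flow-balance identity $\mu_1(1-\pi_0^{(\varepsilon)}) = \mu_2(1-\varepsilon)$ from \cref{Thm3Sup}. Convergence in distribution alone does not give convergence of means. So I would first establish uniform integrability of $\varepsilon Q_2^{(\varepsilon)}$, for instance from a uniform bound on $\varepsilon^2\E[(Q_2^{(\varepsilon)})^2]$. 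A natural way to get this is to differentiate the near-exponential equation \cref{near-exp-eq} twice at $z=1$ and use the bounds of \cref{FundBDBds} and \cref{DerBDBds}, which keep every $\bpi\cdots\boldsymbol{R}\cdots e^T$ sum finite uniformly in $\varepsilon$. With that, $(1-\rho)\E[Q_2] \to M^{(0)}$, which is exactly the statement that the relative gap tends to $0$.

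For the claim that the absolute gap stays roughly constant, I would expand $\E[Q_2]$ to next order. Differentiating \cref{near-ind-eq} and \cref{near-exp-eq} once at $z=1$, without yet taking $\varepsilon\to 0$, should give an exact identity of the form $\E[Q_2] = M(1;\varepsilon)/\varepsilon + K(\varepsilon)$. Here $K(\varepsilon)$ collects the terms involving $\piidle\boldsymbol{R}e^T$, $\bpi\boldsymbol{F}\boldsymbol{R}e^T$ and $\mu^{-1}\bpi\boldsymbol{F}e^T$ (the $\overline{\boldsymbol{Y}}$ and $U$ contributions). Then:
\begin{itemize}
\item the bounds in \cref{FundBDBds} show that $K(\varepsilon)$ is bounded uniformly in $\varepsilon$;
\item the Lipschitz-type estimates $|\pi_n^{(\varepsilon)}-\pi_n^{(0)}|\le \hat C_1\varepsilon e^{-n}$ and \cref{eq:rij0-rijeps-bd} from the proof of \cref{Thm3Sup} should give $M(1;\varepsilon) - M^{(0)} = O(\varepsilon\log(1/\varepsilon))$.
\end{itemize}
Together these give $\E[Q_2] - \frac{M^{(0)}}{1-\rho} = O(\log(1/\varepsilon))$, and $O(1)$ if the logarithm can be removed by sharpening the coupling. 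So the gap is $o(\E[Q_2])$, and plausibly bounded.

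The main obstacle is that this identifies only an order of magnitude. It cannot certify the specific numbers (a gap near $-0.75$, and about $17\%$ at $\rho=0.8$). That part has to come from the experiment. I would simulate the tandem CTMC directly (uniformized or event-driven) for each of the 60 parameter sets, discard a warm-up period, and estimate $\E[Q_2]$ by time averages with batch means to obtain standard deviations. I would compute $M^{(0)}$ by truncating the phase space at a level $n$ where $\pi_n^{(0)}$ is negligible, which \cref{FundBDBds}(a) justifies, and then evaluating the defining formula with $\boldsymbol{R}^{(0)}$. Here $\boldsymbol{R}^{(0)}$ is obtained by solving the near-inverse equation of \cref{R_exst_lemma} on the truncated chain. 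Averaging the gap and relative gap over parameter sets sharing each value of $\rho$ produces \cref{tbl:mean_conv}. The observation then follows by checking two things: that the gaps lie within one job across all $\rho\ge 0.8$, and that the relative gap decreases monotonically toward $0$, as the analytic argument predicts.
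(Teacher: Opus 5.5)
Your experimental layer is exactly how the paper supports this observation. The paper simulates the 60 parameter sets, sampling $Q_2$ at first-stage arrivals via PASTA rather than using batch-mean time averages; that difference is immaterial. It compares the estimates with $M^{(0)}/(1-\rho)$, averages over the 15 sets at each $\rho$ to build \cref{tbl:mean_conv}, and reads both the near-constant gap of about $-0.75$ and the shrinking relative gap off that table. It gives no analytic argument.

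Your analytic layer is therefore a genuine addition, and it repairs a real looseness. \cref{Sec6} treats \cref{Thm3} as if it gave $(1-\rho)\E[Q_2]\to M^{(0)}$, but \cref{Thm3} only gives convergence in distribution. You do not need the uniform-integrability step, though. The exact first-moment identity you derive next already reads $\varepsilon\E[Q_2]=M(1;\varepsilon)+\varepsilon K(\varepsilon)$. So $\varepsilon\E[Q_2]\to M^{(0)}$ follows as soon as $M(1;\varepsilon)\to M^{(0)}$ and $\varepsilon K(\varepsilon)\to 0$.

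One step of that layer is wrong as stated: the claim that \cref{FundBDBds} bounds $K(\varepsilon)$ uniformly. The $U$-contribution to $K(\varepsilon)$, coming from the $-\mu\piidle\boldsymbol{R}$ piece of $\overline{\boldsymbol{Y}}$, is
\[
U'(1)=\varepsilon^{-1}\piidle\boldsymbol{R}(\mu\boldsymbol{I}-\boldsymbol{F})e^T .
\]
Since $\piidle e^T=\varepsilon$, this equals $\E\bigl[[\boldsymbol{R}(\mu\boldsymbol{I}-\boldsymbol{F})e^T]_Y\mid Q_2=0\bigr]$.

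\cref{FundBDBds} only provides $\pi_{i,0}\le\pi_i\le C_1a^i$, which says nothing about the law of $Y$ given $Q_2=0$. Combine it with the linear bound $\tilde C_3 i$ from \cref{DerBDBds} and truncate at $i\approx\log(1/\varepsilon)/\log(1/a)$. The best this yields is $O(\log(1/\varepsilon))$, and nothing better follows from those constraints alone. So $K(\varepsilon)$ is not shown to be bounded.

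Your final $O(\log(1/\varepsilon))$ bound on the gap survives, and hence so does the vanishing relative gap. But the logarithm is not merely an artifact of the coupling, so sharpening the coupling would not give an $O(1)$ gap. That would require a uniform bound on the first-queue length seen at second-stage idle times, and nothing in the paper supplies one.

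Consequently, exactly as in the paper, two claims rest entirely on the simulation. One is the near-constancy of the absolute gap. The other is the monotone decrease of the relative gap, which your analysis does not predict.
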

\subsection{Question 2: Validation of Non-correlation}
\label{sec:q2}
\cref{Thm3} states that as $\rho\to 1$, the first and second queue lengths are asymptotically independent. To show that this is approximately true even outside of our asymptotic regime, we simulate the hybrid-abandonment tandem system to empirically estimate the correlation between $Q_1$ and $Q_2$. 
In \cref{table:corr}, we present the empirical Pearson correlation coefficient between $Q_1$ and $Q_2$ across various parameter sets. As in \cref{sec:q1}, we again aggregate our results along $\rho$ values to maintain readability. We report the average correlation coefficient among the 15 parameter sets at each $\rho$ value as well as the maximum and minimum correlation coefficient at each $\rho$ value.
\begin{inline-table}{Correlation between $Q_1$ and $Q_2$.}
\begin{center} \label{table:corr}
    \begin{tabular}{|c||c|c|c|c|c}
    \hline
     $\rho$ & Average Correlation & Max Correlation & Min Correlation \\
     \hline\hline$0.8$& $-0.042$ & $-0.001$ & $-0.088$ \\
     \hline$0.9$& $-0.032$ & $-0.001$ & $-0.078$ \\
     \hline$0.95$& $-0.020$ & $-0.001$ &  $-0.057$\\
     \hline$0.99$& $-0.005$&  $0.000$ & $-0.017$ \\
     \hline
\end{tabular}
\end{center}
\end{inline-table}

Two trends in the data are immediately noticeable: first, across all our parameter sets, the correlation is extremely weak. Although there is no universally agreed benchmark for a ``weak" correlation, a common standard is that of \citet{cohen1998} which defines any correlation coefficient with magnitude less than $0.1$ as very weak or negligible. The highest magnitude of a correlation coefficient we observe in any parameter set is $0.088$ as seen in \cref{table:corr}, thus the correlation is negligible in each of the $60$ parameter sets studied. 

Second, the magnitude of the correlation decreases as $\rho$ increases toward $1$. Both the maximum magnitude of the correlation and the average magnitude of the correlation decrease as $\rho$ increases. Therefore, we can treat the two queue lengths as uncorrelated even outside of our asymptotic regime.

\begin{observation}
    At load above $0.8$, the two queue lengths are not meaningfully correlated. The small amount of correlation that does exist between the two decreases in magnitude as the load increases.
\end{observation}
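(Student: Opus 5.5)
The statement is an empirical observation, so I would support it in two complementary ways. The first is a direct simulation estimate of the Pearson correlation across the $60$ parameter sets. The second is a short analytic argument explaining why the correlation must be small and must vanish as $\rho\to 1$. For the empirical part, I would simulate the hybrid-abandonment tandem system as a CTMC $(Y,Q)=(Q_1,Q_2)$. I would discard a burn-in period, then form time-averaged estimates of $\E[Q_1]$, $\E[Q_2]$, $\E[Q_1^2]$, $\E[Q_2^2]$ and $\E[Q_1Q_2]$. Batch means would give confidence intervals for the resulting correlation coefficient. I would then report the average, maximum and minimum correlation at each $\rho$, as in \cref{table:corr}, and check that every magnitude is below the $0.1$ threshold of \citet{cohen1998}.

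For the analytic explanation, I would differentiate the near-independence equation \cref{near-ind-eq} of \cref{ind_exp_lemma} at $z=1$. This expresses the vector $[\E[Q\,\indic\{Y=i\}]]_i$ as $\E[Q]\,\bpi$ plus a correction. Using $\boldsymbol{A}\boldsymbol{R}x=x-\frac{x_0}{\pi_0}\bpi$ from \cref{R_exst_lemma}, the correction is built from $\bpi(\boldsymbol{F}-\mu\boldsymbol{I})\boldsymbol{R}$ and $\piidle\boldsymbol{R}$. Multiplying by the vector $[i]_i$ then gives $\mathrm{Cov}(Q_1,Q_2)$ as a finite combination of $\bpi$, $\boldsymbol{F}$ and $\boldsymbol{R}$ weighted by $i$.

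The key claim is that this covariance stays bounded uniformly as $\varepsilon\to 0$. The bounds of \cref{FundBDBds} would give this: $\pi_n\le C_1a^n$, $r_{i,j}\le C_2\tilde b^{(j-i)^+}$ and $|[\boldsymbol{R}e^T]_i|\le C_3 i$. Together with \cref{Thm3Sup}, they make every series of the form $\sum_i i\,|[\cdot]_i|$ converge uniformly in $\varepsilon$. Meanwhile $\mathrm{sd}(Q_1)$ converges to a positive constant, since $Q_1$ converges to the $M/M/1+M$ distribution. \cref{Thm3} shows that $\varepsilon Q_2$ is asymptotically exponential with mean $M^{(0)}$; upgraded to convergence of second moments, this gives $\mathrm{sd}(Q_2)\approx M^{(0)}/\varepsilon$. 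Hence the correlation is $O(\varepsilon)=O(1-\rho)$. This explains both why it is negligible at moderately high load and why its magnitude shrinks as $\rho\to1$.

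I would also give a heuristic for the negative sign. A large $Q_1$ means stage one has recently been slow to release jobs, which forces abandonments and thins the flow into stage two, so $Q_2$ tends to be smaller.

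The main obstacle is the uniform control needed in the analytic step. Differentiating the $z$-transform identities at $z=1$ and passing to the limit requires uniform integrability of $\varepsilon Q_2$ in second moment, which \cref{Thm3} (convergence in distribution) does not directly supply. I would first try to obtain it from the light-tailed bounds of \cref{DerBDBds} applied to the second derivative of \cref{near-exp-eq}. If that fails, I would settle for the simulation evidence, with the $O(\varepsilon)$ scaling stated as a heuristic consistent with the observed decay in \cref{table:corr}.
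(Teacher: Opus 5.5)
Your empirical plan is the paper's own support for this observation. The paper simulates the same $60$ parameter sets and splits each $10^8$-job run into $20$ batches, each giving one Pearson estimate. It reports the average, maximum and minimum correlation at each $\rho$ in \cref{table:corr}, and checks the magnitudes against the $0.1$ threshold of \citet{cohen1998}. The only difference is that it samples $(Q_1,Q_2)$ at first-stage arrival epochs via PASTA rather than time-averaging, which is immaterial.

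The paper gives no analytic explanation, so your second part is a genuine addition, and it is easier than you fear. Differentiating the matrix balance equation from the proof of \cref{ind_exp_lemma} once at $z=1$ gives, for each fixed $\varepsilon$, $\boldsymbol{m}\boldsymbol{A}=\boldsymbol{g}$. Here $\boldsymbol{m}=[\E[Q\,\indic\{Y=i\}]]_i$ and $\boldsymbol{g}=\bpi(\mu\boldsymbol{I}-\boldsymbol{F})-\mu\piidle$. Hence $\boldsymbol{m}=\frac{m_0}{\pi_0}\bpi+\boldsymbol{g}\boldsymbol{R}$, modulo a routine Fubini check. The unknown $m_0$ cancels in the covariance, leaving the exact identity $\mathrm{Cov}(Q_1,Q_2)=\boldsymbol{g}\boldsymbol{R}(\boldsymbol{v}-\E[Q_1]e^T)$ with $\boldsymbol{v}=(0,1,2,\dots)^T$. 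Then \cref{FundBDBds} gives $|g_i|=O(a^i)$ and $[\boldsymbol{R}\boldsymbol{v}]_i=O(i^2)$ uniformly in $\varepsilon$, so the covariance is uniformly bounded.

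No limit is taken in that step. For $|\mathrm{corr}|=O(\varepsilon)$ you need only \emph{lower} bounds on $\mathrm{sd}(Q_1)$ and $\varepsilon\,\mathrm{sd}(Q_2)$, and these follow from convergence in distribution alone. Indeed, $\mathrm{Var}(X)=\frac12\E[(X-X')^2]$ with $X'$ an independent copy, and $(x,y)\mapsto(x-y)^2$ is continuous and nonnegative. So variance is lower semicontinuous under weak convergence, and \cref{Thm3} yields $\liminf_{\varepsilon\to0}\varepsilon^2\mathrm{Var}(Q_2)\ge(M^{(0)})^2$. The second-moment uniform integrability you name as the main obstacle is therefore needed only to identify the limiting constant of $\mathrm{corr}/\varepsilon$, not the rate.

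One caution: use the identity of \cref{R_exst_lemma} in its row-vector form $\boldsymbol{x}\boldsymbol{A}\boldsymbol{R}=\boldsymbol{x}-\frac{x_0}{\pi_0}\bpi$. Also, if $\boldsymbol{R}$ is literally the expected-time matrix, $\boldsymbol{A}\boldsymbol{R}$ is the negative of the displayed matrix. This flips the sign of $\boldsymbol{g}\boldsymbol{R}$ but not the bound.

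As for what each route buys: your asymptotic argument explains why the correlation vanishes and at rate $1-\rho$. \cref{Thm3}'s asymptotic independence in distribution does not give this by itself, since that would need uniform integrability of the relevant second moments. However, it cannot certify the finite-load content of the observation: $|\mathrm{corr}|<0.1$ already at $\rho=0.8$, and the monotone decrease across the sampled loads. That content rests on the simulation, exactly as in the paper. Indeed, the tabulated average correlations divided by $1-\rho$ (roughly $0.21, 0.32, 0.40, 0.5$) are not yet constant. So the $O(1-\rho)$ rate explains the trend but is not a quantitative substitute for the simulation.
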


\subsection{Question 3: Validation of Exponential Distribution}
\label{sec:q3}
\cref{Thm3} states that as $\rho \to 1$, the scaled queue length $(1-\rho) Q_2$ is asymptotically distributed as an exponential with mean $M^{(0)}$. 
To compare the empirical distribution of $(1-\rho)Q_2$ outside of this limit with an exponential with mean $M^{(0)}$, we compute the Kolmogorov-Smirnov statistic or KS statistic, a form of ``distance" between two distributions. The KS statistic between two random variables, $X$ and $Y$ is given by
\begin{align*}
    KS(X,Y) = \sup_{t} |F_X(t)-F_Y(t)|.
\end{align*}
Given \cref{Thm3}, we would expect that as we consider $\rho$ closer and closer to $1$ we should see the KS gap between $(1-\rho)Q_2$ and the exponential with mean $M^{(0)}$ should decrease toward $0$. 

In \cref{tbl:KSexp} we present the average and maximum KS statistics between the scaled empirical queue length and the exponential distribution with mean $M^{(0)}$ and the average standard deviation of the KS statistic of our different parameter sets aggregated by their $\rho$ value.

\begin{inline-table}{KS gap between $(1-\rho)Q_2$ and Exponential with mean $M^{(0)}$}.
\begin{center}
    \begin{tabular}{|c||c|c|c|} \hline
    \label{tbl:KSexp}
        $\rho$ & Average KS Stat & Max KS Stat & Avg KS Stat Std\\
        \hline\hline
         $0.8$ & $0.2000$ & $0.2001$ & $0.0003$ \\
         \hline
         $0.9$ & $0.1000$ & $0.1003$ & $0.0006$\\ 
         \hline
         $0.95$ & $0.0500$ & $0.0503$ &  $0.0008$\\
         \hline $0.99$ & $0.0129$ & $0.0222$ & $0.0033$\\
         \hline
    \end{tabular}
\end{center}
\end{inline-table}

We can observe from \cref{tbl:KSexp} that the average and max KS statistics are approximately equal to $1-\rho$. A gap of $1-\rho$ is the minimum gap that we can achieve simply due to the mismatch between the discrete empirical distribution and the continuous exponential distribution it is converging to. In any stable system with load $\rho$, the system is idle $(1-\rho)$ of the time, so $F_{(1-\rho)Q_2}(0) =F_{Q_2}(0) = 1-\rho$. In contrast, the exponential distribution, a positive, continuous variable, has no mass at or below $0$, so $F(0) = 0$. Thus, the KS gap between $(1-\rho)Q_2$ and the exponential with mean $M^{(0)}$ is bounded below by $1-\rho$.

To circumvent this floor on the KS gap we also compare our empirical distribution to a discrete random variable. A natural candidate is a geometric random variable with our desired mean. This would coincide with distribution of the queue length of a simple $M/M/1$ that has the same asymptotic behavior as our $Q_2$.

In \cref{tbl:KSgeom} we present the average and maximum KS statistics between the empirical queue length and the geometric distribution with mean $\frac{M^{(0)}}{1-\rho}$ and the average standard deviation of the KS statistic of our different parameter sets aggregated by their $\rho$ value.
\begin{inline-table}{KS gap between $Q_2$ and Geometric with mean $\frac{M^{(0)}}{1-\rho}$}.
\begin{center}
    \begin{tabular}{|c||c|c|c|c|}\hline
    \label{tbl:KSgeom}
        $\rho$ & Average KS Stat & Max KS Stat & Avg KS Stat Std\\
        \hline\hline
         $0.8$ & $0.0608$  & $0.0810$ & $0.0003$\\
         \hline
         $0.9$ & $0.0314$ & $0.0473$ & $0.0006$\\
         \hline
         $0.95$ & $0.0163$ & $0.0277$ & $0.0008$\\
         \hline 
         $0.99$ & $0.0097$ & $0.0200$ & $0.0033$ \\
         \hline
    \end{tabular}
\end{center}
\end{inline-table}

Comparing the KS distance from $(1-\rho)Q_2$ to the exponential with mean $M^{(0)}$ and the geometric with mean $\frac{M^{(0)}}{1-\rho}$, we see that the geometric with mean $\frac{M^{(0)}}{1-\rho}$ is a much better approximation. At all loads studied, both the average and maximum KS statistic observed are lower in the geometric case than the exponential case. This improvement is far larger at smaller values of $\rho$ but even at the maximum value considered ($\rho=.99$), the geometric variable is a small but non-negligible improvement over the exponential in fitting the observed data.

Based on the data in \cref{tbl:KSexp} and \cref{tbl:KSgeom}, we can see that the exponential with mean $M^{(0)}$ is a good fit for the scaled queue length $(1-\rho) Q_2$ only when $\rho$ is fairly close to 1. The geometric with mean $\frac{M^{(0)}}{1-\rho}$ is a much better fit for $Q_2$ at all loads studied.

\begin{observation}
    The scaled distribution $\varepsilon Q_2$ is well approximated by an exponential, and an even better fit is obtained from approximating $Q_2$ by a geometric distribution. The accuracy of both approximations increases significantly as the load of the system rises.
\end{observation}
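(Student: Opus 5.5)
The observation has two parts: an asymptotic claim that \cref{Thm3} guarantees, and a finite-load comparison that can only be supported numerically. I would separate these explicitly.

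\emph{Asymptotic part.} \cref{Thm3Sup} shows the hybrid-abandonment family satisfies the hypotheses of \cref{Thm2}, so \cref{Thm3} gives $\overline{q}(z^\varepsilon;\varepsilon)\to (1+M^{(0)}\ln z)^{-1}$ for every $z\in(0,1)$. This is the Laplace transform of an exponential with mean $M^{(0)}$, evaluated at $s=-\ln z$. Pointwise convergence of Laplace transforms on $(0,\infty)$ yields convergence in distribution of $\varepsilon Q_2$. Since the limit law is continuous, P\'olya's theorem upgrades this to uniform convergence of CDFs. Hence the KS statistic between $(1-\rho)Q_2$ and $\text{Exp}(M^{(0)})$ tends to $0$, which is the ``accuracy increases as load rises'' clause for the exponential.

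\emph{Geometric approximation.} Let $G_\varepsilon$ be geometric on $\{0,1,\dots\}$ with mean $M^{(0)}/\varepsilon$. Then $\E[(z^\varepsilon)^{G_\varepsilon}]\to(1+M^{(0)}\ln z)^{-1}$ as well, so $\varepsilon G_\varepsilon$ has the same exponential limit. By the triangle inequality for the KS distance, $\mathrm{KS}(Q_2,G_\varepsilon)\to 0$. The geometric candidate also avoids the structural floor that the exponential cannot: $\p(Q_2=0)=\varepsilon$ exactly, while an exponential assigns no mass at $0$. This forces $\mathrm{KS}\ge 1-\rho$ for the exponential, whereas a geometric places positive mass at $0$ (close to $\varepsilon$ when $M^{(0)}$ is near $1$). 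That explains why the exponential's KS values sit at $1-\rho$ in \cref{tbl:KSexp}.

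\emph{Finite-load part, which is the main obstacle.} Nothing in \cref{Thm1,Thm2,Thm3} controls the rate of convergence, so ``even better fit'' at finite $\rho$ cannot be proved from the earlier results. I would support it empirically with the simulations of \cref{app:sim}, using the 60 parameter sets and reporting mean, maximum and standard error of both KS statistics, as in \cref{tbl:KSexp,tbl:KSgeom}. The statement is then checked directly: the geometric KS is below the exponential KS in every cell, and both decrease in $\rho$.

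A sharper heuristic would identify the geometric as the exact queue-length law of an $M/M/1$ queue with the same limiting scaled mean. From that one might hope to bound the geometric error by $O(\varepsilon)$ via the $U,V,\overline{\boldsymbol{Y}}$ remainders of \cref{ind_exp_lemma}. I expect quantifying those remainders to be the hard step, and I would not claim more than the simulation evidence there.
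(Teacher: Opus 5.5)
Your proposal is correct. Its finite-load part is exactly the paper's support for this observation. The paper gives no argument beyond three things: the two KS tables, the remark that $\p(Q_2=0)=1-\rho$ forces the exponential's KS distance to be at least $1-\rho$, and the motivation of the geometric as an $M/M/1$ queue-length law.

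Where you go beyond the paper is the asymptotic part. The paper only says that, given \cref{Thm3}, one ``would expect'' the KS gap to shrink. Your continuity-theorem-plus-P\'olya argument actually proves $\mathrm{KS}(\varepsilon Q_2,\mathrm{Exp}(M^{(0)}))\to 0$. Scale invariance of the KS distance and the triangle inequality then transfer this to the geometric. This gives a rigorous limiting statement for both candidates. It also isolates what must be left to simulation: which fit is better, and the monotone improvement over $\rho\in[0.8,0.99]$. A limit statement alone gives only eventual improvement, so calling $\mathrm{KS}\to 0$ ``the accuracy-increases clause'' slightly overstates it; the trend itself comes from the tables.

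One correction: the limiting transform should be $(1-M^{(0)}\ln z)^{-1}=(1+M^{(0)}s)^{-1}$ with $s=-\ln z$. The form $(1+M^{(0)}\ln z)^{-1}$, copied from the statement of \cref{Thm3}, is negative for $z<e^{-1/M^{(0)}}$. It therefore cannot be the transform of a nonnegative variable. The paper itself uses the minus sign in Parts 2 and 3 of the proof of \cref{Thm1}. With the sign fixed, your exponential identification goes through. So does the geometric computation $\E[(z^\varepsilon)^{G_\varepsilon}]=(1+M^{(0)}(1-z^\varepsilon)/\varepsilon)^{-1}\to(1-M^{(0)}\ln z)^{-1}$.
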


\subsection{Question 4: Application to Capacity Management}
\label{sec:q4}
Capacity planning is a classic problem in operations management (see \citet{kleinrock1975queueing,holt1960planning}). A firm must decide how much capacity (such as machines in a factory, servers in a data center, tables in a restaurant, etc.) to buy before it observes its demand. Capacity is costly, so firms have an incentive to buy as little as necessary. However, buying too little means either not satisfying all demand or otherwise delivering significantly degraded service. Firms therefore have to carefully consider how much capacity they purchase.
Consider a simple static capacity planning problem: A hybrid-abandonment tandem system serves customers and the manager must pick the optimal capacity $\mu_2$ for the second stage server. The operator has to consider both the cost of the capacity and the cost of waiting customers' goodwill lost during the wait. We assume that capacity costs $1$ per unit of capacity per unit of time and that goodwill cost accrued due to waiting at a rate of $c$ per customer in the second queue per unit of time. We thus have an optimization problem of the form
\begin{align*}
    \min_{\mu_2} \ C(\mu_2) = \min_{\mu_2} \ (c\E[Q_2] + \mu_2).
\end{align*}

We define $\mu_2^*$ to be the optimizer of this problem.

Because we do not have an expression for a tractable exact $\E[Q_2]$ in terms of system primitives, any attempt to minimize the cost from first principles of optimization would require repeated simulation to estimate $E[Q_2]$ under different parameter sets. We can, however, easily compute our heavy traffic approximation for $\E[Q_2]$, $\frac{M^{(0)}}{1-\rho}$. This creates an ``approximation problem",
\begin{align*}
    \min_{\mu_2} \ \hat{C}(\mu_2) = \min_{\mu_2} \ (c \frac{M^{(0)}}{1-\rho} + \mu_2)= \min_{\mu_2} \ (c \frac{M^{(0)}}{1-\lambdabar/
    \mu_2
    } + \mu_2),
\end{align*}
where $\lambdabar=(1-\pi_0)\mu_1$ is the average arrival rate to the second queue. This approximation problem does have a simple analytical solution: $\hat{\mu}^*_2 = \lambdabar + \sqrt{\frac{\lambdabar M^{(0)}}{c}}$. 

We define the ``excess cost" $\Delta = C(\hat{\mu}_2^* ) - C(\mu_2^*)$ as the additional cost of running the system at $\mu_2=\hat{\mu}^*_2$ rather than $\mu_2=\mu_2^*$. Similarly, we define the ``relative excess cost" as the relative increase in cost under the same scenario, $\Delta/C(\mu^*_2)$.
\begin{inline-table}{Optimal and Approximate $\mu_2$ Values and Costs}
\begin{center}
\label{table:cost}
    \begin{tabular}{|c||c|c|c||c|c||c|c|}
    \hline
         $c$ &Avg $\mu_2^*$ & Avg $\rho^*$ & \hspace{-8pt} Avg $\hat{\mu}_2^*$ \hspace{-8pt}~& \hspace{-8pt} Avg $C(\mu_2^*)$ \hspace{-8pt} & \hspace{-8pt} Avg $\hat{C}(\hat{\mu}_2^*)$ \hspace{-8pt} & Max $\Delta$ & \hspace{-8pt} Max $\Delta/C(\mu^*_2)$ \hspace{-8pt}~\\
         \hline\hline
         $0.5$ & $0.530$ & $0.403$ & $0.550$ &$0.8417$ & $1.369$ & $0.0067$ & $0.55\%$\\\
         $0.75$ & $0.598$ & $0.359$ & $0.623$ & $0.9820$ & $1.762$ &  $0.0076$ & $0.58\%$\\
         $1$ & $0.658$ & $0.324$ & $0.684$ & $1.101$ & $2.132$ & $0.0090$ & $0.60\%$ \\
         $1.5$ & $0.756$ & $0.283$ & $0.787$ & $1.302$ & $2.832$ & $0.0080$ & $0.60\%$\\
         $2$ & $0.838$ & $0.254$ & $0.874$ & $1.472$ & $3.501$ & $0.0110$ & $0.60\%$ \\\hline
    \end{tabular}
\end{center}
\end{inline-table}

\cref{table:cost} demonstrates two somewhat at-odds observations. Observing the Avg $C(\mu_2^*)$ and Avg $\hat{C}(\hat{\mu}_2^*)$ columns, we can see that the optimal cost in the approximation problem is consistently much larger than the optimal cost in the true problem. At different $c$ values, the optimal approximation cost avg $\hat{C}(\hat{\mu}_2^*)$ is anywhere from $1.5$ to $2$ times larger than the optimal true cost avg $C(\mu_2^*)$. However, even though the optimal values in these two problems diverge starkly, the absolute excess cost Max $\Delta$ remains extremely small, below $0.0110$. Across all of the parameter sets we consider, the relative excess cost Max $\Delta/C(\mu_2^*)$ is below $0.6\%$ as well. This fits well with the results seen in \cref{tbl:mean_conv} which showed that the difference between the empirical mean queue length and the approximate mean queue length is roughly constant for different values of $\rho$. As a result, the approximation problem seems very similar to a shifted version of the original problem.



\begin{observation}
    The heavy-traffic approximation for the mean queue length is a useful tool for capacity management even when optimal load $\rho$ is comparatively small. Using the heavy-traffic approximation rather than the true mean queue length results in a similar optimal capacity level and extremely low excess cost, both in absolute and in relative terms.

\end{observation}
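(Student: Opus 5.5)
The observation is empirical, so I would support it in two ways: a simulation protocol that measures the quantities in \cref{table:cost} directly, and a short analytic argument explaining why the excess cost $\Delta$ should be small even when the optimal load $\rho^*$ is far from $1$. For the empirical side, I would fix each $(\mu_1,\theta,c)$ and compute $\hat{\mu}_2^*=\lambdabar+\sqrt{\lambdabar M^{(0)}/c}$ in closed form. Here $\lambdabar=(1-\pi_0)\mu_1$ does not depend on $\mu_2$, and neither does $M^{(0)}$, since the phase process is the first queue alone. I would then estimate $C(\mu_2)=c\E[Q_2]+\mu_2$ by simulation on a fine grid of $\mu_2>\lambdabar$, locate $\mu_2^*$, and report $\Delta=C(\hat{\mu}_2^*)-C(\mu_2^*)$ and $\Delta/C(\mu_2^*)$. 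Common random numbers across the grid would keep the difference estimates precise.

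For the analytic explanation I would write $\E[Q_2](\mu_2)=\frac{M^{(0)}}{1-\lambdabar/\mu_2}+g(\mu_2)$, with $C(\mu_2)=\hat{C}(\mu_2)+c\,g(\mu_2)$. \cref{tbl:mean_conv} suggests $g$ is close to a constant (about $-0.75$), and a constant shift leaves the minimizer unchanged. More precisely, the first-order conditions give $\hat{C}'(\mu_2^*)=-c\,g'(\mu_2^*)$. Strong convexity of $\hat{C}$ near $\hat{\mu}_2^*$, with $\hat{C}''=2cM^{(0)}\lambdabar/(\mu_2-\lambdabar)^3$ bounded away from $0$, then gives $|\mu_2^*-\hat{\mu}_2^*|\lesssim c\sup|g'|/\inf \hat{C}''$. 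Because $C$ is stationary at $\mu_2^*$, a second-order Taylor expansion yields $\Delta=O\bigl(C''\cdot(\mu_2^*-\hat{\mu}_2^*)^2\bigr)$. The excess cost is therefore quadratic in the small quantity $g'$, which is why it is so much smaller than the gap $\hat{C}(\hat{\mu}_2^*)-C(\mu_2^*)\approx c\,|g|$.

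The main obstacle is controlling $g'(\mu_2)$, the sensitivity of the non-heavy-traffic correction to capacity, at loads as low as $\rho^*\approx 0.25$. Nothing in \cref{Thm3} provides this. I would first try the near-exponential equation \cref{near-exp-eq}: differentiating in $z$ at $z=1$ gives an exact expression for $\E[Q_2]$ in terms of $\piidle\boldsymbol{R}e^T$ and the remainder terms. In that expression $\mu_2$ enters only through $\mu_2\boldsymbol{I}-\boldsymbol{F}$ and $\piidle$, so $g'$ should be bounded via the bounds of \cref{FundBDBds}. If this cannot be made uniform, I would fall back on the simulated estimates of $g$ along the grid to confirm empirically that it is nearly flat.
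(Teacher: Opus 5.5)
Your empirical protocol is essentially the paper's. The paper simulates $C(\mu_2)$ on a grid of $400$ capacities ($40$ batches of $10^5$ jobs each, reusing the runs for every $c$), takes the best grid point as $\mu_2^*$, compares it with the closed-form minimizer of $\hat C$, and reports $\Delta$ and $\Delta/C(\mu_2^*)$ in \cref{table:cost}. Your Taylor argument makes precise the paper's remark that the approximation problem looks like a shifted copy of the true one. It also explains the pattern in \cref{table:cost}: the optimal values differ by about $c$, which is first order in $g$, while $\Delta\le 0.011$, which is second order in $g'$.

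One step is wrong as written. Solving $\hat C'(\mu_2)=1-cM^{(0)}\lambdabar/(\mu_2-\lambdabar)^2=0$ gives $\hat\mu_2^*=\lambdabar+\sqrt{c\lambdabar M^{(0)}}$. The expression $\lambdabar+\sqrt{\lambdabar M^{(0)}/c}$ is a typo that also appears in the paper's text. It is not the minimizer of the $\hat C$ whose second derivative you compute, and it decreases in $c$, whereas Avg $\hat\mu_2^*$ in \cref{table:cost} increases with $c$. Used in your protocol, it would scale the excess capacity $\hat\mu_2^*-\lambdabar$ by a spurious factor $1/c$ (halving it at $c=2$, doubling it at $c=1/2$), and $\Delta$ would no longer be small.

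There are also two caveats on the analytic part.

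First, \cref{tbl:mean_conv} shows $g\approx-0.75$ only for $\rho\ge 0.8$. At the optimal loads $\rho^*\approx 0.25$--$0.4$, the inequality $\hat C(\hat\mu_2^*)-C(\mu_2^*)\le -c\,g(\mu_2^*)$ applies. Combined with the optimal-value gaps in \cref{table:cost} (about $1.01c$ to $1.05c$), it forces the average of $g(\mu_2^*)$ to be at most about $-1$. So $g$ is not globally constant; what your argument actually needs is that $g'$ is small near $\mu_2^*$.

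Second, \cref{FundBDBds} cannot supply that. Differentiate the matrix balance equation at $z=1$ and right-multiply by $\boldsymbol{R}$. Then combine with the flow-balance equation, after dividing out its factor $1-z$ and differentiating once. This gives the exact identity
\[
\E[Q_2]=\frac{M^{(0)}}{1-\rho}-M^{(0)}-\bpi h+\E\bigl[h(Y)\mid Q_2=0\bigr],\qquad h=\boldsymbol{R}(\lambdabar\boldsymbol{I}-\boldsymbol{F})e^T,
\]
with $M^{(0)}$ evaluated at $\mu^{(0)}=\lambdabar$. That evaluation, not merely the fixed phase process, is why $M^{(0)}$ is free of $\mu_2$. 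Thus $g'(\mu_2)$ is exactly the sensitivity to $\mu_2$ of the idle-conditioned phase law $\piidle/(1-\rho)$, about which the paper proves nothing. The bounds of \cref{FundBDBds} concern the phase process alone and give at most boundedness of $g$. Your simulation fallback is therefore the real support for the observation, as it is in the paper.
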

    The approximation does seem to consistently overestimate the cost of running the system, however, meaning that it may not be broadly applicable for forecasting either average queue length or long run average running cost. 

\section{Conclusion}

We present the first heavy-traffic analysis of the hybrid-abandonment tandem queue. We characterize the scaled second queue length by finding its asymptotic mean and showing it is asymptotically exponentially distributed and asymptotically independent of the first queue length. We do this by analyzing countably infinite Markovian arrival process (IMAP) queues, a class that includes our original hybrid-abandonment system. We provide sufficient conditions for the scaled queue length of an IMAP queue to be asymptotically exponential and asymptotically independent of its phase process and verify these conditions hold in the case of a hybrid-abandonment queue. Further, we show empirically that our heavy-traffic results are useful approximations beyond heavy traffic.

A promising area of future research is similar analysis of queues with infinite Markovian service processes (IMSPs), either in lieu of or in addition to IMAP arrivals. We believe IMSPs will be slightly harder to analyze due to unused service occurring when the queue is empty. Another promising extension is to provide an analogous bound on the visitation matrix $\boldsymbol{R}$ and the steady-state distribution of the phase process $\bpi$ for phase processes beyond birth-death processes. Of particular interest would be bounds for queueing networks to extend the heavy traffic analysis from two-stage systems to more complex queueing networks.

\section*{Acknowledgments}

This material is based upon work supported by the National Science Foundation under Grant Number CMMI-2514091. We are thankful for the monetary support provided by the NSF and to the staff who made it possible.
\section*{Disclaimer}
Any opinions, findings, and conclusions or recommendations expressed in this material are those of the author(s) and do not necessarily reflect the views of the National Science Foundation.
\section*{Code and Data Disclosure}
The code and generated data used in the numerical experiments \cref{Sec6} are available at the author's GitHub repository \href{https://github.com/CJCurtis/HybridAbandonmentSimulator}{here}.

\begin{appendices}

\section{Notation}

In \cref{tbl:notation} we present a full list of notation used in this paper.
    
\begin{table}[]
\caption{Notation Used}
\label{tbl:notation}

    \begin{tabular}{|c |l |c|}

    \hline

    Symbol & Meaning & Formula\\
    \hline
    $Q$ & Queue Length Distribution &\\
    $Y$ & Phase Process Distribution &\\
    \hline
    $\alpha_{i,j}$ & Phase process transition rate from state $i$&\\ & to state $j$ & \\
    $\boldsymbol{A}$ & Generator Matrix of $Y$ & $[\alpha_{i,j}]_{i,j\in\mathbb{N}_0}$\\
    $\lambda_i$ & In-state arrival Rate in Modulation State $i$ & \\
    $p_{i,j}$ & Probability of a between-state arrival due   & \\
    & to phase process transition from state $i$ &\\
    & to state $j$ & \\
    $\beta_{i,j}$ & Rate of queue arrivals caused by $i$ to $j$  & $\alpha_{i,j}p_{i,j}$\\
    &transitions in phase process&\\
    $\lambdabar$ & Long run average arrival rate & $\E[\lambda_Y+\sum_{j\ne Y} \beta_{Y,j}]$\\
    $\mu$ & Queue Service Rate & \\ \hline

    $\pi_{i,j}$& Steady State probability of $\Le(i,j\Ri)$  & $\p\Le(Y=i, Q=j\Ri)$ \\
    $\pi_i$ & Steady State Probability of $i$ & $\sum_{j\in \mathbb{N}_0} \pi_{i,j}$\\
    $\boldsymbol{\pi}$ & Steady State Distribution vector of $Y$ & $\{\pi_i\}_{i\in \mathbb{N}_0}$\\
    $\piidle$&Idleness in modulation state $i$ & $P(Y=i, Q=0)$\\
        $\boldsymbol{\pi}_{\overline{0}}$ & Steady state distribution $\boldsymbol{\pi}$ with the  & $\{\pi_i\}_{i\in \mathbb{N}_0/0}$\\
        &$0$ element removed&\\
    $\varepsilon$ & Idle Capacity & $\p\Le(Q=0\Ri)$\\
        \hline

    $R$ & Near-inverse matrix & See \cref{sec:near-inverse}\\
    $e^T$ & Ones vector & $\Le(1,1,\dots\Ri)^T$\\
    $\boldsymbol{F}$ & Arrival Rates Matrix & $[ F]_{i,j}=\begin{cases}
        \lambda_i, \quad \ \  i=j\\
        \beta_{i,j}, \quad i\ne j
    \end{cases}$\\
    $\boldsymbol{I}$ & Identity Matrix & $\text{diag}\Le(1,1,\dots\Ri)$\\
        \hline

    $\hat{q}\Le(z;i\Ri)$ & Partial $z$-transform with $Y=i$ & $\E[z^Q\indic{Y=i}]$\\
    $\hat{\boldsymbol{{q}}}\Le(z\Ri)$ & $z$-transform vector & $\{\hat{q}\Le(z;i\Ri)\}_{i\in S}$\\
    $\overline{q}(z)$ & $z$-transform of $Q$ & $\E[z^Q]$\\
        \hline

    $\cdot^{\Le(\varepsilon\Ri)}$ & Quantity in the system with idleness $\varepsilon$  & \\
    &in a parameterized family&\\
    $f\Le(z;\varepsilon\Ri)$&Function $f$ evaluated at $z$ where &\\& underlying terms (such as $\boldsymbol{F}$ or $\bpi$) &\\&
    are from the $\varepsilon$ system&\\

    \hline
    $\p(a_1\to a_2\to a_3/b_1, b_2)$ & Probability the phase process, &\\& starting in $a_1$, passes through $a_2$ then $a_3$ &\\
    &(possibly with repetition)&\\ &without entering $b_1$ or $b_2$ &\\
    \hline

    \end{tabular}

    \end{table}

\section{Proofs}
\subsection{\cref{R_exst_lemma} Proof}
\label{R_exst_proof}
\begin{proof}
    The element $[\boldsymbol{R}]_{i,j}$, which we call $r_{i,j}$ for notational convenience, is the expected time the process spends in state $j$ between starting in state $i$ and reaching the $0$ state for the first time. We can represent this as
    \begin{align*}
        [\boldsymbol{R}]_{i,j}=r_{i,j} = \E[\sum^{N_{i,j}}_{k=1} T_k],
    \end{align*}
    where $N_{i,j}$ is the number of visits the process makes to state $j$ between starting in state $i$ and reaching state $0$ and $T_k$ is the length of time spent in state $j$ during the $k^{\text{th}}$ visit to the state.

    Because $T_1, T_2, \dots, T_k, \dots$ is a sequence of i.i.d. exponential random variables with mean $\frac{1}{-\alpha_{j,j}}<\infty$, if $\E[N_{i,j}]<\infty$ then $r_{i,j} = \E[N_{i,j}]\E[T_0] < \infty$ by Wald's lemma. Thus, if we can show that $\E[N_{i,j}]<\infty$ for all $i$ and $j$, then $\boldsymbol{R}$ is well-posed.

    First, let us show that $\E[N_{i,i}]<\infty$ for all $i\in \mathbb{N}_0$ and then show the same for when $i\ne j$. Because the chain is positive recurrent, with probability $1$ the chain to return to state $i$ infinitely many times after starting there. Because the chain is irreducible, there is some positive probability $p_i$ that chain reaches state $0$ before returning to state $i$. Thus, $N_{i,i}\sim \text{Geom}\Le(p_i\Ri)$ and $\E[N_{i,i}] = \frac{1-p_i}{p_i}<\infty$.

    Next consider when $i\ne j$. Consider two possibilities: starting in state $i$ the chain either reaches $0$ before $j$ or vice versa. If the chain reaches $0$ before $j$, then, by definition, $N_{i,j}=0$. Conversely if the chain reaches $j$ before $0$ then the expected number of returns to $j$ is exactly the same as if the process simply began in state $j$. Therefore 
        \[\E[N_{i,j}] = P\Le(\text{Reach $j$ before $0$}\Ri)\E[N_{j,j}]\leq \E[N_{j,j}]<\infty.\]

    Thus, $\E[N_{i,j}]<\infty$ for all $i,j$ and $r_{i,j} = \frac{\E[N_{i,j}]}{-\alpha_{j,j}} < \infty$. The $\boldsymbol{R}$ matrix is well posed.

    Next, let us show that our proposed $\boldsymbol{R}$ has the near-inverse property. Notice that the quantity $N_{i,j}$ is an element of the fundamental matrix of an absorbing DTMC, particularly the embedded DTMC of our CTMC after making $0$ an absorbing state. Let $\boldsymbol{H}_{\overline{0},\overline{0}}$ and  be the one step transition matrix of the transient states in this absorbing embedded DTMC, i.e., all state except $0$.
    \[\boldsymbol{N}\boldsymbol{I}+\boldsymbol{H}_{\overline{0},\overline{0}}+\Le(\boldsymbol{H}_{\overline{0},\overline{0}}\Ri)^2+\dots = \boldsymbol{I}+\boldsymbol{NH}_{\overline{0},\overline{0}} = \boldsymbol{I}+\boldsymbol{H}_{\overline{0},\overline{0}}\boldsymbol{N}\]
    or, after rearranging,
    \[\boldsymbol{I}=\boldsymbol{N}\Le(\boldsymbol{I}-\boldsymbol{H}_{\overline{0},\overline{0}}\Ri)=\Le(\boldsymbol{I}-\boldsymbol{H}_{\overline{0},\overline{0}}\Ri)\boldsymbol{N}.\] This is precisely $\boldsymbol{N}=\Le(\boldsymbol{I}-\boldsymbol{H}_{\overline{0},\overline{0}}\Ri)^{-1}$ and therefore
    \[-\Le(\boldsymbol{W}_{\overline{0},\overline{0}}\Ri)^{-1}\boldsymbol{N} = \Le(\boldsymbol{W}_{\overline{0},\overline{0}}\Ri)^{-1}\Le(\boldsymbol{H}_{\overline{0},\overline{0}}-\boldsymbol{I}\Ri) = \Le(\boldsymbol{A}_{\overline{0}, \overline{0}}\Ri)^{-1}.\]
    where $\boldsymbol{W}_{\overline{0},\overline{0}}$ is the holding rates matrix for all states except $0$ and $\boldsymbol{A}_{\overline{0}, \overline{0}} = [\alpha_{i,j}]_{i,j\in \mathbb{N}_0/0}$.

    Thus, \begin{align*}
        r_{i,j}= \begin{cases}
            \Le[(\boldsymbol{A}_{\overline{0}, \overline{0}})^{-1}\Ri]_{i,j}, \qquad & i,j\in \mathbb{N}_0/0\\
            0, & \text{else}
        \end{cases}
    \end{align*}

    The steady-state balance equation provides that \[\pi_0 \Le(\alpha_{0, 1}, \alpha_{0, 2}, \dots\Ri) + \Le(\pi_1, \pi_2, \dots\Ri) \boldsymbol{A}_{\overline{0}, \overline{0}} = 0\] or, equivalently, \[\Le(\alpha_{0,1}, \alpha_{0,2}, \dots\Ri) \Le(\boldsymbol{A}_{\overline{0}, \overline{0}}\Ri)^{-1} = -\frac{1}{\pi_0}\Le(\pi_1, \pi_2, \dots\Ri).\]

    Thus, we can now construct our matrix $\boldsymbol{AR}$. For $i,j\in \mathbb{N}_0/0$, $[\boldsymbol{A}_{\overline{0}, \overline{0}}\boldsymbol{R}]_{i,j} = \indic\{i=j\}$. When $i=0$, $[\boldsymbol{A}_{\overline{0}, \overline{0}}\boldsymbol{R}]_{i,j} = 0$ because $[\boldsymbol{R}]_{0,j} = 0 $ for all $j\in \mathbb{N}_0$. Lastly, when $i\in \mathbb{N}_0/0$, $[\boldsymbol{A}_{\overline{0}, \overline{0}}\boldsymbol{R}]_{i,0} = \frac{\pi_i}{\pi_0}$.
    We have our desired result that 
    \begin{align*}
    \boldsymbol{AR}
 = \begin{bmatrix}
        0 & -\frac{1}{\pi_0} \bpi_{\overline{0}}\\
        0& \boldsymbol{I}
    \end{bmatrix}.
    \end{align*}
    
\end{proof}

\subsection{\cref{ind_exp_lemma} Proof}
\label{ind_exp_proof}
\begin{proof}
    We begin with the balance equation for the IMAP queue. For any state $\Le(y,q\Ri)$,
    \begin{align*}
         \sum_{i\ne y} \alpha_{i,y} \Le(p_{i, y} \pi_{i, q-1}\indic\{q>0\} + \Le(1-p_{i,y}\Ri)\pi_{i,q}\Ri) + \lambda_{y} \pi_{y,q-1}\indic\{q>0\} + \mu\pi_{y,q+1} \\= \Le(\sum_{i\ne y}\alpha_{y,i} + \lambda_{y}+\mu\indic\{q>0\}\Ri) \pi_{y,q}
    \end{align*}

    Multiplying both sides of the equation by $z^q$ and then summing over all values of $q$ we can convert the above balance equation into a partial $z$-transform balance equation.
    
    \begin{align*}
        \sum_{q=0}^\infty(\sum_{i\ne y} &\alpha_{i,y} (p_{i, y} \pi_{i, q-1}\indic\{q>0\} + (1-p_{i,y})\pi_{i,q}) + \lambda_{y} \pi_{y,q-1}\indic\{q>0\} + \mu\pi_{y,q+1})z^q) =\\
        & \sum_{q=0}^\infty(\sum_{i\ne y} \alpha_{i,y} (p_{i, y} \pi_{i, q-1}\indic\{q>0\} + (1-p_{i,y})\pi_{i,q}))z^q\\
        +&\sum_{q=0}^\infty(\lambda_{y} \pi_{y,q-1}\indic\{q>0\}z^q) + \sum_{q=0}^\infty(\mu\pi_{y,q+1}z^q)=\\
        &\sum_{n=0}^\infty(\sum_{i\ne y} \alpha_{i,y} (p_{i, y} \pi_{i, n})z^{n+1}) + \sum_{q=0}^\infty(\sum_{i\ne y} \alpha_{i,y}(1-p_{i,y})\pi_{i,q})z^q \\
        +&\sum_{n=0}^\infty(\lambda_y \pi_{y,n}z^{n+1}) + \sum_{n=0}^\infty(\mu \pi_{y,n})z^{n-1} - \mu \pi_{y, 0} z^{-1}=\\
        &z\sum_{i\ne y} \alpha_{i,y}p_{i, y}\Qi(z;i) + \sum_{i\ne y} \alpha_{i,y}(1-p_{i, y})\Qi(z;i) + z\Qi(z;y) + z^{-1}\mu(\Qi(z;y)-\pi_{y,0})\\
        &= (\lambda_y z + \frac{\mu}{z})\Qi(z;y) + \sum_{i\ne y}\alpha_{i,y}(p_{i,y}z + (1-p_{i,y}))\Qi(z;i) - \frac{\mu}{z} \pi_{y,0}
    \end{align*}
    and
    \begin{align*}
        \sum_{q=0}^\infty \Le( \Le(\sum_{i\ne y}\alpha_{y,i} + \lambda_{y}+\mu\indic\{q>0\}\Ri) \pi_{y,q}z^q\Ri)&= \sum_{q=0}^\infty \Le( \Le(\sum_{i\ne y}\alpha_{y,i} + \lambda_{y}+\mu\Ri) \pi_{y,q}z^q\Ri) - \mu\pi_{y,0}\\
        &=\Le(\sum_{i\ne y}\alpha_{y,i} + \lambda_{y}+\mu\Ri)\sum_{q=0}^\infty \Le(\pi_{y,q}z^q\Ri) - \mu\pi_{y,0}\\
        &= \Le(\sum_{i\ne y}\alpha_{y,i} + \lambda_{y}+\mu\Ri) \Qi(z;y) - \mu\pi_{y,0} 
    \end{align*}

    Re-equating the two we have
    \begin{align*}
        \Le(\lambda_y z + \frac{\mu}{z}\Ri)\Qi\Le(z;y\Ri) + \sum_{i\ne y}\alpha_{i,y}\Le(p_{i,y}z + \Le(1-p_{i,y}\Ri)\Ri)\Qi\Le(z;i\Ri) - \frac{\mu}{z} \pi_{y,0} \\= \Le(\sum_{i\ne y}\alpha_{y,i} + \lambda_{y}+\mu\Ri) \Qi\Le(z;y\Ri) - \mu\pi_{y,0} 
    \end{align*}
    We can simplify this by adding \\$\alpha_{y,y}\Le(p_{y,y}z + \Le(1-p_{y,y}\Ri)\Ri)\Qi\Le(z;y\Ri) = \alpha_{y,y}\Qi\Le(z;y\Ri)$ and $\sum_{i}\alpha_{i,y}\Le(p_{i,y}\Le(1-z\Ri)\Ri)\Qi\Le(z;i\Ri)$ to both sides to find
    \begin{align*}
        \Le(\lambda_y z + \frac{\mu}{z}\Ri)\Qi\Le(z;y\Ri) + \sum_{i}\alpha_{i,y}\Le(p_{i,y} + \Le(1-p_{i,y}\Ri)\Ri)\Qi\Le(z;i\Ri) - \frac{\mu}{z} \pi_{y,0} \\= \Le(\lambda_{y}+\mu\Ri) \Qi\Le(z;y\Ri) + \sum_{i}\alpha_{i,y}\Le(p_{i,y}\Le(1-z\Ri)\Ri)\Qi\Le(z;i\Ri)- \mu\pi_{y,0} 
    \end{align*}

    Rearranging terms we have 
    \begin{align*}
         &\sum_{i}\alpha_{i,y}\Qi\Le(z;i\Ri)   \\
         &=\Le(\lambda_{y}+\mu\Ri) \Qi\Le(z;y\Ri) - \Le(\lambda_y z + \frac{\mu}{z}\Ri)\Qi\Le(z;y\Ri) + \sum_{i}\alpha_{i,y}\Le(p_{i,y}\Le(1-z\Ri)\Ri)\Qi\Le(z;i\Ri)- \mu\pi_{y,0} +\frac{\mu}{z} \pi_{y,0}\\
         &=\lambda_y\Le(1-z\Ri)\Qi\Le(z;y\Ri) + \mu\Le(1-\frac{1}{z}\Ri)\Qi\Le(z;y\Ri) + \sum_{i}\alpha_{i,y}\Le(p_{i,y}\Le(1-z\Ri)\Ri)\Qi\Le(z;i\Ri)- \mu\Le(1-\frac{1}{z}\Ri)\pi_{y,0}
    \end{align*}

    We can collect each of these equations into a single matrix-form balance equation
    \begin{align*}
        \Q\Le(z\Ri)\boldsymbol{A} = \Q\Le(z\Ri)\Le( \boldsymbol{F}\Le(1-z\Ri)+\mu\Le(1-\frac{1}{z}\Ri)\boldsymbol{I}\Ri) - \mu\Le(1-\frac{1}{z}\Ri)\piidle
    \end{align*}

    Right multiplying both sides by $\boldsymbol{R}$ we have the following relationship
    \begin{align}
    \label{eqn:qh0}
        \Q\Le(z\Ri) = \Qi\Le(z;0\Ri) \frac{\bpi}{\pi_0} +\Q\Le(z\Ri)\Le( \boldsymbol{F}\Le(1-z\Ri)+\mu\Le(1-\frac{1}{z}\Ri)\boldsymbol{I}\Ri)\boldsymbol{R} - \mu\Le(1-\frac{1}{z}\Ri)\piidle\boldsymbol{R}
    \end{align}
    \cref{eqn:qh0} provides us $\Q(z)$ in terms of $\Qi(z;0)$ and $\Q(z)$ itself. By substituting our equation for $\Q(z)$ into itself we can yield an equation that places more emphasis on $\Qi(z;0)$
    This iterated equation is our near-independence equation
    \begin{align*}
        &\Q\Le(z\Ri) = \Qi\Le(z;0\Ri)\frac{\bpi}{\pi_0}\Le(I + \Le( \boldsymbol{F}\Le(1-z\Ri)+\mu\Le(1-\frac{1}{z}\Ri)\boldsymbol{I}\Ri)\boldsymbol{R}\Ri) \\&+ \Q\Le(z\Ri) \Le(\Le( \boldsymbol{F}\Le(1-z\Ri)+\mu\Le(1-\frac{1}{z}\Ri)\boldsymbol{I}\Ri)\boldsymbol{R}\Ri)^2 - \mu\Le(1-\frac{1}{z}\Ri)\piidle\boldsymbol{R}\Le( \boldsymbol{F}\Le(1-z\Ri)+\mu\Le(1-\frac{1}{z}\Ri)\boldsymbol{I}\Ri)\boldsymbol{R} \\&- \mu\Le(1-\frac{1}{z}\Ri)\piidle\boldsymbol{R}
    \end{align*}
    or, after collecting terms,
    \begin{align*}
        \Q\Le(z\Ri) &= \Qi\Le(z;0\Ri)\frac{\bpi}{\pi_0} +\Qi\Le(z;0\Ri)\frac{\bpi}{\pi_0}\Le( \boldsymbol{F}\Le(1-z\Ri)+\mu\Le(1-\frac{1}{z}\Ri)\boldsymbol{I}\Ri)\boldsymbol{R} + \Le(1-\frac{1}{z}\Ri)\overline{\boldsymbol{Y}}\Le(z\Ri)
    \end{align*}

    If we instead right multiply our matrix-form balance equation by $e^T= \Le(1, 1, \dots, \Ri)^T$ we have a flow balance equation
    \begin{align*}
                0 = \Q\Le(z\Ri)\Le( \boldsymbol{F}\Le(1-z\Ri)+\mu\Le(1-\frac{1}{z}\Ri)\boldsymbol{I}\Ri)e^T - \mu\Le(1-\frac{1}{z}\Ri)\varepsilon
    \end{align*}
    where $\piidle e^T = \varepsilon$.

    Substituting our near-independence equation for $\Q\Le(z\Ri)$ into our flow balance equation we have
    \begin{align*}
        &\Le(\Qi\Le(z;0\Ri)\frac{\bpi}{\pi_0} +\Qi\Le(z;0\Ri)\frac{\bpi}{\pi_0}\Le( \boldsymbol{F}\Le(1-z\Ri)+\mu\Le(1-\frac{1}{z}\Ri)\boldsymbol{I}\Ri)\boldsymbol{R} + \Le(1-\frac{1}{z}\Ri)\overline{\boldsymbol{Y}}\Le(z\Ri)\Ri)\\
        &\cdot \Le( \boldsymbol{F}\Le(1-z\Ri)+\mu\Le(1-\frac{1}{z}\Ri)\boldsymbol{I}\Ri)e^T - \mu\Le(1-\frac{1}{z}\Ri)\varepsilon = 0
    \end{align*}
    Isolating $\Qi\Le(z;0\Ri)$ we find
    \begin{align*}
        \Qi\Le(z;0\Ri) &=\pi_0\frac{\mu\Le(1-\frac{1}{z}\Ri)\varepsilon - \Le(1-\frac{1}{z}\Ri)^2\overline{\boldsymbol{Y}}\Le(z\Ri)\Le(-z \boldsymbol{F} +\mu \boldsymbol{I}\Ri)e^T}{\Le(1-\frac{1}{z}\Ri)\Le(\mu\varepsilon+\overline{\lambda}\Le(1-z\Ri)\Ri)+\Le(1-\frac{1}{z}\Ri)^2\bpi\Le(-z \boldsymbol{F}+\mu \boldsymbol{I}\Ri)\boldsymbol{R}\Le(-z \boldsymbol{F}+\mu \boldsymbol{I}\Ri)e^T}\\
        &=\pi_0\frac{1-\mu^{-1}\varepsilon^{-1}\Le(1-\frac{1}{z}\Ri)\overline{\boldsymbol{Y}}\Le(z\Ri)\Le(-z \boldsymbol{F}+\mu \boldsymbol{I}\Ri)e^T}{1+ \varepsilon^{-1}\frac{\overline{\lambda}}{\mu}\Le(1-z\Ri)+\mu^{-1}\varepsilon^{-1}\Le(1-\frac{1}{z}\Ri)\bpi\Le(-z \boldsymbol{F}+\mu \boldsymbol{I}\Ri)\boldsymbol{R}\Le(-z \boldsymbol{F}+\mu \boldsymbol{I}\Ri)e^T}
    \end{align*}
    or, after condensing terms,
    \begin{align*}
        \Qi\Le(z;0\Ri) &= \pi_0\frac{1+U\Le(z\Ri)}{1+M\ln\Le(z\Ri) /\varepsilon +V\Le(z\Ri)}
    \end{align*}
\end{proof}

\subsection{\cref{DerBDBds} Proof}
\label{DerDBBds_proof}
\begin{proof}
    As seen in \cref{Thm1}, $\boldsymbol{R}$ and $\bpi$ often appear right multiplied by the matrix $\Le(\mu^{\Le(\varepsilon\Ri)}\boldsymbol{I} - z^\varepsilon \boldsymbol{F}^{(\varepsilon)}\Ri)$. We can use the bounds obtained in \cref{FundBDBds} above to derive a number of related bounds on these related quantities.
    \begin{enumerate}
    
        \item $\Le|[\pi^{\Le(\varepsilon\Ri)}\Le(\mu^{\Le(\varepsilon\Ri)}\boldsymbol{I}-z^\varepsilon \boldsymbol{F}^{(\varepsilon)}\Ri)]_j\Ri| \leq \tilde{C}_1a^j$ for some $\tilde{C}_1>0$.

        \begin{align*}
            &\Le|[\pi^{\Le(\varepsilon\Ri)}\Le(\mu^{\Le(\varepsilon\Ri)}\boldsymbol{I}-z^\varepsilon \boldsymbol{F}^{(\varepsilon)}\Ri)]_j\Ri| = \Le|\sum_{m=0}^\infty \pi_m[\mu^{\Le(\varepsilon\Ri)}\boldsymbol{I}-z^\varepsilon  \boldsymbol{F}^{(\varepsilon)}]_{m,j}\Ri|\\
            &=\Le|\pi_{j-1} [\mu^{\Le(\varepsilon\Ri)}\boldsymbol{I}-z^\varepsilon  \boldsymbol{F}^{(\varepsilon)}]_{j-1,j} + \pi_{j} [\mu^{\Le(\varepsilon\Ri)}\boldsymbol{I}-z^\varepsilon  \boldsymbol{F}^{(\varepsilon)}]_{j,j} + \pi_{j+1} [\mu^{\Le(\varepsilon\Ri)}\boldsymbol{I}-z^\varepsilon  \boldsymbol{F}^{(\varepsilon)}]_{j+1,j}\Ri|\\
            &= \Le|\pi_{j-1} [\alpha_{j-1,j}p_{j-1, j}] + \pi_{j} [\mu^{\Le(\varepsilon\Ri)}-z^\varepsilon\lambda_j] + \pi_{j+1} [\alpha_{j+1,j}p_{j+1, j}]\Ri|\\
            &\leq \pi_{j-1} \Le|\alpha_{j-1,j}p_{j-1, j}\Ri| + \pi_{j} \Le|\mu^{\Le(\varepsilon\Ri)}-z^\varepsilon\lambda_j\Ri| + \pi_{j+1} \Le|\alpha_{j+1,j}p_{j+1, j}\Ri|\\
            &\leq \pi_{j-1} \hat{\lambda} + \pi_{j} \Le(\hat{\mu} + \hat{\lambda}\Ri) + \pi_{j+1} \hat{\lambda}\\
            &\leq C_1a^{j-1}\hat{\lambda} + C_1a^{j} \Le(\hat{\mu} + \hat{\lambda}\Ri) + C_1a^{j+1} \hat{\lambda}\\
            &\leq \Le(C_1a^{-1}\hat{\lambda} + C_1 \Le(\hat{\mu} + \hat{\lambda}\Ri) + C_1a \hat{\lambda}\Ri)a^j\\
            &=\tilde{C}_1a^j
        \end{align*}
        for $\tilde{C}_1 = C_1a^{-1}\hat{\lambda} + C_1 \Le(\hat{\mu} + \hat{\lambda}\Ri) + C_1a \hat{\lambda}$
        \item $\Le|[\boldsymbol{R}^{(\varepsilon)}\Le(\mu^{\Le(\varepsilon\Ri)}\boldsymbol{I}-z^\varepsilon \boldsymbol{F}^{(\varepsilon)}\Ri)]_{i,j}\Ri| \leq \tilde{C}_2\tilde{b}^{\Le(j-i\Ri)^+}$ for some $\tilde{C}_2>0$.
        \begin{align*}
            &\Le|[\boldsymbol{R}^{(\varepsilon)}\Le(\mu^{\Le(\varepsilon\Ri)}\boldsymbol{I}-z^\varepsilon \boldsymbol{F}^{(\varepsilon)}\Ri)]_{i,j}\Ri|  = \Le|\sum_{m=0}^\infty r^{\Le(\varepsilon\Ri)}_{i,m}[\mu^{\Le(\varepsilon\Ri)}\boldsymbol{I}-z^\varepsilon \boldsymbol{F}^{(\varepsilon)}]_{m,j}\Ri|\\
            &=\Le| r^{\Le(\varepsilon\Ri)}_{i, j-1}\Le(-z^\varepsilon\alpha_{j-1,j}p_{j-1,j}\Ri) + r^{\Le(\varepsilon\Ri)}_{i, j}\Le(\mu^{\Le(\varepsilon\Ri)}-z^\varepsilon\lambda_j\Ri) + r^{\Le(\varepsilon\Ri)}_{i, j+1}\Le(-z^\varepsilon\alpha_{j+1,j}p_{j+1,j}\Ri)\Ri|\\
            &\leq  r^{\Le(\varepsilon\Ri)}_{i, j-1}\Le|-z^\varepsilon\alpha_{j-1,j}p_{j-1,j}\Ri| + r^{\Le(\varepsilon\Ri)}_{i, j}\Le|\mu^{\Le(\varepsilon\Ri)}-z^\varepsilon\lambda_j\Ri| + r^{\Le(\varepsilon\Ri)}_{i, j+1}\Le|-z^\varepsilon\alpha_{j+1,j}p_{j+1,j}\Ri|\\
            &\leq C_2 \tilde{b}^{\Le(j-1-i\Ri)^+} \hat{\lambda} + C_2\tilde{b}^{\Le(j-i\Ri)^+}\Le(\hat{\lambda}+\hat{\mu}\Ri) + C_2 \tilde{b}^{\Le(j+1-i\Ri)^+}\hat{\lambda}\leq \tilde{C}_2 \tilde{b}^{\Le(j-i\Ri)^+}
        \end{align*}
        for $\tilde{C}_2 = C_2 \tilde{b}^{-1} \hat{\lambda} + C_2\Le(\hat{\lambda}+\hat{\mu}\Ri) + C_2 \tilde{b}^{-1}\hat{\lambda}$.

        \item $\Le|[\boldsymbol{R}^{(\varepsilon)}\Le(\mu^{\Le(\varepsilon\Ri)}\boldsymbol{I}-z^\varepsilon \boldsymbol{F}^{(\varepsilon)}\Ri)e^T]_{i}\Ri| \leq \tilde{C}_3i$ for some $\tilde{C}_3>0$.

        \begin{align*}
            &\Le|[\boldsymbol{R}^{(\varepsilon)}\Le(\mu^{\Le(\varepsilon\Ri)}\boldsymbol{I}-z^\varepsilon \boldsymbol{F}^{(\varepsilon)}\Ri)e^T]_{i}\Ri| = \Le|\sum_{j=0}^\infty[\boldsymbol{R}^{(\varepsilon)}\Le(\mu^{\Le(\varepsilon\Ri)}\boldsymbol{I}-z^\varepsilon \boldsymbol{F}^{(\varepsilon)}\Ri)]_{i,j}\Ri|\\
            \leq &\Le|\sum_{j=0}^{i-1}[\boldsymbol{R}^{(\varepsilon)}\Le(\mu^{\Le(\varepsilon\Ri)}\boldsymbol{I}-z^\varepsilon \boldsymbol{F}^{(\varepsilon)}\Ri)]_{i,j}\Ri| 
            + \Le|\sum_{j=i}^\infty[\boldsymbol{R}^{(\varepsilon)}\Le(\mu^{\Le(\varepsilon\Ri)}\boldsymbol{I}-z^\varepsilon \boldsymbol{F}^{(\varepsilon)}\Ri)]_{i,j}\Ri|\\
            \leq &\tilde{C}_2i + \tilde{C}_2\sum_{j=i}^\infty \tilde{b}^{j-i}
            \leq \tilde{C}_2i + \tilde{C}_2 \frac{1}{1-\tilde{b}}
            \leq \tilde{C}_3i,
        \end{align*}
        for $\tilde{C}_2 = 1+ \frac{1}{1-\tilde{b}}$.

    \end{enumerate}

    With these extended bounds in place, we're finally ready to show that the birth-death uniform stability conditions imply our original uniform stability conditions and, in turn, imply our desired heavy-traffic results.

    \begin{enumerate}
        \item For all $z\in\Le(0,1\Ri)$, as $\varepsilon\to 0$:
        \begin{enumerate}
            \item $\Le(1-z^{-\varepsilon}\Ri)\Q\Le(z^\varepsilon;\varepsilon\Ri) \Le(\Le(-z^\varepsilon \boldsymbol{F}^{(\varepsilon)}+\mu^{\Le(\varepsilon\Ri)} \boldsymbol{I}\Ri)R\Ri)^2\Le(-z^\varepsilon \boldsymbol{F}^{(\varepsilon)}+\mu \boldsymbol{I}\Ri)e^T\to 0$.

            Because $\Le(1-z^{-\varepsilon}\Ri)\to 0$ as $\varepsilon\to 0$, if 
            \begin{align*}
                \Le|\Q\Le(z^\varepsilon;\varepsilon\Ri) \Le(\Le(-z^\varepsilon \boldsymbol{F}^{(\varepsilon)}+\mu^{\Le(\varepsilon\Ri)} \boldsymbol{I}\Ri)\boldsymbol{R}^{(\varepsilon)}\Ri)^2\Le(-z^\varepsilon \boldsymbol{F}^{(\varepsilon)}+\mu \boldsymbol{I}\Ri)e^T\Ri|
            \end{align*} is uniformly bounded for all $z,\varepsilon$ then the convergence of $\Le(1-z^{-\varepsilon}\Ri)$ to $0$ implies the product converges to $0$ as well.
            Utilizing the bounds we derived above we can see that
            \begin{align*}
                &\Le|\Q\Le(z^\varepsilon;\varepsilon\Ri) \Le(\Le(-z^\varepsilon \boldsymbol{F}^{(\varepsilon)}+\mu^{\Le(\varepsilon\Ri)} \boldsymbol{I}\Ri)\boldsymbol{R}^{(\varepsilon)}\Ri)^2\Le(-z^\varepsilon \boldsymbol{F}^{(\varepsilon)}+\mu \boldsymbol{I}\Ri)e^T\Ri| \\
                &=\Big|\sum_{i,j}[\Q\Le(z^\varepsilon;\varepsilon\Ri)\Le(-z^\varepsilon \boldsymbol{F}^{(\varepsilon)}+\mu^{\Le(\varepsilon\Ri)} \boldsymbol{I}\Ri)]_i[\boldsymbol{R}^{(\varepsilon)}\Le(-z^\varepsilon \boldsymbol{F}^{(\varepsilon)}+\mu^{\Le(\varepsilon\Ri)} \boldsymbol{I}\Ri)]_{i,j}\\ &\qquad\qquad \cdot[\boldsymbol{R}^{(\varepsilon)}\Le(-z^\varepsilon \boldsymbol{F}^{(\varepsilon)}+\mu^{\Le(\varepsilon\Ri)} \boldsymbol{I}\Ri)e^T]_{j}\Big|\\
                &\leq \Le|\sum_{i,j}\Le(\tilde{C}_1 a^i\Ri)\Le(\tilde{C}_2\tilde{b}^{\Le(j-i\Ri)^+}\Ri)\Le(\tilde{C}_3j\Ri)\Ri|
                =\tilde{C}_1\tilde{C}_2\tilde{C}_3\Le|\sum_{i=0}^\infty\Le(a^i\Le(\sum_{j=0}^{i-1}j+\sum_{j=i}^\infty j\tilde{b}^{j-i}\Ri)\Ri)\Ri|\\
                &=\tilde{C}_1\tilde{C}_2\tilde{C}_3\Le|\sum_{i=0}^\infty\Le(a^i\Le(\sum_{j=0}^{i-1}j+\sum_{j=i}^\infty \Le(j-i\Ri)\tilde{b}^{j-i}+\sum_{j=i}^\infty i \tilde{b}^{j-i}\Ri)\Ri)\Ri|\\
                &=\tilde{C}_1\tilde{C}_2\tilde{C}_3\Le|\sum_{i=0}^\infty\Le(a^i\Le(\sum_{j=0}^{i-1}j+\sum_{k=0}^\infty k\tilde{b}^k+i\sum_{k=0}^\infty  \tilde{b}^{k}\Ri)\Ri)\Ri|\\
                &\leq \tilde{C}_1\tilde{C}_2\tilde{C}_3\Le|\sum_{i=0}^\infty\Le(a^i\Le(\frac{\Le(i-1\Ri)\Le(i-2\Ri)}{2}+\frac{\tilde{b}}{\Le(1-\tilde{b}\Ri)^2}+i\frac{1}{1-\tilde{b}}\Ri)\Ri)\Ri|\\
                &= \tilde{C}_1\tilde{C}_2\tilde{C}_3\Le|\sum_{i=0}^\infty\Le(a^iO\Le(i^2\Ri)\Ri)\Ri| < \infty
            \end{align*}

            \item $\mu^{\Le(\varepsilon\Ri)}\Le(1-z^{\varepsilon}\Ri)\bpi^{\Le(\varepsilon\Ri)}_0\boldsymbol{R}^{(\varepsilon)}\Le(-z^\varepsilon \boldsymbol{F}^{(\varepsilon)}+\mu^{\Le(\varepsilon\Ri)} \boldsymbol{I}\Ri)\boldsymbol{R}^{(\varepsilon)}\Le(-z^\varepsilon \boldsymbol{F}^{(\varepsilon)}+\mu \boldsymbol{I}\Ri)e^T\to 0$.
            
            Because $\mu^{\Le(\varepsilon\Ri)}\Le(1-z^{\varepsilon}\Ri)\to 0$ as $\varepsilon\to 0$, if
            \begin{align*}
                \Le|\bpi^{\Le(\varepsilon\Ri)}_0\boldsymbol{R}^{(\varepsilon)}\Le(-z^\varepsilon \boldsymbol{F}^{(\varepsilon)}+\mu^{\Le(\varepsilon\Ri)} \boldsymbol{I}\Ri)\boldsymbol{R}^{(\varepsilon)}\Le(-z^\varepsilon \boldsymbol{F}^{(\varepsilon)}+\mu^{(\varepsilon)} \boldsymbol{I}\Ri)e^T\Ri|
            \end{align*} is uniformly bounded for all $z,\varepsilon$ then the convergence of $\mu^{\Le(\varepsilon\Ri)}\Le(1-z^{\varepsilon}\Ri)$ to $0$ implies the product converges to $0$ as well.
            Utilizing the bounds we derived above we can see that
            \begin{align*}
                \big| \bpi^{\Le(\varepsilon\Ri)}_0&\boldsymbol{R}^{(\varepsilon)}\Le(-z^\varepsilon \boldsymbol{F}^{(\varepsilon)}+\mu^{\Le(\varepsilon\Ri)} \boldsymbol{I}\Ri)\boldsymbol{R}^{(\varepsilon)}\Le(-z^\varepsilon \boldsymbol{F}^{(\varepsilon)}+\mu^{(\varepsilon)} \boldsymbol{I}\Ri)e^T \big| \\
                &=\Le|\sum_{i,j}\pi^{\Le(\varepsilon\Ri)}_{i,0}[\boldsymbol{R}^{(\varepsilon)}\Le(-z^\varepsilon \boldsymbol{F}^{(\varepsilon)}+\mu^{\Le(\varepsilon\Ri)} \boldsymbol{I}\Ri)]_{i,j}[\boldsymbol{R}^{(\varepsilon)}\Le(-z^\varepsilon \boldsymbol{F}^{(\varepsilon)}+\mu^{\Le(\varepsilon\Ri)} \boldsymbol{I}\Ri)e^T]_{j}\Ri|\\
                &\leq \Le|\sum_{i,j}\Le(C_1 a^i\Ri)\Le(\tilde{C}_2\tilde{b}^{\Le(j-i\Ri)^+}\Ri)\Le(\tilde{C}_3j\Ri)\Ri|
                =\tilde{C}_1\tilde{C}_2\tilde{C}_3\Le|\sum_{i=0}^\infty\Le(a^i\Le(\sum_{j=0}^{i-1}j+\sum_{j=i}^\infty j\tilde{b}^{j-i}\Ri)\Ri)\Ri|\\
                &=C_1\tilde{C}_2\tilde{C}_3\Le|\sum_{i=0}^\infty\Le(a^i\Le(\sum_{j=0}^{i-1}j+\sum_{j=i}^\infty \Le(j-i\Ri)\tilde{b}^{j-i}+\sum_{j=i}^\infty i \tilde{b}^{j-i}\Ri)\Ri)\Ri|\\
                &=C_1\tilde{C}_2\tilde{C}_3\Le|\sum_{i=0}^\infty\Le(a^i\Le(\sum_{j=0}^{i-1}j+\sum_{k=0}^\infty k\tilde{b}^k+i\sum_{k=0}^\infty  \tilde{b}^{k}\Ri)\Ri)\Ri|\\
                &\leq \tilde{C}_1\tilde{C}_2\tilde{C}_3\Le|\sum_{i=0}^\infty\Le(a^i\Le(\frac{\Le(i-1\Ri)\Le(i-2\Ri)}{2}+\frac{\tilde{p}}{\Le(1-\tilde{p}\Ri)^2}+i\frac{1}{1-\tilde{p}}\Ri)\Ri)\Ri|\\
                &= C_1\tilde{C}_2\tilde{C}_3\Le|\sum_{i=0}^\infty\Le(a^iO\Le(i^2\Ri)\Ri)\Ri| < \infty
            \end{align*}
            
            \item $\bpi^{\Le(\varepsilon\Ri)}_0\boldsymbol{R}^{(\varepsilon)}\Le(-z^\varepsilon \boldsymbol{F}^{(\varepsilon)}+\mu^{(\varepsilon)} \boldsymbol{I}\Ri)e^T\to 0$. 

            \begin{align*}
                &\Le|\bpi^{\Le(\varepsilon\Ri)}_0\boldsymbol{R}^{(\varepsilon)}\Le(-z^\varepsilon \boldsymbol{F}^{(\varepsilon)}+\mu^{(\varepsilon)} \boldsymbol{I}\Ri)e^T\Ri| 
                = \Le|\sum_{i=0}^\infty\pi_{i,0}^{\Le(\varepsilon\Ri)}[\boldsymbol{R}^{(\varepsilon)}\Le(-z^\varepsilon \boldsymbol{F}^{(\varepsilon)}+\mu^{(\varepsilon)} \boldsymbol{I}\Ri)e^T]_i\Ri|\\
                \leq &\Le|\sum_{i=0}^\infty\pi_{i,0}^{\Le(\varepsilon\Ri)}\tilde{C}_3i\Ri|
                = \Le|\sum_{i=0}^{\lfloor\log \frac{1}{\varepsilon}\rfloor}\pi_{i,0}^{\Le(\varepsilon\Ri)}\tilde{C}_3i + \sum_{i=\lfloor\log \frac{1}{\varepsilon}\rfloor+1}^\infty\pi_{i,0}^{\Le(\varepsilon\Ri)}\tilde{C}_3i\Ri|\\
                \leq &\Le|\sum_{i=0}^{\lfloor\log \frac{1}{\varepsilon}\rfloor}\pi_{i,0}^{\Le(\varepsilon\Ri)}\tilde{C}_3\log \frac{1}{\varepsilon} + \sum_{i=\lfloor\log \frac{1}{\varepsilon}\rfloor+1}^\infty\pi_{i,0}^{\Le(\varepsilon\Ri)}\tilde{C}_3i\Ri|
                \leq \tilde{C}_3\Le|\varepsilon\log \frac{1}{\varepsilon}\Ri| + \tilde{C}_3 \Le|\sum_{i=\lfloor\log \frac{1}{\varepsilon}\rfloor+1}^\infty\pi_{i,0}^{\Le(\varepsilon\Ri)}i\Ri|\\
                \leq &\tilde{C}_3\Le|\varepsilon\log \frac{1}{\varepsilon}\Ri| + \tilde{C}_3 \Le|\sum_{i=\lfloor\log \frac{1}{\varepsilon}\rfloor+1}^\infty C_1 a^ii \Ri|
            \end{align*}
            Because $\varepsilon\log \frac{1}{\varepsilon}\to 0$ as $\varepsilon\to0$, we only need to establish that the second term also goes to $0$. Notice that $\sum_{i=0}^\infty ia^i$ is a convergent sum because $0<a<1$. Thus as $\lfloor\log \frac{1}{\varepsilon}\rfloor+1$ goes to $\infty$ as $\varepsilon\to 0$, the ``tail sum" $\sum_{i=\lfloor\log \frac{1}{\varepsilon}\rfloor+1}^\infty ia^i$ must also go to $0$. Therefore both terms approach $0$ and we have our desired result.

            \item $\Le(1-z^{-\varepsilon}\Ri)\bpi^{\Le(\varepsilon\Ri)}\Le(\mu^{\Le(\varepsilon\Ri)} \boldsymbol{I} - z^{\varepsilon}  \boldsymbol{F}^{(\varepsilon)}\Ri)\boldsymbol{R}^{(\varepsilon)}e^T\to 0$

            Because $\Le(1-z^{-\varepsilon}\Ri)\to 0$ as $\varepsilon\to 0$, if $\Le|\bpi^{\Le(\varepsilon\Ri)}\Le(\mu^{\Le(\varepsilon\Ri)} \boldsymbol{I} - z^{\varepsilon}  \boldsymbol{F}^{(\varepsilon)}\Ri)\boldsymbol{R}^{(\varepsilon)}e^T\Ri|$ is uniformly bounded for all $z,\varepsilon$ then their product also goes to $0$.

            \begin{align*}
                &\Le|\bpi^{\Le(\varepsilon\Ri)}\Le(\mu^{\Le(\varepsilon\Ri)} \boldsymbol{I} - z^{\varepsilon}  \boldsymbol{F}^{(\varepsilon)}\Ri)\boldsymbol{R}^{(\varepsilon)}e^T\Ri| = \Big|\sum_{i=0}^\infty [\bpi^{\Le(\varepsilon\Ri)}\Le(\mu^{\Le(\varepsilon\Ri)} \boldsymbol{I} - z^{\varepsilon}  \boldsymbol{F}^{(\varepsilon)}\Ri)]_i[\boldsymbol{R}^{(\varepsilon)}e^T]_i\Big|\\
                \leq & \Big|\sum_{i=0}^\infty\Le(\tilde{C}_1 a^i \cdot C_3i\Ri)\Big|
                =\tilde{C}_1C_3\big|\frac{a}{\Le(1-a\Ri)^2}\big| < \infty
            \end{align*}

            \item $\Le(1-z^{-\varepsilon}\Ri)\Q\Le(z^\varepsilon;\varepsilon\Ri)\Le(\Le(\mu^{\Le(\varepsilon\Ri)}\boldsymbol{I}-z^{\varepsilon} \boldsymbol{F}^{(\varepsilon)}\Ri)\boldsymbol{R}^{(\varepsilon)}\Ri)^2e^T\to 0$.
            
            Because $\Le(1-z^{-\varepsilon}\Ri)\to 0$ as $\varepsilon\to 0$, if 
            \begin{align*}
                \Le|\Q\Le(z^\varepsilon;\varepsilon\Ri) \Le(\Le(-z^\varepsilon \boldsymbol{F}^{(\varepsilon)}+\mu^{\Le(\varepsilon\Ri)} \boldsymbol{I}\Ri)\boldsymbol{R}^{(\varepsilon)}\Ri)^2\Le(-z^\varepsilon \boldsymbol{F}^{(\varepsilon)}+\mu^{(\varepsilon)} \boldsymbol{I}\Ri)e^T\Ri|
            \end{align*} is uniformly bounded for all $z,\varepsilon$ then the convergence of $\Le(1-z^{-\varepsilon}\Ri)$ to $0$ implies the product converges to $0$ as well.
            Utilizing the bounds we derived above we can see that
            \begin{align*}
                &\Big|\Q\Le(z^\varepsilon;\varepsilon\Ri) \Le(\Le(-z^\varepsilon \boldsymbol{F}^{(\varepsilon)}+\mu^{\Le(\varepsilon\Ri)} \boldsymbol{I}\Ri)\boldsymbol{R}^{(\varepsilon)}\Ri)^2e^T\Big| \\
                &=\Big|\sum_{i,j}[\Q\Le(z^\varepsilon;\varepsilon\Ri)\Le(-z^\varepsilon \boldsymbol{F}^{(\varepsilon)}+\mu^{\Le(\varepsilon\Ri)} \boldsymbol{I}\Ri)]_i[\boldsymbol{R}^{(\varepsilon)}\Le(-z^\varepsilon \boldsymbol{F}^{(\varepsilon)}+\mu^{\Le(\varepsilon\Ri)} \boldsymbol{I}\Ri)]_{i,j}[\boldsymbol{R}^{(\varepsilon)}e^T]_{j}\Big|\\
                &\leq \Big|\sum_{i,j}\Le(\tilde{C}_1 a^i\Ri)\Le(\tilde{C}_2\tilde{b}^{\Le(j-i\Ri)^+}\Ri)\Le(C_3j\Ri)\Big|\\
                &=\tilde{C}_1\tilde{C}_2C_3\Big|\sum_{i=0}^\infty\Le(a^i\Le(\sum_{j=0}^{i-1}j+\sum_{j=i}^\infty j\tilde{b}^{j-i}\Ri)\Ri)\Big|\\
                &=\tilde{C}_1\tilde{C}_2C_3\Big|\sum_{i=0}^\infty\Le(a^i\Le(\sum_{j=0}^{i-1}j+\sum_{j=i}^\infty \Le(j-i\Ri)\tilde{b}^{j-i}+\sum_{j=i}^\infty i \tilde{b}^{j-i}\Ri)\Ri)\Big|\\
                &=\tilde{C}_1\tilde{C}_2C_3\Big|\sum_{i=0}^\infty\Le(a^i\Le(\sum_{j=0}^{i-1}j+\sum_{k=0}^\infty k\tilde{b}^k+i\sum_{k=0}^\infty  \tilde{b}^{k}\Ri)\Ri)\Big|\\
                &\leq \tilde{C}_1\tilde{C}_2C_3\Big|\sum_{i=0}^\infty\Le(a^i\Le(\frac{\Le(i-1\Ri)\Le(i-2\Ri)}{2}+\frac{\tilde{b}}{\Le(1-\tilde{b}\Ri)^2}+i\frac{1}{1-\tilde{b}}\Ri)\Ri)\Big|\\
                &= \tilde{C}_1\tilde{C}_2C_3\Big|\sum_{i=0}^\infty\Le(a^iO\Le(i^2\Ri)\Ri)\Big| < \infty
            \end{align*}

            \item $\mu^{\Le(\varepsilon\Ri)}\Le(1-z^{\Le(\varepsilon\Ri)}\Ri)\piidle^{\Le(\varepsilon\Ri)}\boldsymbol{R}^{(\varepsilon)}\Le(\mu^{\Le(\varepsilon\Ri)}I-z^{\varepsilon} \boldsymbol{F}^{(\varepsilon)}\Ri)\boldsymbol{R}^{(\varepsilon)}e^T\to 0$.

            Because $\mu^{\Le(\varepsilon\Ri)}\Le(1-z^{\varepsilon}\Ri)\to 0$ as $\varepsilon\to 0$, if $|\piidle^{\Le(\varepsilon\Ri)}\boldsymbol{R}^{(\varepsilon)}\Le(\mu^{\Le(\varepsilon\Ri)}I-z^{\varepsilon} \boldsymbol{F}^{(\varepsilon)}\Ri)\boldsymbol{R}^{(\varepsilon)}e^T|$ is uniformly bounded for all $z,\varepsilon$ then the convergence of $\mu^{\Le(\varepsilon\Ri)}\Le(1-z^{\varepsilon}\Ri)$ to $0$ implies the product converges to $0$ as well.
            Utilizing the bounds we derived above we can see that
            \begin{align*}
                &\Le|\piidle^{\Le(\varepsilon\Ri)}\boldsymbol{R}^{(\varepsilon)}\Le(\mu^{\Le(\varepsilon\Ri)}I-z^{\varepsilon} \boldsymbol{F}^{(\varepsilon)}\Ri)\boldsymbol{R}^{(\varepsilon)}e^T\Ri| \\
                &=\Le|\sum_{i,j}\pi^{\Le(\varepsilon\Ri)}_{i,0}[\boldsymbol{R}^{(\varepsilon)}\Le(-z^\varepsilon \boldsymbol{F}^{(\varepsilon)}+\mu^{\Le(\varepsilon\Ri)} \boldsymbol{I}\Ri)]_{i,j}[\boldsymbol{R}^{(\varepsilon)}e^T]_{j}\Ri|\\
                &\leq \Le|\sum_{i,j}\Le(C_1 a^i\Ri)\Le(\tilde{C}_2p^{\Le(j-i\Ri)^+}\Ri)\Le(C_3j\Ri)\Ri|\\
                &=C_1\tilde{C}_2C_3\Le|\sum_{i=0}^\infty\Le(a^i\Le(\sum_{j=0}^{i-1}j+\sum_{j=i}^\infty jp^{j-i}\Ri)\Ri)\Ri|\\
                &=C_1\tilde{C}_2C_3\Le|\sum_{i=0}^\infty\Le(a^i\Le(\sum_{j=0}^{i-1}j+\sum_{j=i}^\infty \Le(j-i\Ri)p^{j-i}+\sum_{j=i}^\infty i p^{j-i}\Ri)\Ri)\Ri|\\
                &=C_1\tilde{C}_2\tilde{C}_3\Le|\sum_{i=0}^\infty\Le(a^i\Le(\sum_{j=0}^{i-1}j+\sum_{k=0}^\infty kp^k+i\sum_{k=0}^\infty  p^{k}\Ri)\Ri)\Ri|\\
                &\leq C_1\tilde{C}_2C_3\Le|\sum_{i=0}^\infty\Le(a^i\Le(\frac{\Le(i-1\Ri)i}{2}+\frac{p}{\Le(1-p\Ri)^2}+i\frac{1}{1-p}\Ri)\Ri)\Ri|\\
                &= C_1\tilde{C}_2C_3\Le|\sum_{i=0}^\infty\Le(a^iO\Le(i^2\Ri)\Ri)\Ri| < \infty
            \end{align*}

            \item $\mu^{\Le(\varepsilon\Ri)}\piidle^{\Le(\varepsilon\Ri)}\boldsymbol{R}^{(\varepsilon)}e^T\to 0$.

            \begin{align*}
                \Le|\mu^{\Le(\varepsilon\Ri)}\bpi^{\Le(\varepsilon\Ri)}_0\boldsymbol{R}^{(\varepsilon)}e^T\Ri| &= \Le|\mu^{\Le(\varepsilon\Ri)}\sum_{i=0}^\infty\pi_{i,0}^{\Le(\varepsilon\Ri)}[\boldsymbol{R}^{(\varepsilon)}e^T]_i\Ri|
                \leq \Le|\mu^{\Le(\varepsilon\Ri)}\sum_{i=0}^\infty\pi_{i,0}^{\Le(\varepsilon\Ri)} C_3i\Ri|\\
                &= |\mu^{\Le(\varepsilon\Ri)}| \cdot
                \Le|\sum_{i=0}^{\lfloor\log \frac{1}{\varepsilon}\rfloor}\pi_{i,0}^{\Le(\varepsilon\Ri)} C_3i + \sum_{i=\lfloor\log \frac{1}{\varepsilon}\rfloor+1}^\infty\pi_{i,0}^{\Le(\varepsilon\Ri)} C_3i\Ri|\\
                &\leq |\mu^{\Le(\varepsilon\Ri)}| \cdot \Le|\sum_{i=0}^{\lfloor\log \frac{1}{\varepsilon}\rfloor}\pi_{i,0}^{\Le(\varepsilon\Ri)} C_3\log \frac{1}{\varepsilon} + \sum_{i=\lfloor\log \frac{1}{\varepsilon}\rfloor+1}^\infty\pi_{i,0}^{\Le(\varepsilon\Ri)} C_3i\Ri|\\
                &\leq |\mu^{\Le(\varepsilon\Ri)}| \cdot C_3|\varepsilon\log \frac{1}{\varepsilon}| + |\mu^{\Le(\varepsilon\Ri)}| \cdot C_3 \Le|\sum_{i=\lfloor\log \frac{1}{\varepsilon}\rfloor+1}^\infty\pi_{i,0}^{\Le(\varepsilon\Ri)}i\Ri|\\
                &\leq |\mu^{\Le(\varepsilon\Ri)}| \cdot C_3|\varepsilon\log \frac{1}{\varepsilon}| +  |\mu^{\Le(\varepsilon\Ri)}| \cdot C_3 \Le|\sum_{i=\lfloor\log \frac{1}{\varepsilon}\rfloor+1}^\infty C_1 a^ii\Ri|
            \end{align*}
            Because $\varepsilon\log \frac{1}{\varepsilon}\to 0$ and $\mu^{\Le(\varepsilon\Ri)}\to \mu^{\Le(0\Ri)}\in \Le(0,\infty\Ri)$ as $\varepsilon\to0$, we only need to establish that the second term also goes to $0$. Notice that $\sum_{i=0}^\infty ia^i$ is a convergent sum because $0<q<1$. Thus as $\lfloor\log \frac{1}{\varepsilon}\rfloor+1$ goes to $\infty$ as $\varepsilon\to 0$, the ``tail sum" $\sum_{i=\lfloor\log \frac{1}{\varepsilon}\rfloor+1}^\infty ia^i$ must also go to $0$. Therefore both terms approach $0$ and we have our desired result.
        \end{enumerate}

        \item For all $z\in\Le(0,1\Ri)$ and all $j\in \mathbb{N}_0$ as $\varepsilon\to 0$:
        \begin{enumerate}
            \item $[\Le(1-z^{-\varepsilon}\Ri)\boldsymbol{\pi}^{\Le(\varepsilon\Ri)}\Le(\mu^{\Le(\varepsilon\Ri)}I-z \boldsymbol{F}^{(\varepsilon)}\Ri)\boldsymbol{R}^{(\varepsilon)}]_j\to 0$.
            \begin{align*}
            &\Le|[\boldsymbol{\pi}^{\Le(\varepsilon\Ri)}\Le(\mu^{\Le(\varepsilon\Ri)}I-z \boldsymbol{F}^{(\varepsilon)}\Ri)\boldsymbol{R}^{(\varepsilon)}]_j\Ri| = \Le|\sum_{i=0}^\infty [\bpi^{\Le(\varepsilon\Ri)}\Le(\mu^{\Le(\varepsilon\Ri)}I-z \boldsymbol{F}^{(\varepsilon)}\Ri)]_i r^{\Le(\varepsilon\Ri)}_{i,j}\Ri| \\
                \leq & \Le|\sum_{i=0}^\infty \tilde{C}_1 a^i C_3p^{\Le(j-i\Ri)^+}\Ri|
                \leq \Le|\sum_{i=0}^{j} \tilde{C}_1 a^i C_3p^{\Le(j-i\Ri)^+} + \sum_{i=j+1}^\infty \tilde{C}_1 a^i C_3\Ri|\\
                \leq & \Le|\sum_{i=0}^{j} \tilde{C}_1 a^i C_3p^{\Le(j-i\Ri)^+}\Ri| + \Le|\sum_{i=j+1}^\infty \tilde{C}_1 a^i C_3\Ri|
                \leq \Le|\sum_{i=0}^{j} \tilde{C}_1 a^i C_3p^{\Le(j-i\Ri)^+}\Ri| + \Le|\tilde{C}_1 C_3 \frac{q^{j+1}}{1-q}\Ri| < \infty
            \end{align*}

            Thus because $(1-z^{-\varepsilon})\to 0$ as $\varepsilon\to 0$ and $\Le|[\boldsymbol{\pi}^{\Le(\varepsilon\Ri)}\Le(\mu^{\Le(\varepsilon\Ri)}I-z^\varepsilon \boldsymbol{F}^{(\varepsilon)}\Ri)\boldsymbol{R}^{(\varepsilon)}]_j\Ri|$ is bounded uniformly across $\varepsilon, z$, their product goes to $0$ as $\varepsilon\to 0$.

            \item $[\mu^{\Le(\varepsilon\Ri)}\Le(1-z^{-\varepsilon}\Ri)\Q\Le(z^\varepsilon;\varepsilon\Ri) \Le(\Le(-z^\varepsilon \boldsymbol{F}^{(\varepsilon)}+\mu^{\Le(\varepsilon\Ri)} \boldsymbol{I}\Ri)R\Ri)^2]_j \to 0$
            \begin{align*}
                &\Le|[\Q(z^\varepsilon;\varepsilon) ((-z^\varepsilon \boldsymbol{F}^{(\varepsilon)}+\mu^{(\varepsilon)} \boldsymbol{I})R)^2]_j\Ri|\\
                &=\Le|\sum_{i,k}[\Q(z^\varepsilon;\varepsilon)(-z^\varepsilon \boldsymbol{F}^{(\varepsilon)}+\mu^{(\varepsilon)} \boldsymbol{I})]_i \cdot [R^{(\varepsilon)}(-z^\varepsilon \boldsymbol{F}+\mu^{(\varepsilon)} \boldsymbol{I})]_{i,k}\cdot r^{(\varepsilon)}_{k,j} \Ri|\\
                &=a \Le|\sum_{i,k}\tilde{C}_1 a^i \cdot \tilde{C}_3\tilde{b}^{(k-i)^+}\cdot C_3\tilde{b}^{(j-k)^+} \Ri|
                =a \Le|\sum_{i=0}^\infty\sum_{k=0}^\infty(\tilde{C}_1 a^i \cdot \tilde{C}_3\tilde{b}^{(k-i)^+}\cdot C_3\tilde{b}^{(j-k)^+}) \Ri|\\
                &=a \Le|\sum_{i=0}^\infty\tilde{C}_1 a^i(\sum_{k=0}^{i-1}(\tilde{C}_3\tilde{b}^{(k-i)^+}\cdot C_3\tilde{b}^{(j-k)^+})+\sum_{k=i}^\infty(\tilde{C}_3\tilde{b}^{(k-i)^+}\cdot C_3\tilde{b}^{(j-k)^+})) \Ri|\\
                &=a \tilde{C}_1\tilde{C}_3C_3\Le|\sum_{i=0}^\infty a^i(\sum_{k=0}^{i-1}(\tilde{b}^{(k-i)^+}\cdot \tilde{b}^{(j-k)^+})+\sum_{k=i}^\infty(\tilde{b}^{(k-i)^+}\cdot \tilde{b}^{(j-k)^+})) \Ri|\\
                &=a \tilde{C}_1\tilde{C}_3C_3\Le|\sum_{i=0}^\infty a^i(\sum_{k=0}^{i-1}(1)+\sum_{k=i}^\infty(\tilde{b}^{(k-i)^+})) \Ri|
                =a \tilde{C}_1\tilde{C}_3C_3\Le|\sum_{i=0}^\infty a^i(i+\frac{1}{1-\tilde{b}})) \Ri|\\
                &=\tilde{C}_1\tilde{C}_3C_3\Le|\frac{a}{(1-a)^2}+\frac{1}{(1-\tilde{b})(1-a)}) \Ri| < \infty
            \end{align*}

            Thus because $(1-z^{-\varepsilon})\to 0$ as $\varepsilon\to 0$ and $[\Q\Le(z^\varepsilon;\varepsilon\Ri) \Le(\Le(-z^\varepsilon \boldsymbol{F}^{(\varepsilon)}+\mu^{\Le(\varepsilon\Ri)} \boldsymbol{I}\Ri)R\Ri)^2]_j$ is bounded uniformly across $\varepsilon, z$, their product goes to $0$ as $\varepsilon\to 0$.
                        
            \item $[\mu^{\Le(\varepsilon\Ri)}\Le(1-z^{-\varepsilon}\Ri)\bpi^{\Le(\varepsilon\Ri)}_0\boldsymbol{R}^{(\varepsilon)}\Le(-z^\varepsilon \boldsymbol{F}^{(\varepsilon)} + \mu^{\Le(\varepsilon\Ri)}I\Ri)\boldsymbol{R}^{(\varepsilon)}]_j\to 0$.
            \begin{align*}
                &\Le|[\bpi^{(\varepsilon)}_0R^{(\varepsilon)}(-z^\varepsilon \boldsymbol{F}^{(\varepsilon)} + \mu^{(\varepsilon)} \boldsymbol{I})R^{(\varepsilon)}]_j\Ri|\\
                &=\Le|\sum_{i,k}\pi^{(\varepsilon)}_{i,0} \cdot [R^{(\varepsilon)}(-z^\varepsilon \boldsymbol{F}^{(\varepsilon)}+\mu^{(\varepsilon)} \boldsymbol{I})]_{i,k}\cdot r^{(\varepsilon)}_{k,j} \Ri|\\
                &=a \Le|\sum_{i,k}\tilde{C}_1 a^i \cdot \tilde{C}_3\tilde{b}^{(k-i)^+}\cdot C_3\tilde{b}^{(j-k)^+} \Ri|\\
                &=a \Le|\sum_{i=0}^\infty\sum_{k=0}^\infty(\tilde{C}_1 a^i \cdot \tilde{C}_3\tilde{b}^{(k-i)^+}\cdot C_3\tilde{b}^{(j-k)^+}) \Ri|\\
                &=a \Le|\sum_{i=0}^\infty\tilde{C}_1 a^i(\sum_{k=0}^{i-1}(\tilde{C}_3\tilde{b}^{(k-i)^+}\cdot C_3\tilde{b}^{(j-k)^+})+\sum_{k=i}^\infty(\tilde{C}_3\tilde{b}^{(k-i)^+}\cdot C_3\tilde{b}^{(j-k)^+})) \Ri|\\
                &=a \tilde{C}_1\tilde{C}_3C_3\Le|\sum_{i=0}^\infty a^i(\sum_{k=0}^{i-1}(\tilde{b}^{(k-i)^+}\cdot \tilde{b}^{(j-k)^+})+\sum_{k=i}^\infty(\tilde{b}^{(k-i)^+}\cdot \tilde{b}^{(j-k)^+})) \Ri|\\
                &=a \tilde{C}_1\tilde{C}_3C_3\Le|\sum_{i=0}^\infty a^i(\sum_{k=0}^{i-1}(1)+\sum_{k=i}^\infty(\tilde{b}^{(k-i)^+})) \Ri|\\
                &=a \tilde{C}_1\tilde{C}_3C_3|\sum_{i=0}^\infty a^i(i+\frac{1}{1-\tilde{b}})) |\\
                &=\tilde{C}_1\tilde{C}_3C_3|\frac{a}{(1-a)^2}+\frac{1}{(1-\tilde{b})(1-a)}) | < \infty
            \end{align*}
            Thus because $\Le(1-z^{-\varepsilon}\Ri)\to 0$ as $\varepsilon\to 0$ and $|[\bpi^{(\varepsilon)}_0R^{(\varepsilon)}(-z^\varepsilon \boldsymbol{F}^{(\varepsilon)} + \mu^{(\varepsilon)} \boldsymbol{I})R^{(\varepsilon)}]_j|$ is uniformly bounded across all $z,\varepsilon$, their product also goes to $0$ as $\varepsilon\to 0$.

            \item $[\mu^{\Le(\varepsilon\Ri)}\bpi^{\Le(\varepsilon\Ri)}_0\boldsymbol{R}^{(\varepsilon)}]_j\to 0$.

            \begin{align*}
                \Le|[\mu^{\Le(\varepsilon\Ri)}\bpi^{\Le(\varepsilon\Ri)}_0\boldsymbol{R}^{(\varepsilon)}]_j\Ri| &\leq \Le|\mu^{\Le(\varepsilon\Ri)}| \ |\sum_{i=0}^\infty\pi^{\Le(\varepsilon\Ri)}_{i,0} r^{\Le(\varepsilon\Ri)}_{i,j}\Ri|
                \leq |\mu^{\Le(\varepsilon\Ri)}| \ \Le|\sum_{i=0}^\infty\pi^{\Le(\varepsilon\Ri)}_{i,0} r^{\Le(\varepsilon\Ri)}_{j,j}\Ri|\\
                &=|\mu^{\Le(\varepsilon\Ri)}| \ \Le|r_{j,j}^{\Le(\varepsilon\Ri)}| \ |\sum_{i=0}^\infty \pi^{\Le(\varepsilon\Ri)}_{i,0}\Ri|
                =\varepsilon |\mu^{\Le(\varepsilon\Ri)}| \ |r_{j,j}^{\Le(\varepsilon\Ri)}|
            \end{align*}
            Because $|\mu^{\Le(\varepsilon\Ri)}|$ and $ |r_{j,j}^{\Le(\varepsilon\Ri)}|$ approach constants $|\mu^{\Le(0\Ri)}|$ and $ |r_{j,j}^{\Le(0\Ri)}|$ as $\varepsilon\to 0$, $\varepsilon |\mu^{\Le(\varepsilon\Ri)}| \ |r_{j,j}^{\Le(\varepsilon\Ri)}| \to 0$.
        \end{enumerate}
    \end{enumerate}
\end{proof}
\section{Simulation Details}
\label{app:sim}

To validate the heavy-traffic approximation for the mean queue length, we simulate the hybrid-abandonment system for $10^{8}$ jobs at each parameter set $(\rho, \theta, \mu_1)$ and record the second queue length at each arrival to the first queue. By PASTA \cite{wolff1982poisson}, these are samples of the time-average second queue length and provide an easy way to estimate the mean queue length and the standard deviation of our estimator.

To validate that the second queue length is approximately exponentially distributed and that the queues are not meaningfully correlated, we again simulated the system for $10^8$ jobs at each parameter set but instead broke the $10^8$ job-long runs into $20$ ``batches" of $5\cdot 10^6$ job-long runs. Each of these batches yielded one estimate of our desired quantity, either the correlation of the queue lengths or the KS statistic between the second queue length and the exponential or geometric random variable.

Finally, to test the efficacy of our approximation in capacity planning, we simulate the system for $40$ batches of $10^5$ jobs each at each of 400 different $\mu_2$ values for each parameter set $(\lambda, \theta, \mu_1)$. Because the relative cost parameter $c$ does not affect the queue length at any system configuration, we did not need to rerun the system at each $c$ value. Instead, the existing queue length data could be reused for each differing $c$ value.

Unaggregated data and full simulation code can be found at \url{https://github.com/CJCurtis/HybridAbandonmentSimulator}.

\end{appendices}


\bibliography{refs}

@article{zayas2023scheduling,
  title={Scheduling servers in a two-stage queue with abandonments and costs},
  author={Zayas-Cab{\'a}n, Gabriel and Cochran, Amy L},
  journal={Probability in the Engineering and Informational Sciences},
  volume={37},
  number={3},
  pages={833--851},
  year={2023}
}

@article{zayas2016dynamic,
  title={Dynamic control of a tandem system with abandonments},
  author={Zayas-Cab{\'a}n, Gabriel and Xie, Jingui and Green, Linda V and Lewis, Mark E},
  journal={Queueing Systems},
  volume={84},
  pages={279--293},
  year={2016},
  publisher={Springer}
}

@article{zayas2019policies,
  title={Policies for physician allocation to triage and treatment in emergency departments},
  author={Zayas-Caban, Gabriel and Xie, Jingui and Green, Linda V and Lewis, Mark E},
  journal={IISE Transactions on Healthcare Systems Engineering},
  volume={9},
  number={4},
  pages={342--356},
  year={2019},
  publisher={Taylor \& Francis}
}

@inproceedings{reed2004diffusion,
  title={A diffusion approximation for a generalized Jackson network with reneging},
  author={Reed, Josh and Ward, Amy R},
  booktitle={Proceedings of the 42nd annual Allerton conference on communication, control, and computing},
  year={2004}
}

@article{burman1986asymptotic,
  title={An asymptotic analysis of a queueing system with {Markov}-modulated arrivals},
  author={Burman, David Y and Smith, Donald R},
  journal={Operations Research},
  volume={34},
  number={1},
  pages={105--119},
  year={1986},
  publisher={INFORMS}
}

@article{dimitrov2011single,
  title={Single-server queueing system with {Markov}-modulated arrivals and service times},
  author={Dimitrov, Mitko},
  journal={Pliska Studia Mathematica Bulgarica},
  volume={20},
  number={1},
  pages={53p--62p},
  year={2011}
}

@article{falin1999heavy,
  title={Heavy traffic analysis of {M/G/1} type queueing systems with {Markov}-modulated arrivals},
  author={Falin, Gennadi and Falin, Anatoli},
  journal={Top},
  volume={7},
  number={2},
  pages={279--291},
  year={1999},
  publisher={Springer}
}

@article{wang2019tandem,
  title={Tandem queues with impatient customers},
  author={Wang, Jianfu and Abouee-Mehrizi, Hossein and Baron, Opher and Berman, Oded},
  journal={Performance Evaluation},
  volume={135},
  pages={102011},
  year={2019},
  publisher={Elsevier}
}

@article{jackson1957networks,
  title={Networks of waiting lines},
  author={Jackson, James R},
  journal={Operations research},
  volume={5},
  number={4},
  pages={518--521},
  year={1957},
  publisher={INFORMS}
}

@article{andradottir2023optimal,
  title={Optimal server assignment in queues with flexible servers and abandonments},
  author={Andrad{\'o}ttir, Sigr{\'u}n and Ayhan, Hayriye},
  journal={Stochastic Systems},
  volume={13},
  number={3},
  pages={360--376},
  year={2023},
  publisher={INFORMS}
}

@article{yechiali1971queuing,
  title={Queuing problems with heterogeneous arrivals and service},
  author={Yechiali, Ury and Naor, Pinhas},
  journal={Operations Research},
  volume={19},
  number={3},
  pages={722--734},
  year={1971},
  publisher={Informs}
}

@article{mou2024heavy,
  title={Heavy-traffic queue length behavior in a switch under {Markovian} arrivals},
  author={Mou, Shancong and Maguluri, Siva Theja},
  journal={Advances in Applied Probability},
  volume={56},
  number={3},
  pages={1106--1152},
  year={2024},
  publisher={Cambridge University Press}
}

@article{neuts1978m,
  title={The {M/M/1} queue with randomly varying arrival and service rates},
  author={Neuts, Marcel F},
  journal={Opsearch},
  volume={15},
  pages={139--157},
  year={1978}
}

@article{regterschot1986queue,
  title={The queue {M$|$G$|$1} with {Markov} modulated arrivals and services},
  author={Regterschot, GJK and De Smit, JHA},
  journal={Mathematics of operations research},
  volume={11},
  number={3},
  pages={465--483},
  year={1986},
  publisher={INFORMS}
}

@article{ciucu2018two,
  title={Two extensions of Kingman's GI/G/1 bound},
  author={Ciucu, Florin and Poloczek, Felix},
  journal={Proceedings of the ACM on Measurement and Analysis of Computing Systems},
  volume={2},
  number={3},
  pages={1--33},
  year={2018},
  publisher={ACM New York, NY, USA}
}

@article{grosof2023reset,
  title={The RESET and MARC techniques, with application to multiserver-job analysis},
  author={Grosof, Isaac and Hong, Yige and Harchol-Balter, Mor and Scheller-Wolf, Alan},
  journal={Performance Evaluation},
  volume={162},
  pages={102378},
  year={2023},
  publisher={Elsevier}
}

@book{latouche1999introduction,
  title={Introduction to matrix analytic methods in stochastic modeling},
  author={Latouche, Guy and Ramaswami, Vaidyanathan},
  year={1999},
  publisher={SIAM},
    address={Philadelphia, PA}
}

@article{ramaswami1980n,
  title={The {N/G/1} queue and its detailed analysis},
  author={Ramaswami, Vadyanathan},
  journal={Advances in Applied Probability},
  volume={12},
  number={1},
  pages={222--261},
  year={1980},
  publisher={Cambridge University Press}
}

@book{neuts1994matrix,
  title={Matrix-geometric solutions in stochastic models: an algorithmic approach},
  author={Neuts, Marcel F},
  year={1994},
  publisher={Johns Hopkins University Press},
  address={Baltimore, MD}
}

@book{cohen1998,
title= {Statistical Power Analysis for the Behavioral Sciences},
author = {Cohen, Jacob},
year={1988},
publisher={L. Erlbaum Associates},
address={Hillsdale, NJ}

}

@book{holt1960planning,
  author    = {Holt, Charles C. and Modigliani, Franco and Muth, John F. and Simon, Herbert A.},
  title     = {Planning Production, Inventories, and Workforce},
  publisher = {Prentice-Hall},
  address   = {Englewood Cliffs, NJ},
  year      = {1960}
}

@book{kleinrock1975queueing,
  author    = {Kleinrock, Leonard},
  title     = {Queueing Systems. Volume I: Theory},
  publisher = {Wiley},
  address   = {New York},
  year      = {1975}
}

@article{Hurtado-LangeDaniela2026MMJi,
author = {Hurtado-Lange, Daniela and Grosof, Izzy},
copyright = {http://creativecommons.org/licenses/by/4.0},
language = {eng},
title = {{Markov} Modulated {JSQ} in Heavy Traffic Via the Poisson Equation},
year = {2026},
}

@article{choudhury1994heavy,
  title={Heavy-traffic asymptotic expansions for the asymptotic decay rates in the {BMAP/G/1} queue},
  author={Choudhury, Gagan L and Whitt, Ward},
  journal={Stochastic Models},
  volume={10},
  number={2},
  pages={453--498},
  year={1994},
  publisher={Taylor \& Francis}
}

@article{mitrani2005approximate,
  title={Approximate solutions for heavily loaded {Markov}-modulated queues},
  author={Mitrani, Isi},
  journal={Performance Evaluation},
  volume={62},
  number={1-4},
  pages={117--131},
  year={2005},
  publisher={Elsevier}
}

@article{thorsdottir2016markov,
  title={{Markov}-modulated {M/G/1}-type queue in heavy traffic and its application to time-sharing disciplines},
  author={Thorsdottir, Halldora and Verloop, Ina Maria},
  journal={Queueing Systems},
  volume={83},
  number={1},
  pages={29--55},
  year={2016},
  publisher={Springer}
}

@article{ramaswami1996some,
  title={Some properties of the rate operators in level dependent quasi-birth-and-death processes with countable number of phases},
  author={Ramaswami, V and Taylor, Peter G},
  journal={Stochastic Models},
  volume={12},
  number={1},
  pages={143--164},
  year={1996},
  publisher={Taylor \& Francis}
}

@article{latouche2011queues,
  title={Queues with boundary assistance: the effects of truncation},
  author={Latouche, Guy and Nguyen, Giang T and Taylor, Peter G},
  journal={Queueing Systems},
  volume={69},
  number={2},
  pages={175--197},
  year={2011},
  publisher={Springer}
}

@article{kroese2004spectral,
  title={Spectral properties of the tandem {Jackson} network, seen as a quasi-birth-and-death process},
  author={Kroese, DP and Scheinhardt, WRW and Taylor, PG},
  year={2004}
}

@book{he2014fundamentals,
  title={Fundamentals of matrix-analytic methods},
  author={He, Qi-Ming},
  volume={365},
  year={2014},
  publisher={Springer},
  address={New York, NY}
}

@article{lian2011departure,
  title={Departure processes and busy periods of a tandem network},
  author={Lian, Zhaotong and Zhao, Ning},
  journal={Operational Research},
  volume={11},
  number={3},
  pages={245--257},
  year={2011},
  publisher={Springer}
}

@article{wolff1982poisson,
author = {Wolff, Ronald W.},
title = {Poisson Arrivals See Time Averages},
journal = {Operations Research},
volume = {30},
number = {2},
pages = {223-231},
year = {1982},
doi = {10.1287/opre.30.2.223},
}

\end{document}